\newcommand{\Real}{\ensuremath{\mathbb{R}}}
\newcommand{\Plane}{\ensuremath{\mathbb{R}^2}}
\newcommand{\bd}{\ensuremath{\partial}}
\newcommand{\intr}{\ensuremath{\mathrm{int}}}
\newcommand{\seg}{\overline}
\newcommand{\width}{\mathrm{width}}
\newcommand{\area}{\mathrm{area}}
\newcommand{\peri}{\mathrm{peri}}
\newsavebox\Tbox
\savebox\Tbox{%\hbox{%
  \unitlength=0.1ex%
  \begin{picture}(8,10)
    \linethickness{0.1mm}
    \multiput(0,0)(0,2){4}{\line(0,1){1}}
    \multiput(0,0)(2,0){4}{\line(1,0){1}}
    \multiput(7,7)(0,-2){4}{\line(0,-1){1}}
    \linethickness{0.2mm}
    \put(7,7){\line(-1,0){7}}
    \put(3.5,7){\circle*{3.5}}
  \end{picture}}
\newsavebox\Bbox
\savebox\Bbox{%\hbox{%
  \unitlength=0.1ex%
  \begin{picture}(8,10)
    \linethickness{0.1mm}
    \multiput(7,7)(0,-2){4}{\line(0,-1){1}}
    \multiput(0,0)(0,2){4}{\line(0,1){1}}
    \multiput(7,7)(-2,0){4}{\line(-1,0){1}}
    \linethickness{0.2mm}
    \put(0,0){\line(1,0){7}}
    \put(3.5,0){\circle*{3.5}}
  \end{picture}}
\newsavebox\Lbox
\savebox\Lbox{%\hbox{%
  \unitlength=0.1ex%
  \begin{picture}(8,10)
    \linethickness{0.1mm}
    \multiput(0,0)(2,0){4}{\line(1,0){1}}
    \multiput(7,7)(0,-2){4}{\line(0,-1){1}}
    \multiput(7,7)(-2,0){4}{\line(-1,0){1}}
    \linethickness{0.2mm}
    \put(0,0){\line(0,1){7}}
    \put(0,3.5){\circle*{3.5}}
  \end{picture}}
\newsavebox\Rbox
\savebox\Rbox{%\hbox{%
  \unitlength=0.1ex%
  \begin{picture}(8,10)
    \linethickness{0.1mm}
    \multiput(0,0)(0,2){4}{\line(0,1){1}}
    \multiput(0,0)(2,0){4}{\line(1,0){1}}
    \multiput(7,7)(-2,0){4}{\line(-1,0){1}}
    \linethickness{0.2mm}
    \put(7,7){\line(0,-1){7}}
    \put(7,3.5){\circle*{3.5}}
  \end{picture}}
\newsavebox\Abox
\savebox\Abox{%\hbox{%
  \unitlength=0.1ex%
  \begin{picture}(8,10)
    \linethickness{0.1mm}
    \multiput(0,0)(0,2){4}{\line(0,1){1}}
    \multiput(0,0)(2,0){4}{\line(1,0){1}}
    \multiput(7,7)(0,-2){4}{\line(0,-1){1}}
    \multiput(7,7)(-2,0){4}{\line(-1,0){1}}
  \end{picture}}
\DeclareRobustCommand\mt{{\scalerel*{{\mathord{\usebox{\Tbox}}}}{b}}}
\DeclareRobustCommand\mb{{\scalerel*{{\mathord{\usebox{\Bbox}}}}{b}}}
\DeclareRobustCommand\ml{{\scalerel*{{\mathord{\usebox{\Lbox}}}}{b}}}
\DeclareRobustCommand\mr{{\scalerel*{{\mathord{\usebox{\Rbox}}}}{b}}}
\DeclareRobustCommand\ma{{\scalerel*{{\mathord{\usebox{\Abox}}}}{b}}}
\newcommand{\clend}{\vdash}
\newcommand{\crend}{\dashv}
\newcommand{\cmid}{+}
\newcommand{\extr}{q}
\newtheoremstyle{mytheorem}{3pt}{3pt}{\slshape}{}{\bfseries}{}{.5em}{}
\theoremstyle{mytheorem}
\newtheorem{lemma}{Lemma}
\newtheorem{theorem}{Theorem}
\newtheorem{corollary}{Corollary}
\newtheorem{observation}{Observation}
\theoremstyle{definition}
\newbox\ProofSym
\renewenvironment{proof}[1][Proof.]{\O@proof{#1}}{\O@endproof}
\def\O@proof#1{\trivlist
   \@topsep\z@\@topsepadd\smallskipamount%
   \@ifstar{\item[]}{\item[\hskip\labelsep\it #1 ]}}
\def\O@endproof{\hfill\copy\ProofSym\linebreak[3mm]\endtrivlist}
\def\denseitems{
    \itemsep1pt plus1pt minus1pt
    \parsep0pt plus0pt
    \parskip0pt\topsep0pt}
\newcommand*\patchAmsMathEnvironmentForLineno[1]{%
  \expandafter\let\csname old#1\expandafter\endcsname\csname #1\endcsname
  \expandafter\let\csname oldend#1\expandafter\endcsname\csname end#1\endcsname
  \renewenvironment{#1}%
     {\linenomath\csname old#1\endcsname}%
     {\csname oldend#1\endcsname\endlinenomath}}%
\newcommand*\patchBothAmsMathEnvironmentsForLineno[1]{%
  \patchAmsMathEnvironmentForLineno{#1}%
  \patchAmsMathEnvironmentForLineno{#1*}}%
\begin{document}

%\baselineskip=14.0pt
%\linenumbers

\title{On the Minimum-Area Rectangular and Square Annulus Problem%
\thanks{%
This work was supported by Kyonggi University Research Grant 2018.
}
}

\author{%
Sang Won Bae\footnote{%
Division of Computer Science and Engineering, Kyonggi University, Suwon, Korea.
Email: \texttt{swbae@kgu.ac.kr} }
}

\date{%
\today\quad\currenttime
}

%
%
%\let\oldalign\align
%\let\oldendalign\endalign
%\renewenvironment{align}
%  {\linenomathNonumbers\oldalign}
%  {\oldendalign\endlinenomath}

%%%%%%%%%%%%%%
\maketitle

%\vspace{-0.3in}
\begin{abstract}
In this paper, we address the minimum-area rectangular and square annulus problem,
which asks a rectangular or square annulus of minimum area,
either in a fixed orientation or over all orientations,
that encloses
a set $P$ of $n$ input points in the plane.
To our best knowledge, no nontrivial results on the problem
have been discussed in the literature,
while its minimum-width variants have been intensively studied.
For a fixed orientation, we show reductions to well-studied problems:
the minimum-width square annulus problem and
the largest empty rectangle problem,
yielding algorithms of time complexity $O(n\log^2 n)$ and $O(n\log n)$
for the rectangular and square cases, respectively.
In arbitrary orientation,
we present $O(n^3)$-time algorithms for
the rectangular and square annulus problem by enumerating all
maximal empty rectangles over all orientations.
The same approach is shown to apply also to the minimum-width square annulus problem
and the largest empty square problem over all orientations,
resulting in $O(n^3)$-time algorithms for both problems.
Consequently, we improve
the previously best algorithm for the minimum-width square annulus problem
by a factor of logarithm,
and present the first algorithm for the largest empty square problem
in arbitrary orientation.
We also consider bicriteria optimization variants,
computing a minimum-width minimum-area or minimum-area minimum-width annulus.
\\
\noindent
\textbf{Keywords}: \textit{rectangular annulus, square annulus,
minimum area, minimum width,
arbitrary orientation}
\end{abstract}

%%%%%%%%%%%%%%%%%%%%%%%%%%%%%%%%%%%%%%
\section{Introduction} \label{sec:intro}
%%%%%%%%%%%%%%%%%%%%%%%%%%%%%%%%%%%%%%%

An annulus informally depicts a ring-shaped region in the plane,
often described by two concentric circles.
One can consider a generalization to any convex shape,
such as squares and rectangles.
Recently, the \emph{minimum-width annulus problem} has been studied intensively
by researchers,
in which a set $P$ of $n$ points in the plane is given
and one wants to find an annulus of a certain shape
with minimum width that encloses the set $P$ of points.
This problem can be seen as a typical geometric covering problem
that seeks a minimum-size geometric shape that covers a given set $P$ of points.
The annulus problem has applications in shape recognition, facility location, and curve fitting.

Among other shapes, the circular annulus problem has been most intensively studied
with an application to the roundness problem~\cite{w-nmppp-86,efnn-ravd-89,rz-epccmrsare-92}.
The first sub-quadratic $O(n^{\frac{8}{5}+\epsilon})$-time algorithm
was presented by Agarwal et al.~\cite{ast-apsgo-94}
The currently best exact algorithm takes $O(n^{\frac{3}{2}+\epsilon})$ time by Agarwal and Sharir~\cite{as-erasgop-96}.
Linear-time approximation schemes are also known
by Agarwal et al.~\cite{ahv-aemp-04} and by Chan~\cite{c-adwsecmwa-02}.
Abellanas et al.~\cite{ahimpr-bfr-03} considered minimum-width rectangular annuli
that are axis-parallel, and presented two algorithms taking $O(n)$ or $O(n \log n)$ time:
one minimizes the width over rectangular annuli with arbitrary aspect ratio
and the other does over  rectangular annuli with a prescribed aspect ratio, respectively.
Gluchshenko et al.~\cite{ght-oafepramw-09} presented an $O(n \log n)$-time algorithm
that computes a minimum-width axis-parallel square annulus,
and proved a matching lower bound,
while the second algorithm by Abellanas et al.\@ can do the same in the same time bound.
If one considers rectangular or square annuli in arbitrary orientation,
the problem becomes more difficult.
Mukherjee et al.~\cite{mmkd-mwra-13} presented an $O(n^2 \log n)$-time algorithm
that computes a minimum-width rectangular annulus in arbitrary orientation
and arbitrary aspect ratio.
The author~\cite{b-cmwsaao-18} recently showed that
a minimum-width square annulus in arbitrary orientation can be computed
in $O(n^3 \log n)$ time.
Other variants such as the minimum-width annulus problem with outliers~\cite{b-cmwsrao-19, aabckosy-mwao:csrc-19} and the maximum-width empty annulus problem~\cite{bbm-mwesra-19, dhmrs-leap-03}
have also been studied.

Like other geometric covering problems, one may consider several different objective functions to optimize for annuli.
Among those objectives, in this paper,
we address the problem of finding an annulus of \emph{minimum area}
enclosing the input points $P$, namely,
the \emph{minimum-area annulus problem}.
In the literature, it is surprisingly hard to find results on annuli of minimum area, but a few remarks about the minimum-area circular annulus problem.
As earlier work pointed out and Chan~\cite{c-adwsecmwa-02} discussed,
the minimum-area circular annulus problem can be formulated into
a linear programming, so can be solved in linear $O(n)$ time.
It should be mentioned that the currently best known algorithm for
the \emph{minimum-width} circular annulus problem
takes $O(n^{\frac{3}{2}+\epsilon})$ time~\cite{as-erasgop-96}.
Little is known, however, about rectangular and square annuli of minimum area.
The purpose of this paper is thus to gain the understanding of
the minimum-area rectangular and square annulus problem
and their relations with other well-known geometric problems,
and finally to achieve efficient algorithms.

\begin{figure}[tb]
\begin{center}
\includegraphics[width=.7\textwidth]{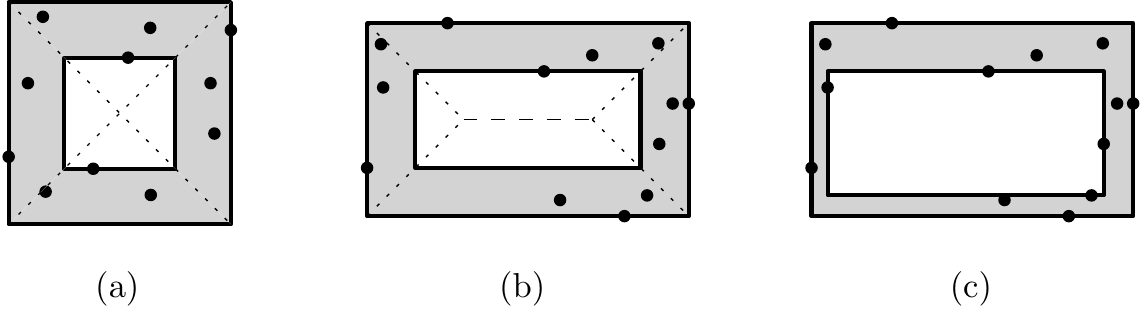}
\end{center}
%\vspace{-5mm}
\caption{(a) A square annulus, (b) a uniform rectangular annulus, and
 (c) a rectangular annulus that is not uniform, enclosing a set of points,
 all of which are axis-parallel.
 Each of the three here is of minimum possible width among all
 square, uniform rectangular, and rectangular annuli, respectively,
 enclosing the points.
 }
% \vspace{-5mm}
\label{fig:annulus}
\end{figure}

A circular or square annulus is defined by two concentric circles or squares,
respectively.
However, a rectangular annulus can be defined in several ways,
and, in the literature, two different definitions are often introduced.
In either case, a rectangular annulus is the closed region between
two side-parallel rectangles $R$ and $R'$ such that $R' \subseteq R$.
One may restrict the inner rectangle $R'$ to be an offset of the outer rectangle $R$
so that the distance from any point in the boundary of $R'$ to the boundary of $R$
is all uniform~\cite{ahimpr-bfr-03, b-cmwsrao-19},
while he/she may allow any rectangle contained in $R$
to be the inner rectangle $R'$~\cite{mmkd-mwra-13}.
In this paper, we call the latter a rectangular annulus in general
and the former a \emph{uniform} rectangular annulus.
See \figurename~\ref{fig:annulus} for an illustration.
It is not difficult to see that these two definitions of rectangular annuli
are equivalent for the \emph{minimum-width} problem and other variants,
while this is not the case for the \emph{minimum-area} problem.

In this paper, we present first efficient algorithms for
the problem of computing a minimum-area square annulus
or rectangular annulus, either in a fixed or over all orientations.
For the purpose, we show some relations between the minimum-area problem
and other well-known geometric optimization problems,
such as the \emph{minimum-width annulus} problem and
the \emph{largest empty rectangle} problem.
As a result, the minimum-area problem for square and rectangular annuli
is as hard as the minimum-width variant.
It is interesting to note that it is not the case for circular annuli.
We also consider the bicriteria optimization problems:
computing a minimum-area minimum-width annulus or a minimum-width minimum-area annulus.

\begin{table}[tb]
\caption{Summary on the currently best algorithms for the square or rectangular annulus problem.
Each column of the table marked by `W', `A', `AW', and `WA'
stands for `minimum-width', `minimum-area', `minimum-area minimum-width', and
`minimum-width minimum-area', respectively.
In each row, `f', `a', `S', `uR', and `R' are abbreviations for
`square annulus', `uniform rectangular annulus', `rectangular annulus',
`in a fixed orientation', and `in arbitrary orientation', respectively.
For each case, the table indicates the wort-case time complexity of
the currently best algorithm, omitting the big-Oh $O(\cdot)$ symbol,
and its reference.
Details on the parameters $r = O(n^2)$ and $t=O(n^2)$ can be found in
Theorems~\ref{thm:WA-R-f} and~\ref{thm:AW-R-a}, respectively.
}
\label{tbl:results}
\small\begin{center}
\begin{tabular}{ll||ll|ll|ll|ll}
                                         &           & \multicolumn{2}{l|}{W} & \multicolumn{2}{l|}{A} & \multicolumn{2}{l|}{AW} & \multicolumn{2}{l}{WA} \\ \hline \hline
\multicolumn{1}{l|}{\multirow{3}{*}{f}}  & S  & $n \log n$     & \cite{ahimpr-bfr-03,ght-oafepramw-09}   & $n \log n$     & Thm.\ref{thm:A-S-f}  & $n \log n$ & Coro.\ref{coro:AW/WA-S-f}  & $n \log n$  & Coro.\ref{coro:AW/WA-S-f}  \\
\multicolumn{1}{l|}{}                    & uR  & $n$ & \cite{ahimpr-bfr-03}  & $n$  & Thm.\ref{thm:A-uR-f}  & $n$  & Coro.\ref{coro:AW/WA-uR-f} & $n$              & Coro.\ref{coro:AW/WA-uR-f}   \\
\multicolumn{1}{l|}{}                    & R  & $n$     & \cite{mmkd-mwra-13}   & $n\log^2 n$    & Thm.\ref{thm:A-R-f}   & $n \log n$   & Thm.\ref{thm:AW-R-f} & $n\log n + r$  & Thm.\ref{thm:WA-R-f}  \\ \hline
\multicolumn{1}{l|}{\multirow{3}{*}{a}}  & S & $n^3$   & Thm.\ref{thm:W-S-a}   & $n^3$  & Thm.\ref{thm:A-S-a} & $n^3$  & Coro.\ref{coro:WA/AW-S-a} & $n^3$            & Coro.\ref{coro:WA/AW-S-a}  \\
\multicolumn{1}{l|}{}                    & uR  &$n^2 \log n$   & \cite{mmkd-mwra-13}   & $n^2 \log n$   & Thm.\ref{thm:A-uR-a} & $n^2 \log n$      & Thm.\ref{thm:AW-uR-a} & $n^2 \log n$     &  Coro.\ref{coro:WA-uR-a} \\
\multicolumn{1}{l|}{}                    & R  &$n^2 \log n$   & \cite{mmkd-mwra-13}   & $n^3$          & Thm.\ref{thm:A-R-a} & $n^2 \log n + tn$ & Thm.\ref{thm:AW-R-a}  & $n^3$            & Coro.\ref{coro:WA-R-a}
\end{tabular}
\end{center}
\end{table}

Our results are summarized as follows.
\begin{itemize} \denseitems
 \item In a fixed orientation,
 a minimum-area square or uniform rectangular annulus enclosing
 a given set $P$ of $n$ points is shown to be
 one of also minimum width.
 On the other hand, we show that existing algorithms~\cite{ahimpr-bfr-03,ght-oafepramw-09}
 for the minimum-width variant
 indeed compute an annulus of minimum area.
 \item
 We show that, in a fixed orientation,
 the minimum-area rectangular annulus problem and
 the \emph{largest empty rectangle} problem are reduced to each other
 in linear time.
 The largest empty rectangle problem is known to have a lower bound of
 $\Omega(n\log n)$~\cite{mrs-flrop-85}.
 Hence, any algorithm computing a largest empty rectangle can be used to compute
 a minimum-area rectangular annulus in the same time bound.
% \item
% A uniform rectangular annulus and a square annulus of minimum area
% for a fixed orientation are shown to be those of minimum width, respectively,
% and vice versa.
% This implies that a minimum-area uniform rectangular annulus and
%square annulus for a fixed orientation can be computed by
% known algorithms for a minimum-width rectangular annulus and square annulus,
% which take time in $O(n)$ and in $O(n\log n)$ time,
% respectively~\cite{ahimpr-bfr-03,ght-oafepramw-09}.
 \item A minimum-area uniform rectangular annulus enclosing $P$
 in arbitrary orientation can be computed in $O(n^2 \log n)$ time
 by modifying the algorithm of Mukherjee et al.~\cite{mmkd-mwra-13}.
 \item A minimum-area rectangular annulus
 enclosing $P$ in arbitrary can be computed in $O(n^3)$ time
 by enumerating all \emph{maximal empty rectangles} among $P$ in all orientations
 using the algorithm by Chaudhuri et al.~\cite{cnd-lerps-03}.
 \item The above approach for the minimum-area rectangular annulus problem
 is shown to apply to other variants, resulting in the same time bound $O(n^3)$:
 computing a minimum-area square annulus,
 computing a minimum-width square annulus, and
 computing a largest empty square among $P$ in arbitrary orientation.
 Remarkably, for the minimum-width square annulus problem, we improve
 the previously best algorithm~\cite{b-cmwsaao-18} of running time $O(n^3 \log n)$
 by a factor of logarithm; for the largest empty square problem,
 no discussion can be found in the literature
 and therefore we present the first efficient algorithm.
 \item We present first algorithms for
  some bicriteria variants of the problem:
  computing a minimum-area minimum-width annulus or
  a minimum-width minimum-area annulus.
\end{itemize}
Table~\ref{tbl:results} summarizes
the running times of the currently best algorithms for the
\{minimum-width, minimum-area, minimum-area minimum-width,
minimum-width minimum-area\}
\{square, uniform rectangular, rectangular\} annulus problem in
\{fixed, arbitrary\} orientation.

%
%
%We start with considering the problem for the axis-parallel case, or equivalently
%the fixed-orientation case.
%For the square annulus, we show that a minimum-area square annulus
%is also a minimum-width annulus, and vice versa for a fixed orientation.
%For one definition, it is reduced to
%the \emph{largest empty rectangle} problem
%which has been extensively studied earlier~\cite{}.
%This observation leads to an algorithmic approach to the minimum-area annulus
%problem in arbitrary orientation,
%in which we handle the combinatorial description of all \emph{maximal empty rectangles} in all orientations.
%It is known by Das et al.~\cite{} there are at most $O(n^3)$
%combinatorially different maximum empty rectangles.
%Our algorithms for arbitrary orientation simply iterate each such maximum empty rectangle and minimize a certain function of a constant complexity over
%a domain of a constant complexity.
%This approach is shown to successfully solve the minimum-area rectangular and square annulus problem in arbitrary orientation,
%and even the minimum-width problems.

The rest of the paper is organized as follows:
after introducing necessary preliminaries in Section~\ref{sec:pre},
we consider the axis-parallel case of the problem in Section~\ref{sec:fixed_ori}.
We then address the problem of rectangular annuli
in arbitrary orientation in Section~\ref{sec:rect}, and
square annuli in Section~\ref{sec:square}.
We finally conclude the paper by Section~\ref{sec:conclusion}
with some remarks and open questions.

%The omitted proofs and additional figures will be provided in a full version.

%%%%%%%%%%%%%%%%%%%%%%%%%%%%%%%%%%%%%%%%%%%%%%%%%%%%%
\section{Preliminaries} \label{sec:pre}
%%%%%%%%%%%%%%%%%%%%%%%%%%%%%%%%%%%%%%%%%%%%%%%%%%%%

In this paper, we consider the plane $\Plane$ with
a standard coordinate system,
having the horizontal $x$-axis and the vertical $y$-axis.
For any subset $A \subseteq \Plane$, its boundary and interior,
denoted by $\bd A$ and $\intr A$, are defined
from the standard topology on $\Plane$.
For any two points $p, q\in \Plane$, let $\seg{pq}$ denote the line segment joining $p$ and $q$,
and $|\seg{pq}|$  denote the Euclidean length of $\seg{pq}$.

\paragraph{Orientations.}
The \emph{orientation} of a line or line segment $\ell$ in the plane is
a real number $\theta$ in range $[0, \pi)$
such that the rotated copy of the $x$-axis by $\theta$ counter-clockwise is parallel to $\ell$.
If the orientation of a line or line segment is $\theta$,
then we say that the line or line segment is \emph{$\theta$-aligned}.

A \emph{$\theta$-aligned rectangle (or square)} for $\theta \in [0, \pi/2)$ is
a rectangle (square, resp.)
whose sides are either $\theta$-aligned or $(\theta+\pi/2)$-aligned.
Note that any rectangle or square in the plane is $\theta$-aligned
for a unique $\theta \in [0, \pi/2)$.
Two rectangles or squares are said to be parallel if both are
$\theta$-aligned for some $\theta \in [0, \pi/2)$.
For a square, its \emph{center} is the intersection point of its two diagonals
and its \emph{radius} is half its side length.
Two squares are called \emph{concentric} if they are parallel and share
a common center.
We denote by $\area(R)$ the area of a rectangle or square $R$.

In each orientation $\theta \in [0, \pi/2)$, we regard any $\theta$-aligned line
to be \emph{horizontal} and directed in the $x$-coordinate increasing direction,
while any $(\theta+\pi/2)$-aligned line to be \emph{vertical} and directed
in the $y$-coordinate increasing direction.
For any $p,q\in \Plane$,
we say that $p$ is to the \emph{left} of $q$, or $q$ is to the \emph{right} of $p$, in orientation $\theta$
if the orthogonal projection of $p$ onto a $\theta$-aligned (directed) line
is prior to that of $q$.
Analogously, $p$ is \emph{below} $q$, or equivalently, $q$ is \emph{above} $p$
in $\theta$
if the orthogonal projection of $p$ onto a $(\theta+\pi/2)$-aligned (directed) line
is prior to that of $q$.
%For example, in \figurename~\ref{fig:width_annulus}(a),
%$p$ is to the left of and below $q$ in $\theta$.
This also enables us to identify the \emph{top}, \emph{bottom}, \emph{left},
and \emph{right} sides
of any rectangle or square in any orientation $\theta \in [0,\pi/2)$.

\paragraph{Square and rectangular annuli.}

A \emph{square annulus} $A$ is defined to be the closed region
between two concentric squares, called the \emph{outer} and \emph{inner squares}
of $A$.
Its \emph{width} $\width(A)$ is the difference of
the radii of its outer and inner squares,
and its \emph{area} $\area(A)$ is the difference of the areas
of its outer and inner squares.
See \figurename~\ref{fig:annulus}(a) for an illustration.

A \emph{rectangular annulus} $A$ is the closed region
between two parallel rectangles $R$ and $R'$
with $R' \subseteq R$, that is, $A = R \setminus \intr R'$.
The two rectangles $R$ and $R'$ defining $A$ as above
are called the \emph{outer} and \emph{inner rectangles} of $A$, respectively.
The perpendicular distance between the top side of $R$ and the top side of $R'$
is called the \emph{top-width} of $A$.
Analogously, we define the \emph{bottom-width}, \emph{left-width}, and
\emph{right-width} of rectangular annulus $A$.
The \emph{width} $\width(A)$ of $A$ is defined to be the \emph{maximum} of
its top-, bottom-, left-, and right-widths.
If the top-, bottom-, left-, and right-widths of $A$ are all equal,
then $A$ is called a \emph{uniform-width} (or, shortly \emph{uniform})
rectangular annulus.
The \emph{area} $\area(A)$ of $A$ is simply
the difference $\area(R) - \area(R')$ of the areas of
its outer and inner rectangles.
See \figurename~\ref{fig:annulus}(b--c) for an illustration.

A rectangular or square annulus is also called $\theta$-aligned
if its outer and inner rectangles or squares are $\theta$-aligned.
If an annulus is $0$-aligned, then it is rather called \emph{axis-parallel}.

\paragraph{Problem definition.}

In this paper, we study the problem of
computing a rectangular or square annulus of minimum extent that encloses
a given set $P$ of $n$ points in the plane
either in a fixed orientation or over all orientations $\theta \in [0, \pi/2)$.
The objective extents to minimize we are interested in are
\emph{width}, \emph{area}, or their combinations.

To be more precise, for each $\theta \in [0, \pi/2)$,
let $\mathcal{S}(\theta)$, $\mathcal{R}(\theta)$, and $\mathcal{R}_u(\theta)$
be the set of all $\theta$-alignes square, rectangular,
and uniform rectangular annuli, respectively, that enclose $P$.
By definition, note that $\mathcal{S}(\theta) \subset \mathcal{R}_u(\theta) \subset \mathcal{R}(\theta)$.
Also, let $\mathcal{S} := \bigcup_{\theta\in [0, \pi/2)} \mathcal{S}(\theta)$,
$\mathcal{R} := \bigcup_{\theta\in [0, \pi/2)} \mathcal{R}(\theta)$, and
$\mathcal{R}_u := \bigcup_{\theta\in [0, \pi/2)} \mathcal{R}_u(\theta)$.
For each class $\mathcal{A} \in \{\mathcal{S}, \mathcal{R}, \mathcal{R}_u\}$,
we consider the following optimization problems:
\begin{itemize} \denseitems
 \item The \emph{minimum-width} problem asks to minimize
 $\width(A)$ over all $A \in \mathcal{A}(\theta)$ for a fixed orientation
 $\theta \in [0, \pi/2)$, or
 over all $A \in \mathcal{A}$
 for arbitrary orientation.
 \item The \emph{minimum-area} problem asks to minimize
 $\area(A)$ over all $A \in \mathcal{A}(\theta)$ for a fixed orientation
 $\theta \in [0, \pi/2)$, or
 over all $A \in \mathcal{A}$
 for arbitrary orientation.
\end{itemize}
In most cases, there may be two or more annuli of a particular shape with
minimum width or minimum area.
This yields a series of \emph{bicriteria} optimization problems.
Here, in this paper, two variants are discussed.
\begin{itemize} \denseitems
 \item The \emph{minimum-area minimum-width} problem asks to find
 an optimal annulus whose area is the smallest among those with minimum width,
 either in a fixed orientation or over all orientations.
 \item The \emph{minimum-width minimum-area} problem asks to find
 an optimal annulus whose width is the smallest among those with minimum area,
 either in a fixed orientation or over all orientations.
\end{itemize}
%So, each variant of our problem is described by
%one of the four possible objectives,
%one of the three possible classes of shapes, and
%either in a fixed or arbitrary orientation, and will be
%shortly encoded by $\sigma_1$-$\sigma_2$-$\sigma_3$
%for $\sigma_1 \in \{\text{W}, \text{A}, \text{AW}, \text{WA} \}$,
%$\sigma_2 \in \{\text{S}, \text{R}, \text{uR}\}$, and
%$\sigma_3 \in \{\text{f}, \text{a}\}$,
%as in Table~\ref{tbl:results}.
%For instance, the A-R-a problem stands for the minimum-area rectangular annulus problem in arbitrary orientation.

%%%%%%%%%%%%%%%%%%%%%%%%%%%%%%%%%%%%%%%%%%%%%%%%%%%%%
\section{Annuli in Fixed Orientation} \label{sec:fixed_ori}
%%%%%%%%%%%%%%%%%%%%%%%%%%%%%%%%%%%%%%%%%%%%%%%%%%%%

In this section, we study the problem of computing a minimum-area
square and rectangular annulus in a given orientation $\theta\in [0,\pi/2)$,
and its variations.
Without loss of generality, we assume $\theta = 0$,
so our annuli in mind are all axis-parallel.
We first handle the square case and move on to the rectangular case.

\subsection{Axis-parallel square annulus}
Let $P$ be a set of $n$ input points in the plane~$\Plane$.
We show that any minimum-area axis-parallel square annulus
is indeed a \emph{minimum-width} axis-parallel square annulus.
For the purpose, we review a necessary observation about
minimum-width square annuli.
\begin{lemma}[Gluchshenko et al.~\cite{ght-oafepramw-09}]
 \label{lem:fixed_orient_minwidth_sa}
 There exists a minimum-width square annulus enclosing $P$
 such that its outer square is a smallest enclosing square for $P$,
 that is, a pair of its opposite sides contain a point of $P$ on each.
\end{lemma}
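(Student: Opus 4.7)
The plan is to start from an arbitrary minimum-width square annulus $A$, with outer square $Q$ of center $c$ and half-side $R$ and inner square $Q'$ of the same center and half-side $r$, and rebuild it around a smallest enclosing square of $P$ while preserving the width $w := R - r$. Let $Q^*$ be any smallest enclosing axis-parallel square of $P$, and let $R^*$ denote its half-side; note that $R^* \leq R$ automatically, since $Q$ itself encloses $P$.

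The first step is to choose a particular center $c^*$ for $Q^*$. Writing $[x_{\min},x_{\max}]\times[y_{\min},y_{\max}]$ for the bounding box of $P$, the admissible centers of a smallest enclosing square form the axis-parallel segment obtained by intersecting the coordinate ranges $[x_{\max}-R^*, x_{\min}+R^*]$ and $[y_{\max}-R^*, y_{\min}+R^*]$; one of these is a single point (the long direction of the bounding box) while the other is an interval of length $|(x_{\max}-x_{\min}) - (y_{\max}-y_{\min})|$. Since $c$ lies in the analogous but larger product $[x_{\max}-R, x_{\min}+R] \times [y_{\max}-R, y_{\min}+R]$, clamping each coordinate of $c$ into the corresponding admissible range yields a valid $c^*$ with $\|c - c^*\|_\infty \leq R - R^*$. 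This bound is the key quantitative tool.

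Next I form the candidate annulus $A^*$ whose outer square is $Q^*$ and whose inner square $Q'^*$ is concentric with $Q^*$ and has half-side $r^* := R^* - w$. This is legal because the trivial annulus with outer $Q^*$ and inner reduced to its center point (or, formally, to an arbitrarily small square) certifies $w \leq R^*$, so $r^* \geq 0$; the width of $A^*$ is then $R^* - r^* = w$. To see that $A^*$ encloses $P$, note that $P \subseteq Q^*$ by the choice of $Q^*$, while for the inner condition the reverse triangle inequality in the $L_\infty$ metric gives, for every $p \in P$,
\[
\|p - c^*\|_\infty \geq \|p - c\|_\infty - \|c - c^*\|_\infty \geq r - (R - R^*) = R^* - w = r^*,
\]
using $\|p-c\|_\infty \geq r$ (since $p \notin \intr Q'$) together with the clamping bound on $\|c-c^*\|_\infty$. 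Hence no point of $P$ lies in the interior of $Q'^*$, and $A^*$ is the desired minimum-width square annulus whose outer square is a smallest enclosing square of $P$, which in particular has a pair of opposite sides each passing through a point of $P$.

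The main obstacle I anticipate is exactly the clamping step that yields $\|c - c^*\|_\infty \leq R - R^*$: one must inspect the shape of the admissible-center set of a smallest enclosing square and verify that it reaches within $L_\infty$-distance $R - R^*$ of any center of a larger enclosing square. Once this geometric observation about the bounding box of $P$ is in hand, the remainder of the proof reduces to the single triangle-inequality computation above.
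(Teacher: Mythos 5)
Your proof is correct. Note that the paper does not actually prove this statement: it is imported verbatim from Gluchshenko et al.\ as a known result, so there is no in-paper argument to compare against. Your blind argument is a legitimate, self-contained derivation: you take an arbitrary minimum-width annulus with outer radius $R$, inner radius $r$, center $c$, and translate the center to the nearest admissible center $c^*$ of a smallest enclosing square (radius $R^*$), observing that coordinate-wise clamping moves the center by at most $R-R^*$ in the $L_\infty$ norm because the admissible-center box for radius $R^*$ is nested inside the one for radius $R$ with margin exactly $R-R^*$ on each side. The reverse triangle inequality in $L_\infty$ then shows every point of $P$ stays outside the shrunken inner square of radius $R^*-w$, so the width is preserved. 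The two side conditions you need both check out: the admissible interval $[x_{\max}-R^*,\,x_{\min}+R^*]$ is nonempty precisely because $2R^*$ is at least the $x$-extent of $P$, and $w\le R^*$ follows from comparing against the degenerate annulus whose inner square is a point (consistent with the paper's definition, under which such an annulus has width equal to the outer radius). The final clause of the lemma, that a pair of opposite sides of $Q^*$ each contain a point of $P$, is automatic for a smallest enclosing square since its side length equals the larger extent of the bounding box, which forces both extreme points in that direction onto the boundary. The argument buys a concrete quantitative statement (the $\|c-c^*\|_\infty\le R-R^*$ bound) that would also transfer directly to any fixed orientation $\theta$ after a rotation of coordinates, which is all the paper needs.
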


For the minimum-area case, we observe the following.
\begin{lemma}\label{lem:fixed_orient_minarea_sa}
 Let $A$ be any minimum-area axis-parallel square annulus enclosing $P$.
 Then, the outer square of $A$ is a smallest enclosing axis-parallel square
 for $P$.
\end{lemma}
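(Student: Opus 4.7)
The plan is to argue by contradiction. Suppose $A$ is a minimum-area axis-parallel square annulus enclosing $P$ with center $c$, outer half-side $r$, and inner half-side $r'$, and suppose $r>r^\ast$, where $2r^\ast$ is the side length of a smallest enclosing axis-parallel square for $P$. By minimality we may assume tightness, i.e., $r=R(c)$ and $r'=R'(c)$, where
\[
R(c):=\max_{p\in P}\|p-c\|_\infty, \qquad R'(c):=\min_{p\in P}\|p-c\|_\infty,
\]
since otherwise shrinking the outer square to side $2R(c)$ or enlarging the inner square to side $2R'(c)$ would strictly reduce area. Writing the area as $4f(c)$ with $f(c):=R(c)^2-R'(c)^2$, it suffices to produce a center $c'$ arbitrarily close to $c$ at which $f(c')<f(c)$, contradicting minimality.

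The first step is an explicit formula for $R$. Let $r_x^\ast,r_y^\ast$ be the horizontal and vertical half-extents of $P$ and let $\bar x,\bar y$ be the midpoints of these extents; a direct computation gives
\[
R(c)=\max\bigl\{r_x^\ast+|c_x-\bar x|,\ r_y^\ast+|c_y-\bar y|\bigr\}.
\]
Hence $r^\ast=\max(r_x^\ast,r_y^\ast)$, the Chebyshev set $\{c:R(c)=r^\ast\}$ is an axis-parallel rectangle (possibly degenerate), and the $\ell_\infty$-distance from $c$ to this rectangle equals $R(c)-r^\ast$. Using this formula, for any $c$ outside the Chebyshev set I can exhibit a unit $\ell_\infty$-vector $v$, axis-parallel or diagonal according to which of the two terms attains the maximum at $c$, such that $R(c+tv)=R(c)-t$ for all sufficiently small $t>0$.

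The second step is a first-order estimate. Since $R'$ is a minimum of $\|\cdot\|_\infty$-distance functions, it is $1$-Lipschitz in $\|\cdot\|_\infty$, so $R'(c+tv)=R'(c)+\epsilon$ for some $\epsilon\in[-t,t]$. Expanding $f=(R-R')(R+R')$ to first order,
\[
f(c+tv)-f(c) = -2tR(c)-2\epsilon R'(c)+O(t^2) \ \le\ -2t\bigl(R(c)-R'(c)\bigr)+O(t^2),
\]
which is strictly negative for all sufficiently small $t>0$ since a nondegenerate annulus satisfies $R(c)>R'(c)$. This contradicts the minimality of $A$, so $R(c)=r^\ast$, and the tight outer square of $A$ is a smallest enclosing axis-parallel square for $P$.

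The main obstacle is the direction-choice in the middle step: one must verify that some unit $\ell_\infty$-vector $v$ really does decrease $R$ at the full Lipschitz rate. This is settled by a brief case analysis on which of the two extent terms in the formula for $R(c)$ is maximal; when the two terms tie, a diagonal $v$ decreases both simultaneously. Everything else is a routine first-order calculation, with the one remaining degenerate scenario ($R(c)=R'(c)$, i.e., an area-zero annulus) handled directly by observing that any zero-area annulus can be replaced, without loss of enclosure, by one whose common outer-and-inner square is a smallest enclosing square.
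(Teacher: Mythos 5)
Your proof is correct, but it takes a genuinely different route from the paper's. The paper disposes of this in two lines by leaning on Lemma~\ref{lem:fixed_orient_minwidth_sa} (Gluchshenko et al.): it takes as a competitor a minimum-width square annulus $A'$ whose outer square $S'$ is a smallest enclosing square, notes $\width(A)\geq\width(A')$ and $\area(S)>\area(S')$, and observes that for a square annulus the area $4w(2r-w)$ is monotone in both the width $w$ (on $w\le r$) and the outer radius $r$, forcing $\area(A)>\area(A')$. You instead run a self-contained variational argument on the center: the explicit $\ell_\infty$ circumradius formula $R(c)=\max\{r_x^\ast+|c_x-\bar x|,\ r_y^\ast+|c_y-\bar y|\}$, a descent direction along which $R$ drops at the full Lipschitz rate whenever $c$ is off the Chebyshev rectangle, the $1$-Lipschitzness of $R'$, and a first-order expansion of $R^2-R'^2$. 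Your direction-choice case analysis is sound (when only one term attains the max that term exceeds $r^\ast\ge r_x^\ast,r_y^\ast$, so the relevant $|c_x-\bar x|$ or $|c_y-\bar y|$ is positive and can be decreased; when both tie, a diagonal move works), and the tightening step is legitimate. What the paper's route buys is brevity and reuse of a lemma it needs anyway for the algorithm; what yours buys is independence from the external structural result and a quantitative rate of decrease. Both arguments share the same unaddressed degeneracy: when the minimum width is zero the strict inequality $R(c)>R'(c)$ (respectively $\area(A)>\area(A')$ in the paper) fails, and the statement as literally phrased can fail for collinear-on-a-side point sets; you at least flag this, though your fix (``replace'' the annulus) proves existence of a good optimum rather than the stated claim about every optimum.
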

\begin{proof}
Suppose for the contradiction that the outer square $S$ of $A$ is
not a smallest enclosing square for $P$.
Let $A'$ be a minimum-width square annulus such that
whose outer square $S'$ is a smallest enclosing square for $P$.
Such an annulus $A'$ exists by Lemma~\ref{lem:fixed_orient_minwidth_sa}.
We then have $\width(A) \geq \width(A')$ and
$\area(S) > \area(S')$.
This implies that $\area(A) > \area(A')$, a contradiction.
\end{proof}

Lemma~\ref{lem:fixed_orient_minarea_sa} indeed fixes the size of the outer square
of any possible minimum-area square annulus, and thus
the problem is reduced to finding such a minimum-width annulus described in
Lemma~\ref{lem:fixed_orient_minwidth_sa}.
\begin{lemma} \label{lem:A-S-f_W-S-f}
 Let $A$ be an axis-parallel square annulus enclosing $P$.
 Then, $A$ is of the minimum-area if and only if
 $A$ is of the minimum-width and its outer square is a smallest enclosing square
 for $P$.
\end{lemma}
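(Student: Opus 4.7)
My plan is to prove the two implications of the biconditional separately, using Lemmas~\ref{lem:fixed_orient_minwidth_sa} and~\ref{lem:fixed_orient_minarea_sa} as the main inputs, together with the elementary observation that once the outer square of an axis-parallel square annulus is fixed, the area of the annulus is a strictly decreasing function of the inner radius and therefore a strictly increasing function of the width. Note also that in a fixed orientation all smallest enclosing axis-parallel squares for $P$ have a common side length, namely $\max\{x_{\max}-x_{\min},\, y_{\max}-y_{\min}\}$ where the extrema are taken over $P$.

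For the forward direction, assume $A$ is of minimum area. Lemma~\ref{lem:fixed_orient_minarea_sa} gives that the outer square of $A$ is a smallest enclosing axis-parallel square for $P$, so it remains to show that $A$ is of minimum width. Let $A^\star$ be a minimum-width axis-parallel square annulus enclosing $P$ whose outer square is also a smallest enclosing square; such an $A^\star$ exists by Lemma~\ref{lem:fixed_orient_minwidth_sa}. Since $A$ and $A^\star$ share the same outer radius, the elementary observation above implies $\width(A) \ge \width(A^\star)$ would force $\area(A) \ge \area(A^\star)$, with equality only when the widths match. Because $A$ is of minimum area we must have $\area(A) = \area(A^\star)$, hence $\width(A) = \width(A^\star)$, which equals the minimum width over \emph{all} enclosing annuli by the choice of $A^\star$.

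For the backward direction, assume $A$ is of minimum width and its outer square is a smallest enclosing square for $P$. Let $A^\star$ be any minimum-area annulus. Applying the forward direction to $A^\star$, its outer square is also a smallest enclosing square and its width equals the minimum width. Thus $A$ and $A^\star$ have the same outer radius and the same width, hence the same inner radius, and therefore $\area(A) = \area(A^\star)$, so $A$ is also of minimum area.

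I do not foresee any serious obstacle here: the whole argument is just a bookkeeping exercise that relies on the two cited lemmas plus the monotonicity of area in width once the outer square is pinned down. The only subtle point to state explicitly is that all smallest enclosing axis-parallel squares for $P$ have the same side length, which is why "the outer square is a smallest enclosing square" fixes the outer radius uniformly across annuli.
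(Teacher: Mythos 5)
Your proof is correct and follows essentially the same route as the paper: the paper states this lemma with only a one-sentence justification (Lemma~\ref{lem:fixed_orient_minarea_sa} pins down the outer square, after which minimizing area is equivalent to minimizing width), and your write-up simply makes that argument explicit via the monotonicity of area in width for a fixed outer square. No gaps.
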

The algorithm by Gluchshenko et al.~\cite{ght-oafepramw-09}
indeed computes a minimum-width square annulus whose outer square is
a smallest enclosing square for $P$ based on Lemma~\ref{lem:fixed_orient_minwidth_sa},
and they also proved a lower bound $\Omega(n \log n)$
for the minimum-width problem.
Hence, we conclude the following.
\begin{theorem} \label{thm:A-S-f}
 A minimum-area square annulus in a fixed orientation enclosing $n$ points
 can be computed in optimal $O(n\log n)$ time.
\end{theorem}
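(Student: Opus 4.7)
The plan is to obtain Theorem~\ref{thm:A-S-f} essentially for free by combining the structural characterization in Lemma~\ref{lem:A-S-f_W-S-f} with an existing algorithm, and then to inherit the lower bound from the minimum-width variant.

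For the upper bound, I would argue as follows. Lemma~\ref{lem:A-S-f_W-S-f} reduces the minimum-area axis-parallel square annulus problem to finding a minimum-width axis-parallel square annulus enclosing $P$ whose outer square is a smallest enclosing axis-parallel square for $P$. Lemma~\ref{lem:fixed_orient_minwidth_sa} guarantees that at least one such annulus exists, and the $O(n\log n)$-time algorithm of Gluchshenko et al.~\cite{ght-oafepramw-09} is, on inspection of their method, designed to enumerate candidate minimum-width annuli whose outer square is a smallest enclosing square of $P$. Thus the annulus it returns satisfies both conditions of Lemma~\ref{lem:A-S-f_W-S-f} and is therefore of minimum area. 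The only verification required here is that the specific annulus output by their algorithm meets the outer-square constraint; this is immediate from their algorithmic setup, which first fixes the smallest enclosing square.

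For the matching lower bound, I would transfer the $\Omega(n\log n)$ bound that Gluchshenko et al. established for the minimum-width axis-parallel square annulus problem. By Lemma~\ref{lem:A-S-f_W-S-f}, any minimum-area axis-parallel square annulus is in particular a minimum-width one, so an algorithm that computes a minimum-area annulus also outputs a minimum-width annulus (and its width can be read off in constant time). Hence any lower bound for the minimum-width variant in the same computational model carries over.

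The only mildly subtle point is making sure that the algorithm of~\cite{ght-oafepramw-09} indeed returns an annulus of the form prescribed by Lemma~\ref{lem:A-S-f_W-S-f}, rather than an arbitrary minimum-width one; if that were not the case, one would have to post-process the width value by searching over concentric inner squares within the smallest enclosing square, but this step would still fit within $O(n\log n)$ time and does not affect the conclusion.
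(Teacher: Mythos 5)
Your proposal is correct and follows essentially the same route as the paper: both invoke Lemma~\ref{lem:A-S-f_W-S-f} to reduce the minimum-area problem to finding a minimum-width square annulus whose outer square is a smallest enclosing square, observe that the algorithm of Gluchshenko et al.~\cite{ght-oafepramw-09} already produces exactly such an annulus in $O(n\log n)$ time, and transfer their $\Omega(n\log n)$ lower bound for the minimum-width variant via the same equivalence. Your extra remarks on verifying the outer-square constraint and on the fallback post-processing are just a more careful spelling-out of what the paper asserts in one line.
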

Note that the algorithm by Gluchshenko et al. can be easily modified to find
\emph{all} minimum-width square annuli whose outer square is
a smallest enclosing square for $P$ in $O(n \log n)$ time.
Thus,
a minimum-area minimum-width or minimum-width minimum-area square annulus
in a fixed orientation can be computed in the same time bound,
and this is time-optimal.
\begin{corollary} \label{coro:AW/WA-S-f}
 A minimum-area minimum-width and a minimum-width minimum-area square annulus
in a fixed orientation can be computed in optimal $O(n \log n)$ time.
\end{corollary}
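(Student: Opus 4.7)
The plan is to reduce both bicriteria problems to the minimum-area problem already settled in Theorem~\ref{thm:A-S-f}, using Lemma~\ref{lem:A-S-f_W-S-f} as the pivot. That lemma characterizes minimum-area axis-parallel square annuli as exactly those that are simultaneously of minimum width and whose outer square is a smallest enclosing square of $P$; in particular, every minimum-area annulus shares the same (minimum) width. I intend to exploit this equivalence directly in both directions.

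For the \emph{minimum-width minimum-area} problem, I would argue that since every minimum-area annulus is automatically of minimum width by Lemma~\ref{lem:A-S-f_W-S-f}, all minimum-area annuli have identical width. Hence any minimum-area square annulus already attains the minimum width among minimum-area annuli, and the problem collapses to producing a single minimum-area square annulus, which Theorem~\ref{thm:A-S-f} does in $O(n \log n)$ time.

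For the \emph{minimum-area minimum-width} problem, I would fix the minimum width $w^\ast$ and analyze how the area of a minimum-width annulus depends on the side length $s$ of its outer square: with inner side $s - 2w^\ast$, the area equals $s^2 - (s-2w^\ast)^2 = 4w^\ast s - 4{w^\ast}^2$, which is strictly increasing in $s$. Therefore, among minimum-width annuli, the minimum area is achieved precisely when the outer square has smallest possible side length, i.e., when it is a smallest enclosing axis-parallel square of $P$. By Lemma~\ref{lem:A-S-f_W-S-f}, such an annulus is exactly a minimum-area annulus, so again Theorem~\ref{thm:A-S-f} yields it in $O(n \log n)$ time; invoking the modified Gluchshenko et al.\ algorithm mentioned immediately before the corollary (which enumerates all minimum-width annuli with smallest enclosing outer square in $O(n \log n)$ time) is an alternative route and also serves as a witness for the existence claim.

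The only remaining concern is optimality. Since the $\Omega(n\log n)$ lower bound of Gluchshenko et al.~\cite{ght-oafepramw-09} for the minimum-width square annulus problem carries over to both bicriteria variants (a bicriteria solver trivially recovers the minimum width), no step of the argument presents a genuine obstacle; the proof is essentially a bookkeeping consequence of Lemma~\ref{lem:A-S-f_W-S-f} and Theorem~\ref{thm:A-S-f}, and I would write it as two short paragraphs, one per bicriteria direction, followed by the one-line lower-bound remark.
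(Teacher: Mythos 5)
Your proposal is correct and follows essentially the same route as the paper: both bicriteria problems collapse, via Lemma~\ref{lem:A-S-f_W-S-f}, onto minimum-width annuli whose outer square is a smallest enclosing square, which the (modified) Gluchshenko et al.\ algorithm handles in $O(n\log n)$ time. Your monotonicity argument for the area as a function of the outer side length merely makes explicit a step the paper leaves implicit, so the two proofs coincide in substance.
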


\subsection{Axis-parallel rectangular annuli}
Next, we consider axis-parallel rectangular annuli.
The following is a well-known observation on the minimum-width
rectangular annuli.
\begin{lemma}[Abellanas et al.~\cite{ahimpr-bfr-03} and Mukherjee et al.~\cite{mmkd-mwra-13}]
 \label{lem:fixed_orient_minwidth_ra}
 There exists a minimum-width axis-parallel rectangular annulus enclosing $P$
 such that
 its outer rectangle is the smallest enclosing axis-parallel rectangle for $P$.
 The same holds for a minimum-width axis-parallel uniform rectangular annulus.
\end{lemma}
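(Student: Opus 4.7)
My plan is to take an arbitrary minimum-width axis-parallel annulus and replace its outer rectangle with the bounding box $R_0 := [x_{\min}, x_{\max}] \times [y_{\min}, y_{\max}]$ of $P$, showing that the width does not increase. Since $R_0$ is the unique smallest enclosing axis-parallel rectangle of $P$, every enclosing axis-parallel rectangle $R$ satisfies $R_0 \subseteq R$, so the modification always amounts to shrinking the outer rectangle while shrinking the inner rectangle as needed to stay inside the new outer one.

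For the general rectangular case, let $A = R \setminus \intr R'$ be a minimum-width annulus enclosing $P$, and set $A_0 := R_0 \setminus \intr(R' \cap R_0)$. Then $A_0$ is a valid rectangular annulus that encloses $P$, since $P \subseteq R_0$ and $\intr(R' \cap R_0) \subseteq \intr R'$ contains no point of $P$. To bound $\width(A_0)$, I would argue side by side: writing $t(\cdot)$ for the $y$-coordinate of the top side, the top-width of $A_0$ equals $y_{\max} - \min(y_{\max}, t(R'))$, which is either $y_{\max} - t(R') \leq t(R) - t(R')$ (using $t(R) \geq y_{\max}$) or $0$; in either case it is at most the top-width of $A$. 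The same calculation for the bottom, left, and right sides yields $\width(A_0) \leq \width(A)$, so $A_0$ is also minimum-width.

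For the uniform case, I would use a monotonicity argument on the function
\[ f(R) := \max_{(x,y) \in P} \min\{x-a_1,\, a_2-x,\, y-b_1,\, b_2-y\}, \]
where $R = [a_1,a_2] \times [b_1,b_2]$ ranges over enclosing axis-parallel rectangles. This $f(R)$ is exactly the minimum width of a uniform axis-parallel rectangular annulus with outer rectangle $R$, because a point $(x,y)$ lies outside the interior of the inner rectangle (namely $R$ shrunk uniformly by $w$) iff its distance to some side of $R$ is at most $w$. If $R \subseteq R'$, then each of the four side-distances for $R'$ is at least the corresponding one for $R$, so $f(R) \leq f(R')$. Hence $f(R_0) \leq f(R)$ for every enclosing $R$, and the uniform annulus with outer rectangle $R_0$ and width $f(R_0)$ is minimum-width. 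The step requiring the most care is the well-definedness of this annulus, i.e., that $f(R_0)$ does not exceed half the shorter side of $R_0$ so the inner rectangle does not degenerate past a point; this follows because for every $(x,y) \in P \subseteq R_0$, $\min(x-x_{\min}, x_{\max}-x) \leq (x_{\max}-x_{\min})/2$ and the analogous bound holds in $y$, giving $f(R_0) \leq \tfrac{1}{2}\min(x_{\max}-x_{\min}, y_{\max}-y_{\min})$. Fully degenerate configurations (collinear or coincident point sets) can be handled by the natural limit of allowing a degenerate inner rectangle.
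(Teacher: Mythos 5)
The paper does not prove this lemma at all: it is stated as a known fact imported from Abellanas et al.\ and Mukherjee et al., so there is no in-paper argument to compare against. Your blind proof is a correct, self-contained derivation. For the general case, the clip-to-the-bounding-box argument ($A_0 := R_0 \setminus \intr(R'\cap R_0)$ together with the side-by-side inequality $y_{\max}-\min(y_{\max},t(R')) \le t(R)-t(R')$, using $R_0\subseteq R$) is sound; for the uniform case, the identification of the optimal width with $f(R) = \max_{p\in P}\min\{x-a_1, a_2-x, y-b_1, b_2-y\}$ and the monotonicity of $f$ under shrinking the outer rectangle is exactly the right mechanism, and your non-degeneracy bound $f(R_0)\le \tfrac12\min(x_{\max}-x_{\min},\,y_{\max}-y_{\min})$ closes the only delicate point there. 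The one place you should be slightly more careful is the general case when $R'\cap R_0=\emptyset$ (an inner rectangle lying in $R\setminus R_0$): the set intersection is then not a rectangle, so $A_0$ as literally defined is not an annulus. Your coordinatewise formula already suggests the fix --- replace $R'\cap R_0$ by the rectangle obtained by clamping each side coordinate of $R'$ into the corresponding range of $R_0$, which coincides with the intersection whenever it is nonempty and otherwise degenerates to a segment or point on $\bd R_0$; the same width inequalities then go through verbatim. With that one-line amendment the proof is complete and, if anything, more explicit than what the cited sources are invoked for.
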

It is not difficult to see that for the minimum-width problem,
rectangular annuli and uniform rectangular annuli are equivalent.
Here, we give a direct and simple proof for the fact.
\begin{lemma} \label{lem:fixed_orient_minwidth_ura}
 Let $A$ be a minimum-width uniform rectangular annulus enclosing $P$.
 Then, $A$ is also a minimum-width rectangular annulus enclosing $P$.
\end{lemma}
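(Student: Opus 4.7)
The plan is to argue by contradiction: assuming some rectangular annulus $B$ enclosing $P$ has $\width(B) < \width(A)$, we will transform $B$ into a \emph{uniform} rectangular annulus of the same width $\width(B)$ that still encloses $P$, contradicting the assumption that $A$ is of minimum width among uniform rectangular annuli.

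The key construction is to enlarge the inner rectangle rather than shrink the outer one. Let $R$ and $R'$ be the outer and inner rectangles of $B$, and set $w := \width(B)$. By definition, each side of $R'$ lies at perpendicular distance at most $w$ from the corresponding side of $R$. I would then define $R''$ to be the axis-parallel rectangle obtained from $R'$ by pushing each side outward by exactly $w$, and let $A''$ be the uniform rectangular annulus with outer rectangle $R''$ and inner rectangle $R'$. By construction $\width(A'') = w$.

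It remains to check that $A''$ encloses $P$. First, because each side of $R$ is within distance $w$ of the corresponding side of $R'$, we have $R \subseteq R''$, so $P \subseteq R \subseteq R''$. Second, since $B$ encloses $P$ we have $P \cap \intr R' = \emptyset$. Combining, $P \subseteq R'' \setminus \intr R' = A''$, so $A'' \in \mathcal{R}_u(0)$. Thus $\width(A'') = w < \width(A)$, contradicting the minimality of $A$ among uniform rectangular annuli enclosing $P$.

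I do not expect any real obstacle here; the only point that might need a brief sanity check is that $R''$ and $R'$ are both well-defined rectangles with $R' \subseteq R''$, which is immediate from $w \geq 0$ and the definition of the uniform outward offset. The argument does not require $R''$ to coincide with $R$, and indeed $R''$ may be strictly larger than $R$ in some directions, which is exactly what lets us equalize the four side-widths without loss of enclosure.
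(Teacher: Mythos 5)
Your proposal is correct and is essentially the paper's argument run in the opposite direction: the paper keeps the outer rectangle fixed and slides each side of the inner rectangle inward until all four side-widths equal $\width(B)$, whereas you keep the inner rectangle fixed and offset it outward by $\width(B)$ to obtain a (possibly larger) outer rectangle $R''$. Both constructions yield a uniform annulus of width exactly $\width(B)$ that still encloses $P$ and hence the same contradiction; your variant has the minor advantage that the inner rectangle is never shrunk, so you need no check that the inward offset keeps it non-degenerate.
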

\begin{proof}
Let $A'$ be a minimum-width rectangular annulus enclosing $P$.
Since any uniform rectangular annulus is also a rectangular annulus,
it holds that $\width(A') \leq \width(A)$ in general.
Suppose to the contrary that $\width(A') < \width(A)$,
and assume without loss of generality that the width $\width(A')$ of $A'$
is determined by the top-width of $A'$.
Then, by sliding each side of the inner rectangle of $A'$ inwards,
one can obtain a new uniform rectangular annulus $A''$ with
$\width(A'') = \width(A') < \width(A)$,
a contradiction.
\end{proof}

Unlike the minimum-width problem, the minimum-area problem behaves
differently for uniform and general rectangular annuli.
Our first observation on the minimum-area problem for rectangular annuli
is about their outer rectangles.
\begin{lemma}\label{lem:fixed_orient_minarea_ra}
 Let $A$ be a minimum-area axis-parallel rectangular annulus or
 a minimum-area uniform axis-parallel rectangular annulus enclosing $P$.
 In either case,
 its outer rectangle should be the smallest enclosing axis-parallel rectangle
 for $P$.
\end{lemma}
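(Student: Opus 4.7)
The plan is to argue by contradiction: supposing the outer rectangle $R$ of the minimum-area annulus $A$ strictly contains the smallest enclosing (i.e., bounding) rectangle $R^*$ of $P$, I will construct another annulus enclosing $P$ whose area is strictly smaller. Write $R = [\ell, r] \times [b, t]$ and $R^* = [\ell^*, r^*] \times [b^*, t^*]$; since $R^* \subseteq R$, at least one of the four side inequalities is strict, and by reflection symmetry I may assume the slack is on the right, that is, $r > r^*$.

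For the uniform case, let $w = \width(A)$, so the inner is the erosion $R' = [\ell+w, r-w] \times [b+w, t-w]$. I slide the right side of $R$ inward by some $\epsilon \in (0, r-r^*]$ and take the new inner to be the $w$-erosion of the new outer. The new outer still contains $R^*$, hence $P$, and the new inner is contained in $R'$, hence still interior-disjoint from $P$. A short computation gives that the area decreases by exactly $2w\epsilon$, which is strictly positive whenever $w > 0$; the $w=0$ subcase is trivial, because then $A$ already has area zero and the outer may be replaced by $R^*$.

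For the general case, write $R' = [\ell', r'] \times [b', t']$ and split on the position of $r'$ relative to $r$. If $r' < r$, I keep $R'$ unchanged and shrink the right side of $R$ to $\max(r', r^*)$; the new outer still encloses $R^*$ and contains $R'$, while the area strictly drops. If $r' = r$, the inner rectangle touches the outer on the right, so I shrink both right sides from $r$ down to $r^*$ simultaneously. The net change in area equals $-(r-r^*)\bigl[(t-b)-(t'-b')\bigr]$, which is strictly negative except when $R'$ matches $R$ in height, i.e., $b'=b$ and $t'=t$.

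The remaining case, where $R$ and $R'$ coincide on the right, top, and bottom sides, is the main obstacle, because there the local surgery above no longer reduces the area. Here $A$ degenerates to the pure rectangle $[\ell, \ell'] \times [b, t]$. But viewed as an axis-parallel rectangle it already encloses $P$, so $\area(A) \geq \area(R^*)$. I finish by exhibiting any axis-parallel rectangle $E \subseteq R^*$ of positive area whose interior avoids the finite set $P$ (such an $E$ clearly exists whenever $R^*$ is nondegenerate, for instance a small rectangle around a point of $R^* \setminus P$). Then the annulus with outer $R^*$ and inner $E$ is a valid enclosing annulus of area $\area(R^*) - \area(E) < \area(R^*) \leq \area(A)$, contradicting the minimality of $A$. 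Applying the same argument under reflection to slack on the left, top, or bottom side completes the proof, and in particular it applies to both the uniform and the general case simultaneously.
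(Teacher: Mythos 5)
Your uniform case and the first two subcases of the general case (where $r'<r$, or $r'=r$ but $(b',t')\neq(b,t)$) are correct, and amount to a more explicit version of the paper's one-sentence argument (shrink the outer rectangle toward the bounding box and watch the area drop). The genuine gap is in your ``remaining case,'' where $R'$ shares the right, top, and bottom sides of $R$. There you assert that $A$ degenerates to the rectangle $[\ell,\ell']\times[b,t]$, that this rectangle encloses $P$, and hence that $\area(A)\geq\area(R^*)$. That is false: $A$ also contains the three shared edges of $R$ and $R'$ as one-dimensional sets, and points of $P$ may lie on the top, bottom, or right edge of $R'$ with $x$-coordinate larger than $\ell'$, so they need not lie in the strip $[\ell,\ell']\times[b,t]$. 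Concretely, take $P=\{(0,0),(0,1),(10,0),(5,\tfrac12)\}$, so $R^*=[0,10]\times[0,1]$, and let $R=[0,11]\times[0,1]$, $R'=[5,11]\times[0,1]$. The annulus $R\setminus\intr R'$ encloses $P$ (the last two points lie on $\bd R'$) and has area $5$, while $\area(R^*)=10$. Moreover, $5$ is exactly the best area achievable with outer rectangle $R^*$ (the largest empty rectangle inside $R^*$ has area $5$), so either this annulus is itself minimum-area with outer rectangle $\neq R^*$, or the true minimum is smaller and is likewise not attained with outer rectangle $R^*$. In this configuration no strict improvement by shrinking the outer rectangle is possible, so no argument can close your final case as stated: the universally quantified claim fails there.

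What is salvageable --- and what the paper's equally terse proof really establishes, modulo the same silent edge case --- is the existential version: replacing $R$ by $R^*$ and $R'$ by $R'\cap R^*$ yields an enclosing annulus whose area changes by $-\bigl(\area(R\setminus R^*)-\area(R'\setminus R^*)\bigr)\leq 0$, so \emph{some} minimum-area annulus has outer rectangle $R^*$; equality occurs precisely in your problematic configuration, where $R'$ covers $R\setminus R^*$ up to measure zero. That weaker statement is all that the downstream reduction to the largest empty rectangle problem needs. For the uniform case your $2w\epsilon$ computation does give strictness whenever $w>0$, so there the stated conclusion genuinely holds (apart from the width-$0$ degeneracy you already flagged, where the conclusion can again fail but the area is $0$ and nothing is lost).
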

\begin{proof}
Suppose that the outer rectangle $R$ of $A$ is not
the smallest enclosing rectangle for $P$.
Then, replacing $R$ by the smallest enclosing rectangle for $P$
results in an annulus enclosing $P$ with smaller area,
a contradiction.
\end{proof}

This already implies the following.
\begin{theorem} \label{thm:A-uR-f}
 A minimum-area uniform rectangular annulus enclosing $P$ in a fixed orientation
 is unique and it coincides with the minimum-width
 uniform rectangular annulus whose outer rectangle is the smallest enclosing
 rectangle for $P$ in the orientation.
 Therefore, it can be computed in $O(n)$ time.
\end{theorem}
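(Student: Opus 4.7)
The plan is to pin down the outer rectangle first, then check that along the one remaining degree of freedom the area is a monotone function of the width. By Lemma~\ref{lem:fixed_orient_minarea_ra}, the outer rectangle $R$ of any minimum-area uniform rectangular annulus enclosing $P$ must be the smallest enclosing axis-parallel rectangle of $P$, and in the axis-parallel case this rectangle is unique since it is determined by the extremal $x$- and $y$-coordinates of $P$. So I would fix this $R = [x_L, x_R] \times [y_B, y_T]$ throughout and write $W, H$ for its side lengths.

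With $R$ fixed, I would parameterize uniform rectangular annuli with outer rectangle $R$ by a single nonnegative number $w$: the inner rectangle $R'(w)$ is obtained by contracting each side of $R$ inward by $w$. The enclosure condition $P \cap \intr R'(w) = \emptyset$ translates into a lower bound $w \ge w^\ast$, where
\[
w^\ast := \min_{p \in P}\, \min\bigl(p_x - x_L,\ x_R - p_x,\ p_y - y_B,\ y_T - p_y\bigr)
\]
is the smallest $L_\infty$ distance from a point of $P$ to $\bd R$, while requiring $R'(w)$ to remain a (possibly degenerate) rectangle gives the upper bound $w \le \tfrac{1}{2}\min(W,H)$.

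The crux is then to study $\area(A)$ as a function of $w$ on this admissible interval. A direct expansion yields the quadratic $\area(A) = 2w(W+H) - 4w^2$, whose derivative $2(W+H)-8w$ is strictly positive throughout $[0, \tfrac{1}{2}\min(W,H))$ because $\tfrac{1}{2}\min(W,H) \le \tfrac{1}{4}(W+H)$. Hence the area is strictly increasing in $w$ and uniquely minimized at $w = w^\ast$. Since $w^\ast$ is by definition the width of the resulting annulus, this minimizer is also the unique minimum-width uniform rectangular annulus with outer rectangle $R$, which by Lemma~\ref{lem:fixed_orient_minwidth_ra} is a minimum-width uniform rectangular annulus for $P$ overall.

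For the running time, both $R$ (from the $x$- and $y$-extrema of $P$) and $w^\ast$ are computed by a constant number of linear scans of $P$, giving $O(n)$ in total. The one step I expect to require any real care is the verification that the area is strictly monotone in $w$ on the admissible interval, because that is what actually delivers uniqueness and the coincidence with the minimum-width annulus; once that monotonicity is in hand, everything else is immediate.
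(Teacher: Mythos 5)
Your proof is correct and follows essentially the same route as the paper: fix the outer rectangle via Lemma~\ref{lem:fixed_orient_minarea_ra}, then observe that for that fixed outer rectangle the area of a uniform annulus is a strictly increasing function of its width $w$, so the area minimizer coincides with the width minimizer. The only difference is that the paper simply asserts this equivalence of minimizing area and width for a fixed outer rectangle, whereas you verify it explicitly via the quadratic $2w(W+H)-4w^2$ and the bound $\tfrac{1}{2}\min(W,H)\le\tfrac{1}{4}(W+H)$ --- a worthwhile bit of added rigor that also delivers the uniqueness claim cleanly.
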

\begin{proof}
By Lemma~\ref{lem:fixed_orient_minarea_ra},
the outer rectangle of an axis-parallel minimum-area uniform rectangular annulus
enclosing $P$ should be the smallest axis-parallel rectangle $R$ enclosing $P$.
For a fixed outer rectangle $R$,
minimizing the area is equivalent to minimizing the width
over all uniform rectangular annuli enclosing $P$.
Hence, the axis-parallel minimum-area uniform rectangular annulus enclosing $P$
is uniquely determined by
the minimum-width axis-parallel uniform rectangular annulus whose outer rectangle
is $R$.
Such an annulus exists by Lemmas~\ref{lem:fixed_orient_minwidth_ra}
and~\ref{lem:fixed_orient_minwidth_ura},
and can easily computed in $O(n)$ time by computing the distance
from each point in $P$ to the boundary of the outer rectangle $R$,
as done in Abellanas~\cite{ahimpr-bfr-03}.
\end{proof}

By the uniqueness and the characterization of Theorem~\ref{thm:A-uR-f},
the minimum-area minimum-width and minimum-width minimum-area
uniform rectangular annulus in a fixed orientation can also be computed in
$O(n)$ time.
\begin{corollary} \label{coro:AW/WA-uR-f}
 A minimum-area minimum-width or minimum-width minimum-area
 uniform rectangular annulus enclosing $P$ in a fixed orientation
 can be computed in $O(n)$ time.
\end{corollary}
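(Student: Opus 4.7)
The plan is to leverage Theorem~\ref{thm:A-uR-f} directly: since that theorem asserts both the \emph{uniqueness} of a minimum-area uniform rectangular annulus in a fixed orientation and its \emph{coincidence} with a particular minimum-width uniform rectangular annulus, the two bicriteria problems collapse onto a single object that we already know how to compute in $O(n)$ time.

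First I would argue that the unique minimum-area uniform rectangular annulus $A^*$ produced by Theorem~\ref{thm:A-uR-f} is simultaneously a minimum-width uniform rectangular annulus. This follows because, by the theorem, $A^*$ is the minimum-width uniform rectangular annulus whose outer rectangle is the smallest enclosing axis-parallel rectangle $R$ for $P$, and by Lemma~\ref{lem:fixed_orient_minwidth_ra} such an outer rectangle $R$ is attained by some minimum-width uniform rectangular annulus. Thus $A^*$ attains the global minimum width over $\mathcal{R}_u(0)$.

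Once $A^*$ is known to be both minimum-area and minimum-width, the bicriteria problems are immediate. For the minimum-width minimum-area variant, the set of minimum-area annuli is the singleton $\{A^*\}$ by the uniqueness part of Theorem~\ref{thm:A-uR-f}, so $A^*$ itself is the answer. For the minimum-area minimum-width variant, $A^*$ lies in the set of minimum-width annuli, and since $A^*$ is the unique minimum-area uniform rectangular annulus in $\mathcal{R}_u(0)$, it is a fortiori the unique minimum-area element of any subset of $\mathcal{R}_u(0)$ containing it, in particular of the set of minimum-width ones.

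The computation therefore reduces to producing $A^*$, which Theorem~\ref{thm:A-uR-f} does in $O(n)$ time. There is no real obstacle here; the only thing to be careful about is to invoke both the existence part of Lemma~\ref{lem:fixed_orient_minwidth_ra} (to certify $A^*$ as minimum-width) and the uniqueness part of Theorem~\ref{thm:A-uR-f} (to handle ties in both directions), so the short proof should explicitly cite these two ingredients.
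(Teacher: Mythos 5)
Your proposal is correct and matches the paper's own (very brief) justification: the paper likewise derives the corollary directly from the uniqueness and characterization in Theorem~\ref{thm:A-uR-f}, with Lemma~\ref{lem:fixed_orient_minwidth_ra} guaranteeing that the annulus anchored at the smallest enclosing rectangle attains the global minimum width. You merely spell out the tie-handling in both bicriteria directions more explicitly than the paper does.
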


We then turn to the case of general rectangular annuli.
For the general rectangular annulus,
Lemma~\ref{lem:fixed_orient_minarea_ra} also fixes the outer rectangle
as the smallest enclosing rectangle $R$ for $P$.
Hence, the minimum-area problem is now reduced to
maximizing the area of the inner rectangle.
Let $R'$ be a possible inner rectangle that maximizes its area.
Then, $R'$ must satisfy the following conditions:
(1) $R'$ is \emph{empty} of points in $P$, that is, no point in $P$ lies in the interior of $R'$.
(2) $R'$ is contained in $R$, that is, $R' \subseteq R$.
Such a rectangle $R'$ satisfying conditions (1) and (2) is called
an \emph{empty rectangle} among $P$.
This directly shows a reduction to the problem of finding
an axis-parallel \emph{largest empty rectangle} among points $P$.
The largest empty rectangle problem takes an axis-parallel rectangle $B$
and a set $Q$ of $n$ points in $B$ as input, and
asks to find a maximum-area axis-parallel rectangle
that is empty of $P$ and is contained in $B$.

\begin{lemma} \label{lem:A-R-f-LER}
 The problem of computing a minimum-area axis-parallel rectangular annulus
 enclosing $P$ is computationally equivalent to the problem of computing
 a largest empty axis-parallel rectangle among $P$.
 More precisely, both problems are reduced to each other in linear time.
\end{lemma}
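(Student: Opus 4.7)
The plan is to establish the equivalence by two linear-time reductions, both hinging on Lemma~\ref{lem:fixed_orient_minarea_ra}. That lemma pins down the outer rectangle of any minimum-area axis-parallel annulus to be the (unique) smallest enclosing axis-parallel rectangle of $P$, and hence reduces minimizing the annulus area to \emph{maximizing} the area of an inner rectangle that is empty of $P$ and contained in the fixed outer rectangle. This is exactly the largest empty rectangle problem on the corresponding bounding box.

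For the forward direction (minimum-area annulus to largest empty rectangle), given $P$ I would first compute the smallest enclosing axis-parallel rectangle $R$ of $P$ in $O(n)$ time by reading off the four extreme coordinates. The resulting instance of the largest empty rectangle problem has bounding rectangle $R$ and point set $P$; if $R'$ is its output, then $R \setminus \intr R'$ is a minimum-area axis-parallel rectangular annulus enclosing $P$ by the characterization above.

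For the reverse direction (largest empty rectangle to minimum-area annulus), given an axis-parallel bounding rectangle $B$ and a set $Q$ of $n$ points inside $B$, I would construct $P := Q \cup \{p_1, p_2\}$ in $O(n)$ time, where $p_1$ and $p_2$ are two diagonally opposite corners of $B$. These two extra points force the smallest enclosing axis-parallel rectangle of $P$ to be exactly $B$, so by Lemma~\ref{lem:fixed_orient_minarea_ra} the outer rectangle of any minimum-area annulus for $P$ is $B$, and the inner rectangle is a largest axis-parallel rectangle contained in $B$ and empty of $P$.

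The only step requiring verification, and thus the main (though mild) obstacle, is that the two added corner points do not spuriously exclude any candidate empty rectangle. This follows from the observation that every axis-parallel $R' \subseteq B$ satisfies $\intr R' \subseteq \intr B$, so the boundary points $p_1, p_2 \in \bd B$ cannot lie in $\intr R'$. Consequently a rectangle contained in $B$ is empty of $P$ if and only if it is empty of $Q$, and the inner rectangle produced by the minimum-area annulus algorithm directly solves the original largest empty rectangle instance on $(B, Q)$.
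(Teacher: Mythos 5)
Your proposal is correct and follows essentially the same route as the paper: both directions rest on Lemma~\ref{lem:fixed_orient_minarea_ra} fixing the outer rectangle, with the reverse reduction adding two diagonally opposite corners of $B$ to force the bounding box. Your explicit check that the added corner points lie on $\bd B$ and hence cannot fall in the interior of any candidate rectangle $R' \subseteq B$ is a small verification the paper leaves implicit, but it does not change the argument.
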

\begin{proof}
Let $P$ be an instance of the minimum-area axis-parallel rectangular annulus problem.
As discussed above, Lemma~\ref{lem:fixed_orient_minarea_ra} tells us
the outer rectangle of the solution should be the smallest enclosing rectangle $R$
for $P$.
Then, we can find the inner rectangle of maximum area
by solving the largest empty rectangle problem for input $R$ and $P$.
This shows a reduction to the largest empty rectangle problem in $O(n)$ time.

Next, consider an input $(B, Q)$ of the largest empty rectangle problem,
where $Q$ is a set of $n$ points in rectangle $B$.
To solve the problem, we apply any algorithm for the minimum-area rectangular
annulus problem for input $P := Q \cup \{p_1, p_2\}$,
where $p_1$ is the top-left corner and $p_2$ is the bottom-right corner of $B$,
and let $A$ be the output optimal annulus.
By Lemma~\ref{lem:fixed_orient_minarea_ra}, the outer rectangle of $A$
is $B$.
On the other hand, the inner rectangle $R'$ of $A$ should maximize its area
while $R'$ is empty of $P$ and $R' \subseteq B$.
Such a rectangle $R'$ is of course a largest empty rectangle.
This completes a reduction to the minimum-area rectangular annulus problem
in $O(n)$ time.
\end{proof}

Hence, our problem can be solved by applying any algorithm for
the largest empty rectangle problem.
The largest empty rectangle problem is one of the classical and well-studied problems in computational geometry.
There are two approaches to solve the problem:
whether one checks all \emph{maximal empty rectangles} or not.
Given a set $P$ of points and its bounding rectangle $R$,
a (axis-parallel) maximal empty rectangle is an empty rectangle
each of whose edges either contains a point in $P$ or a portion of an edge of $R$.
Naamad et al.~\cite{nlh-merp-84} presented an algorithm for the largest empty rectangle problem that runs in $O(\min\{r \log n, n^2\})$ time,
where $r$ denotes the number of maximal empty rectangles.
In the same paper, it is also shown that $r = O(n^2)$ in the worst case
and $r = O(n \log n)$ in expectation.
This algorithm was improved by Orlowski~\cite{o-nalerp-90}
to $O(n\log n + r)$ time.
Chazelle et al.~\cite{cdl-cler-86} presented
an $O(n \log^3 n)$-time algorithm that does not enumerate all maximal empty rectangles, and
Aggarwal and Suri~\cite{as-facler-87} improved it to $O(n\log^2 n)$ time
with $O(n)$ space.
Mckenna et al.~\cite{mrs-flrop-85} proved a lower bound of $\Omega(n \log n)$
for this problem.
\begin{theorem} \label{thm:A-R-f}
 A minimum-area rectangular annulus enclosing $P$ in a fixed orientation
 can be computed in $O(n \log^2 n)$ or $O(n\log n + r)$ time in the worst case, or
 in $O(n \log n)$ expected time,
 where $r = O(n^2)$ denotes the number of maximal empty rectangles among $P$.
 Moreover, any algorithm for the problem takes $\Omega(n \log n)$ time
 in the worst case.
\end{theorem}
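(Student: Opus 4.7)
The plan is to leverage Lemma~\ref{lem:A-R-f-LER} directly: since the minimum-area rectangular annulus problem in a fixed orientation reduces in $O(n)$ time to the largest empty axis-parallel rectangle problem (and vice versa), the upper and lower bounds for the latter transfer immediately to the former. So the proof becomes a matter of citing the right algorithms and bounds.

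First I would invoke Lemma~\ref{lem:A-R-f-LER} to translate the instance $P$ into an instance $(R, P)$ of the largest empty rectangle problem, where $R$ is the smallest enclosing axis-parallel rectangle for $P$, computable in $O(n)$ time. Then, by feeding $(R, P)$ into the algorithm of Aggarwal and Suri~\cite{as-facler-87}, a largest empty rectangle is found in $O(n\log^2 n)$ worst-case time and $O(n)$ space; the annulus whose outer rectangle is $R$ and whose inner rectangle is this largest empty rectangle is the desired minimum-area rectangular annulus. Alternatively, using Orlowski's algorithm~\cite{o-nalerp-90} gives an $O(n\log n + r)$ worst-case bound, where $r$ is the number of maximal empty rectangles, and since $r = O(n\log n)$ in expectation as shown by Naamad et al.~\cite{nlh-merp-84}, the same algorithm yields expected $O(n\log n)$ running time. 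These three time bounds account for the upper bounds in the statement, together with the fact $r = O(n^2)$ in the worst case.

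For the lower bound, I would again use the linear-time reduction in the reverse direction established in Lemma~\ref{lem:A-R-f-LER}: any algorithm that solves the minimum-area rectangular annulus problem in time $T(n)$ can be used to solve the largest empty rectangle problem in time $T(n) + O(n)$, by adding the two diagonally opposite corners of the bounding rectangle $B$ to $Q$ and reading off the inner rectangle of the output. Since Mckenna et al.~\cite{mrs-flrop-85} established an $\Omega(n\log n)$ lower bound for the largest empty rectangle problem (in the appropriate model), this forces $T(n) = \Omega(n\log n)$ as well.

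The only real nontrivial step is verifying that the reduction of Lemma~\ref{lem:A-R-f-LER} is compatible with the computational model in which Mckenna et al.'s lower bound is proved; assuming it is (as is standard for such algebraic/comparison-based lower bounds combined with a linear-time reduction that only adds two points and constructs a bounding rectangle), the theorem follows directly. No new geometric arguments are required beyond those already encapsulated in Lemmas~\ref{lem:fixed_orient_minarea_ra} and~\ref{lem:A-R-f-LER}.
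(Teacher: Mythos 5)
Your proposal is correct and follows essentially the same route as the paper: both rely on Lemma~\ref{lem:A-R-f-LER} to transfer the upper bounds of Aggarwal--Suri and Orlowski (with the bounds on $r$ from Naamad et al.) and the $\Omega(n\log n)$ lower bound of Mckenna et al. Your added remark about model compatibility for the lower-bound transfer is a reasonable extra caution but does not change the argument.
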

\begin{proof}
The first statement directly follows from Lemma~\ref{lem:A-R-f-LER} and
the above discussion.
The currently fastest algorithms for the largest empty rectangle problem
are one by Aggarwal and Suri~\cite{as-facler-87} with running time $O(n\log^2 n)$
and the other by Orlowski~\cite{o-nalerp-90} with running time $O(n\log n + r)$.
As proved by Namaad et al.~\cite{nlh-merp-84},
$r = O(n^2)$ in the worst case and
$r = O(n\log n)$ in expectation,
so the algorithm by Orlowski runs in $O(n \log n)$ expected time.
Finally, again by Lemma~\ref{lem:A-R-f-LER}, the lower bound $\Omega(n \log n)$
by Mckenna et al.~\cite{mrs-flrop-85} applies to the minimum-area rectangular
annulus problem in a fixed orientation.
\end{proof}

A minimum-area minimum-width rectangular annulus can be found as follows:
First, compute the minimum-width uniform rectangular annulus $A_0$ in $O(n)$ time,
and let $w$ be its width and $R'_0$ be its inner rectangle.
We need to find a maximum-area empty rectangle $R'$ that contains $R'_0$
in order to keep the width of the annulus formed by $R$ and $R'$
at most $w$, that is, being the minimum width.
Since $A_0$ is a minimum-width annulus, we further observe that
at least one side of $R'$ must be overlapped with a side of $R'_0$.
Without loss of generality, we assume that the bottom side of $R'$ is a segment
on the line $\ell$ supporting the bottom side of $R'_0$.
Then, $R'$ can be found by enumerating all maximal empty rectangles
whose bottom side lies on $\ell$, testing if each contains $R'_0$,
and, if so, checking its area.
Fortunately, Orlowski~\cite[Section 2]{o-nalerp-90} showed how to
enumerate those maximal empty rectangles whose bottom side lies on $\ell$
in $O(n)$ time, after sorting the points in $P$ in their $x$- and $y$-coordinates.
Thus, we conclude the following.
\begin{theorem} \label{thm:AW-R-f}
 A minimum-area minimum-width rectangular annulus enclosing $P$
 in a fixed orientation
 can be computed in $O(n \log n)$ time.
\end{theorem}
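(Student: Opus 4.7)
The plan is to combine the characterization of minimum-width annuli from Lemma~\ref{lem:fixed_orient_minwidth_ra} with the largest-empty-rectangle machinery already used for Theorem~\ref{thm:A-R-f}, reducing the bicriteria problem to a single ``anchored'' largest empty rectangle computation. Let $R$ be the smallest enclosing axis-parallel rectangle of $P$; by Lemma~\ref{lem:fixed_orient_minarea_ra} this is the outer rectangle of the optimum. First I would compute, in $O(n)$ time, the minimum-width uniform rectangular annulus $A_0$ with outer rectangle $R$, following Abellanas et al.~\cite{ahimpr-bfr-03}; call its width $w$ and its inner rectangle $R'_0$. By Lemma~\ref{lem:fixed_orient_minwidth_ura} the value $w$ is exactly the minimum width over all axis-parallel rectangular annuli enclosing $P$, so every candidate minimum-width annulus has outer rectangle $R$ and an inner rectangle $R'$ with $R'_0 \subseteq R'$ and $R' \cap P = \emptyset$. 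Minimizing the area of the annulus is equivalent to maximizing $\area(R')$ among such $R'$.

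Next I would show the structural claim that in an optimum $R'$ at least one side is supported by the corresponding side of $R'_0$. This uses that $w$ is the minimum achievable width: if every side of $R'$ were strictly interior to the corresponding side of $R'_0$, then the distance from that side to the nearest side of $R$ would exceed $w$, contradicting minimum-width; so at least one side of $R'$ must coincide with the supporting line $\ell$ of the corresponding side of $R'_0$. By symmetry assume it is the bottom side.

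Given this anchoring, the search reduces to enumerating all maximal empty rectangles among $P$ whose bottom side lies on $\ell$, contained in $R$, and testing each for (a) containing $R'_0$ and (b) having the other three sides at distance at most $w$ from the respective sides of $R$; among those passing both tests I would return the one of largest area. Orlowski~\cite[Section 2]{o-nalerp-90} enumerates the set of such line-anchored maximal empty rectangles in $O(n)$ time, after an $O(n \log n)$ sort of $P$ by $x$- and $y$-coordinates, and there are only $O(n)$ of them. Repeating this anchored enumeration for each of the four sides of $R'_0$ and taking the best result therefore gives an $O(n \log n)$ algorithm overall.

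The main obstacle is the structural step: verifying that an optimum $R'$ truly must share a side with $R'_0$ rather than merely containing it, since $R'$ need not be uniform relative to $R$. The argument hinges on $w$ being precisely the minimum width so that any ``shrinkage'' of all four sides of $R'$ inside $R'_0$ creates a width on one side exceeding $w$; pinning this down cleanly, together with checking that Orlowski's enumeration indeed produces every anchored maximal empty rectangle needed, is the part that requires care, while the rest is a routine combination of existing tools.
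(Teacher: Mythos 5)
Your proposal is essentially the paper's own proof: same outer rectangle $R$, same reduction to a maximum-area empty rectangle containing the inner rectangle $R'_0$ of the minimum-width uniform annulus, the same anchoring observation that some side of the optimal $R'$ must lie on a supporting line of $R'_0$, and the same use of Orlowski's $O(n)$ enumeration of line-anchored maximal empty rectangles after an $O(n\log n)$ sort. The only slip is in your justification of the anchoring step, where the inequality runs the wrong way: since $R'\supseteq R'_0$, if all four sides of $R'$ were \emph{strictly outside} the corresponding sides of $R'_0$ then the annulus with inner rectangle $R'$ would have width strictly \emph{less} than $w$, contradicting the minimality of $w$ (your test (b) is also redundant, as $R'\supseteq R'_0$ already forces all four widths to be at most $w$).
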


A minimum-width minimum-area rectangular annulus can be found by
checking all largest empty rectangles for $P$.
However, known algorithms~\cite{as-facler-87,cdl-cler-86}
that do not enumerate all maximal empty rectangles
do not guarantee to find \emph{all} largest empty rectangles.
Hence, we apply the algorithm by Orlowski~\cite{o-nalerp-90}
and check every maximal empty rectangle and the width of the corresponding annulus.
This takes $O(n\log n + r)$ time.
\begin{theorem} \label{thm:WA-R-f}
 A minimum-width minimum-area rectangular annulus enclosing $P$
 in a fixed orientation
 can be computed in $O(n \log n + r)$ time,
 where $r$ denotes the number of maximal empty rectangles among $P$.
\end{theorem}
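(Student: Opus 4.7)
The plan is to build on the structural characterization already established for minimum-area rectangular annuli. By Lemma~\ref{lem:fixed_orient_minarea_ra}, every minimum-area rectangular annulus enclosing $P$ must have as its outer rectangle the unique smallest enclosing axis-parallel rectangle $R$ for $P$. With the outer rectangle fixed, minimizing the area is equivalent to maximizing the area of the inner rectangle $R'$ subject to $R'\subseteq R$ and $R'$ being empty of points of $P$; that is, the inner rectangles of minimum-area annuli are exactly the \emph{largest} empty axis-parallel rectangles among $P$ contained in $R$. Hence, the minimum-width minimum-area problem reduces to the following: enumerate all largest empty rectangles and return one that minimizes $\width(A)$, where $A$ is the annulus formed by $R$ and the candidate inner rectangle.

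My next step is to observe that every largest empty rectangle is, in particular, a maximal empty rectangle, so it suffices to enumerate all maximal empty rectangles and filter. I would invoke Orlowski's algorithm~\cite{o-nalerp-90} to list all $r$ maximal empty rectangles among $P$ inside $R$ in $O(n\log n + r)$ time. During a single scan over this list I would compute the area of each rectangle, keep track of the maximum area $A^*$ seen, and simultaneously for each rectangle of area $A^*$ compute its annulus width (the maximum of the top/bottom/left/right gaps to $R$, each obtainable in $O(1)$ from the stored coordinates). A standard two-pass or online bookkeeping maintains both the current best area and, among rectangles tied for best area, the smallest width. Since each per-rectangle operation is $O(1)$, the total running time is $O(n\log n + r)$.

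The one subtlety worth flagging is why we cannot simply invoke one of the faster non-enumerative largest-empty-rectangle algorithms~\cite{cdl-cler-86, as-facler-87}: those algorithms return a single largest empty rectangle, not the set of all of them, and there can be several largest empty rectangles whose widths (as inner rectangles of annuli with outer rectangle $R$) differ. Since the bicriteria optimization needs to compare widths among \emph{all} area-optimal inner rectangles, we really do need the full enumeration provided by Orlowski's approach, which is the source of the $+\,r$ term in the running time. Combining the reduction above with this enumeration yields the claimed $O(n\log n + r)$ bound.
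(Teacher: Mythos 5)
Your proposal is correct and follows essentially the same route as the paper: fix the outer rectangle as the smallest enclosing rectangle, reduce to enumerating all maximal empty rectangles via Orlowski's $O(n\log n + r)$ algorithm, and filter for maximum area and then minimum annulus width, noting (as the paper does) that the faster non-enumerative largest-empty-rectangle algorithms cannot be used because they do not report all area-optimal candidates. No gaps to report.
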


%%%%%%%%%%%%%%%%%%%%%%%%%%%%%%%%%%%%%%%%%%%%%%%%%%%%%
\section{Rectangular Annuli in Arbitrary Orientation} \label{sec:rect}
%%%%%%%%%%%%%%%%%%%%%%%%%%%%%%%%%%%%%%%%%%%%%%%%%%%%

In this section, we consider the problem of computing
an optimal rectangular annulus in arbitrary orientation for different objectives.

\begin{figure}[tb]
\begin{center}
\includegraphics[width=.8\textwidth]{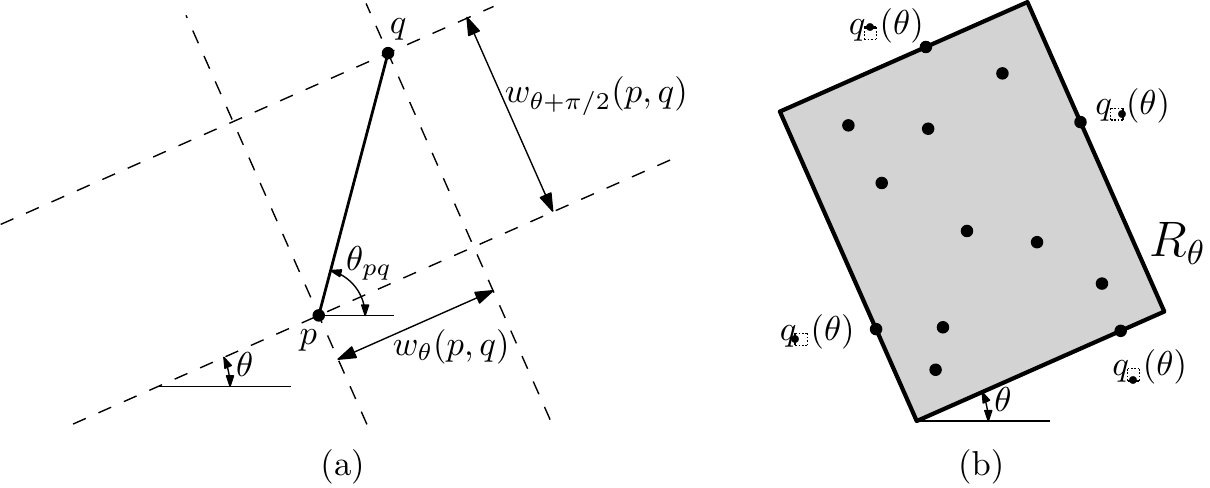}
\end{center}
%\vspace{-5mm}
\caption{For $\theta \in [0, \pi/2)$,
 (a) $w_\theta(p, q)$ and $w_{\theta+\pi/2}(p,q)$
 for two points $p, q\in \Plane$, and
 (b) the smallest enclosing $\theta$-aligned rectangle $R_\theta$ for points $P$.
 }
% \vspace{-5mm}
\label{fig:w_R}
\end{figure}

We will often discuss the distance between
the orthogonal projections of $p$ and $q$ onto a $\theta$-aligned line
for $\theta \in [0, \pi)$,
denoted by $w_\theta(p, q)$.
It is not difficult to see that
 \[ w_\theta(p, q) = |\seg{pq}| \cdot | \cos (\theta_{pq} - \theta)|,\]
where $\theta_{pq} \in [0,\pi)$ denotes the orientation of $\seg{pq}$.
Intuitively, one may regard that
$w_\theta(p,q)$ is the horizontal distance
and $w_{\theta+\pi/2}(p,q)$ is the vertical distance between $p$ and $q$
in orientation $\theta \in [0,\pi/2)$.
See \figurename~\ref{fig:w_R}(a).

%%%

As the side length of a $\theta$-aligned rectangle or square of our interests
is often described by a combination of $w_\theta$ with different
pairs of points $(p, q)$,
we first introduce some basic properties of such functions of variable $\theta$.
A function of a particular form
$a \sin(\omega \theta + \phi) + b$
for some constants $a, b, \omega, \phi \in \Real$
is called a \emph{sinusoidal function},
where $a$, $b$, $\omega$ and $\phi$ are called its \emph{amplitude},
\emph{base}\footnote{%
The constant $b$ is usually called the \emph{center amplitude}
in wave analysis, while here we call it base for simplicity.},
\emph{(angular) frequency} and
\emph{phase}, respectively.
Note that $\cos(\theta)$ is also sinusoidal with base $0$
as $\cos(\theta) = \sin(\theta + \pi/2)$.
Observe that for a fixed pair of points $p, q\in \Plane$,
$w_\theta(p, q)$
is a piecewise sinusoidal function of base $0$ and frequency $1$
over $\theta \in [0, \pi)$
with at most one breakpoint at $\theta = \theta_{pq} - \pi/2$ or $\pi/2-\theta_{pq}$.

The following properties of sinusoidal functions are well known
and easily derived.
\begin{observation} \label{obs:sinusoidal_sum}
 The sum of two sinusoidal functions of base $0$ and frequency $\omega$
 is also sinusoidal of base $0$ and frequency $\omega$.
 Therefore, the graphs of two sinusoidal functions of equal frequency
 cross at most once in domain $[0, \pi)$.
 %\hfill \copy\ProofSym
\end{observation}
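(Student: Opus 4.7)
The plan is to obtain the first (algebraic) claim by a direct angle-addition calculation, and then to read off the second (crossing) claim as an immediate corollary.

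For the first part, I would write the two sinusoidals as $f_i(\theta) = a_i \sin(\omega \theta + \phi_i)$ for $i=1,2$ and expand each via the identity $\sin(\omega\theta + \phi_i) = \cos\phi_i \sin(\omega\theta) + \sin\phi_i \cos(\omega\theta)$. Summing gives
\[ f_1(\theta) + f_2(\theta) = \alpha\sin(\omega\theta) + \beta\cos(\omega\theta), \]
where $\alpha := a_1\cos\phi_1 + a_2\cos\phi_2$ and $\beta := a_1\sin\phi_1 + a_2\sin\phi_2$. Setting $A := \sqrt{\alpha^2 + \beta^2}$ and choosing $\Phi$ with $\cos\Phi = \alpha/A$, $\sin\Phi = \beta/A$ whenever $A > 0$, the sum rewrites as $A\sin(\omega\theta + \Phi)$, which is sinusoidal of base $0$ and frequency $\omega$. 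The degenerate case $A = 0$ yields the identically-zero function, which is trivially sinusoidal of base $0$ and any frequency.

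For the crossing statement, I would apply the first part to $f_1$ and $-f_2$, observing that $-a_2\sin(\omega\theta + \phi_2) = a_2\sin(\omega\theta + \phi_2 + \pi)$, so that the difference $f_1 - f_2$ is itself sinusoidal of base $0$ and frequency $\omega$, say $A\sin(\omega\theta + \Phi)$. When $A \neq 0$, its zeros occur precisely at those $\theta$ with $\omega\theta + \Phi \in \pi\mathbb{Z}$, which are spaced $\pi/\omega$ apart; for the unit frequency $\omega = 1$ that governs the $w_\theta$ functions used throughout the paper, at most one such $\theta$ lies in $[0, \pi)$. Hence the graphs of $f_1$ and $f_2$ meet at most once in $[0,\pi)$ unless the two functions coincide identically.

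The only real subtlety is bookkeeping the degenerate case $A = 0$ (where the two functions agree everywhere and one should interpret the statement modulo coincidence) and making explicit that the second assertion implicitly uses $\omega = 1$; once the angle-addition identity of the first part is in hand, the crossing bound is immediate from locating the zeros of a single sine on a half-period.
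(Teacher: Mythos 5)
Your proof is correct. The paper gives no proof of this observation at all---it is stated as ``well known and easily derived''---and your angle-addition derivation (expand each $a_i\sin(\omega\theta+\phi_i)$ into $\sin(\omega\theta)$ and $\cos(\omega\theta)$ components, recombine into a single sinusoid, then locate the zeros of the difference) is precisely the standard argument the paper is implicitly relying on. Your two caveats are also well taken: the ``at most one crossing in $[0,\pi)$'' conclusion genuinely requires $\omega \le 1$ (the paper only ever applies it to frequency-$1$ functions arising from $w_\theta$, so no harm is done, but the statement as written is too broad), and the degenerate case where the two functions coincide identically must be excluded or interpreted separately.
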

\begin{observation} \label{obs:sinusoidal_product}
 The product of two sinusoidal functions of base $0$ and frequency $\omega$
 is equal to
 a sinusoidal functions of frequency $2\omega$.
 Specifically, it holds that
 \[ a_1\sin(\omega \theta + \phi_1)\cdot a_2\sin(\omega\theta + \phi_2) =
 \frac{a_1a_2}{2}\left(\sin\left(2\omega\theta + \phi_1 +\phi_2 - \frac{\pi}{2}\right) + \cos(\phi_1-\phi_2)\right),\]
 for any constants $a_1, a_2, \phi_1, \phi_2, \omega \in \Real$.
\end{observation}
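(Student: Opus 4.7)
The plan is to prove the identity by a direct application of the standard product-to-sum trigonometric formula, namely
\[
\sin A \sin B = \tfrac{1}{2}\bigl(\cos(A-B) - \cos(A+B)\bigr).
\]
I would pull the constants $a_1, a_2$ out of the product, set $A := \omega\theta + \phi_1$ and $B := \omega\theta + \phi_2$, and compute $A - B = \phi_1 - \phi_2$ (independent of $\theta$, which is exactly why the $\cos(\phi_1-\phi_2)$ term comes out as a pure constant) and $A + B = 2\omega\theta + \phi_1 + \phi_2$. This already shows that the only $\theta$-dependent part has frequency $2\omega$, confirming the frequency-doubling claim.

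Next, to bring the right-hand side into the $\sin$-form stated in the observation rather than the $-\cos$-form produced by the product-to-sum identity, I would use the co-function identity $-\cos(x) = \sin(x - \pi/2)$ with $x = 2\omega\theta + \phi_1 + \phi_2$. Applying this to the $-\cos(A+B)$ term yields $\sin(2\omega\theta + \phi_1 + \phi_2 - \pi/2)$, matching the displayed expression exactly. Multiplying back by $a_1 a_2$ gives the claimed identity.

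There is no real obstacle here beyond keeping track of signs and phase shifts; the statement is a one-line consequence of two textbook identities. If one prefers, the same result can be obtained by expanding both factors via the angle-addition formula in terms of $\sin(\omega\theta), \cos(\omega\theta)$ and then using $\sin^2 = \tfrac{1-\cos(2\cdot)}{2}$, $\cos^2 = \tfrac{1+\cos(2\cdot)}{2}$ and $\sin\cos = \tfrac{1}{2}\sin(2\cdot)$, but the product-to-sum route is the shortest. Since Observation~\ref{obs:sinusoidal_sum} already handles the sum case by the same spirit of argument, these two observations together will give the algebraic toolkit needed later in the paper for manipulating side lengths of $\theta$-aligned rectangles as functions of $\theta$.
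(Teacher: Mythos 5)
Your derivation is correct: the product-to-sum identity gives $\frac{a_1a_2}{2}\left(\cos(\phi_1-\phi_2)-\cos(2\omega\theta+\phi_1+\phi_2)\right)$, and the co-function identity $-\cos(x)=\sin(x-\pi/2)$ converts this to exactly the displayed form. The paper omits the proof entirely (stating the observation is ``well known and easily derived''), and your two-step argument is precisely the standard derivation it has in mind.
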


\subsection{Uniform rectangular annuli}
%Let $P$ be a given set of $n$ points in $\Plane$.
For any orientation $\theta \in [0, \pi/2)$,
let $R_\theta$ be the smallest $\theta$-aligned rectangle enclosing $P$.
The rectangle $R_\theta$ is determined by the
leftmost, rightmost, topmost, and bottommost points in orientation $\theta$
among those in $P$, in such a way that each side of $R_\theta$ contains
each of these four points.
Let $\extr_\mt(\theta), \extr_\mb(\theta), \extr_\ml(\theta), \extr_\mr(\theta) \in P$
be the topmost, bottommost, leftmost, and rightmost points in orientation $\theta$
among those in $P$.
(If there are two or more topmost, bottommost, leftmost, or rightmost points
in $\theta$, then we choose an arbitrary one.)
See \figurename~\ref{fig:w_R}(b).
For each $p\in P$ and $\theta \in [0, \pi/2)$, we define a function $f_p \colon [0,\pi/2) \to \Real$ to be the ``distance''
from the boundary of $R_\theta$.
More precisely, $f_p(\theta)$ takes the minimum among the perpendicular distances
from every side of $R_\theta$ to $p$:
 \[ f_p(\theta) := \min\{w_{\theta+\pi/2}(p, \extr_\mt(\theta)), w_{\theta+\pi/2}(p, \extr_\mb(\theta)),
  w_{\theta}(p, \extr_\ml(\theta)), w_{\theta}(p, \extr_\mr(\theta))\}.\]

Now, we let $w(\theta)$ be the width of a minimum-width $\theta$-aligned
rectangular annulus enclosing $P$.
By Lemma~\ref{lem:fixed_orient_minwidth_ra}, there exists
a minimum-width uniform $\theta$-aligned rectangular annulus
whose outer rectangle is $R_\theta$.
Note that such a uniform rectangular annulus is unique
and we denote it by $A_u(\theta)$.
Moreover, Lemma~\ref{lem:fixed_orient_minwidth_ura} states that
$A_u(\theta)$ is also a minimum-width $\theta$-aligned rectangular annulus.
Hence, we have
\[ w(\theta) = \width(A_u(\theta)) = \max_{p\in P} f_p(\theta).\]

The above discussion already results in an $O(n^2 \log n)$-time
algorithm for the minimum-width rectangular annulus problem
over all orientations $\theta \in [0,\pi/2)$.
The minimum-width problem can be solved by
computing $\min_{\theta\in [0,\pi/2)} w(\theta)$ and the corresponding annulus.
Observe that for an interval $I \subset [0, \pi/2)$ such that
the tuple of four points
 $(\extr_\mt(\theta), \extr_\mb(\theta), \extr_\ml(\theta), \extr_\mr(\theta))$
is fixed,
each function $f_p$ restricted to $I$ is piecewise sinusoidal
with at most eight pieces.
In such an interval $I$, minimizing $w$ over $\theta \in I$
can be done by computing the upper envelope of functions $f_p$
and finding a lowest point in the envelope.
\begin{lemma} \label{lem:width_ra}
 Let $I \subset [0,\pi/2)$ be an interval in which the four points
 $\extr_\mt(\theta), \extr_\mb(\theta), \extr_\ml(\theta), \extr_\mr(\theta)$
 determining $R_\theta$ are fixed.
 Then, the function $w$ over $I$ is a piecewise frequency-$1$ sinusoidal function
 of complexity $O(n \alpha(n))$ and can be explicitly computed in $O(n \log n)$ time.
\end{lemma}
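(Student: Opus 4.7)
The plan is to reduce the lemma to a standard upper-envelope computation on a family of sinusoidal arcs. First I fix the four extreme points $\extr_\mt, \extr_\mb, \extr_\ml, \extr_\mr \in P$ that determine $R_\theta$ throughout $I$; by assumption they are independent of $\theta \in I$. For each $p \in P$, the four perpendicular distances appearing in the definition of $f_p$ have the form $w_\theta(p, q)$ or $w_{\theta+\pi/2}(p, q)$ with $q$ one of these fixed extreme points, and each such expression equals $|\seg{pq}|\cdot|\cos(\theta_{pq}-\theta)|$ up to the $\pi/2$-shift. This is the absolute value of a sinusoidal function of base $0$ and frequency $1$, and so on $I$ it consists of at most two sinusoidal arcs of base $0$ and frequency $1$. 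Taking the pointwise minimum over the four distances, $f_p$ restricted to $I$ is a piecewise frequency-$1$ sinusoidal function consisting of $O(1)$ arcs.

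By Lemmas~\ref{lem:fixed_orient_minwidth_ra} and~\ref{lem:fixed_orient_minwidth_ura}, the identity $w(\theta) = \width(A_u(\theta)) = \max_{p\in P} f_p(\theta)$ holds on $I$, so $w$ is the upper envelope of $n$ piecewise sinusoidal functions totaling $O(n)$ arcs. Observation~\ref{obs:sinusoidal_sum} guarantees that any two sinusoidal functions of base $0$ and frequency $1$ cross at most once in $[0,\pi)$; hence any two of our arcs intersect in at most one point. The classical Davenport--Schinzel theorem then bounds the combinatorial complexity of the upper envelope by $\lambda_3(n) = O(n\alpha(n))$, and each maximal piece of the envelope is itself a frequency-$1$ sinusoidal arc, proving the complexity assertion.

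For the constructive part, I apply the standard divide-and-conquer algorithm for upper envelopes of well-behaved arcs: recursively compute the upper envelopes of two halves of $P$, each of complexity $O(n\alpha(n))$, and then merge them using the fact that any pair of arcs meets in $O(1)$ points and admits constant-time primitive operations (evaluation, intersection). This produces $w$ explicitly on $I$ in $O(n\log n)$ time. The main subtlety I anticipate lies in the first step: namely, verifying carefully that for every configuration of the fixed extreme tuple on $I$, the four absolute-value breakpoints and pairwise crossings among the four distance functions produce only $O(1)$ arcs in the minimum $f_p$, and that these arcs inherit the base-$0$, frequency-$1$ form needed for Observation~\ref{obs:sinusoidal_sum} to apply to the combined collection. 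Once this constant-arc bound is secure, the Davenport--Schinzel complexity estimate and the envelope-construction algorithm are routine.
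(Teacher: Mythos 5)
Your argument is correct and follows essentially the same route as the paper: write $w$ as the upper envelope of the $n$ functions $f_p$, note each $f_p$ contributes $O(1)$ frequency-$1$ sinusoidal arcs on $I$ (the paper counts at most eight), invoke the single-crossing property from Observation~\ref{obs:sinusoidal_sum} to get the $\lambda_3(n)=O(n\alpha(n))$ Davenport--Schinzel bound, and compute the envelope by divide-and-conquer in $O(n\log n)$ time. The subtlety you flag about the constant arc count per $f_p$ is handled the same way implicitly in the paper, so there is no substantive difference.
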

\begin{proof}
Note that function $w$ is the upper envelope of $n$ functions $f_p$ for $p\in P$.
Since $f_p$ over $I$ consists of at most eight sinusoidal curves
for each $p\in P$,
function $w$ over $I$ is the upper envelope of $O(n)$ curves
that are sinusoidal of frequency $1$.
By Observation~\ref{obs:sinusoidal_sum}, any two of these curves
cross at most once.
Thus, the upper envelope $w$ on $I$ can be explicitly computed in
$O(n \log n)$ time~\cite{h-fuenls-89},
while $w$ consists of $O(n \alpha(n))$ sinusoidal pieces~\cite{sa-dsstga-95}.
\end{proof}
Finally, as observed earlier by Toussaint~\cite{t-sgprc-83}
and also applied by Mukherjee et al.~\cite{mmkd-mwra-13} and Bae~\cite{b-cmwsaao-18},
the domain $[0, \pi/2)$ is decomposed into at most $O(n)$ such intervals $I$.
\begin{lemma}[Toussaint~\cite{t-sgprc-83}]
 \label{lem:primary_interval}
 There are at most $O(n)$ changes in the tuple of four points
 $(\extr_\mt(\theta),$ $\extr_\mb(\theta),$ $\extr_\ml(\theta),$ $\extr_\mr(\theta))$
 as a function of $\theta \in [0,\pi/2)$.
 That is, the domain $[0, \pi/2)$ is partitioned into $O(n)$ maximal intervals,
 in each of which the four points stay constant.
\end{lemma}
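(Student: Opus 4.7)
The plan is to exploit the convex hull of $P$. First, observe that for any orientation $\theta$, each of the four extreme points $\extr_\mt(\theta)$, $\extr_\mb(\theta)$, $\extr_\ml(\theta)$, $\extr_\mr(\theta)$ must lie on the boundary of $\conv(P)$, since a point strictly interior to the hull cannot be extreme in any direction. Let $h \leq n$ denote the number of vertices of $\conv(P)$; these are the only candidates that ever appear in the tuple.

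Next, I would analyze each of the four components of the tuple separately. Consider for instance $\extr_\mr(\theta)$, the extreme vertex of $\conv(P)$ in the direction of unit vector $(\cos\theta, \sin\theta)$. A standard rotating-calipers argument shows that as the direction vector rotates continuously through a full angle of $2\pi$, the extreme vertex traverses each of the $h$ hull vertices exactly once, with transitions occurring precisely at the orientations perpendicular to the hull edges. Restricting the rotation of the direction vector to the arc of length $\pi/2$ swept out while $\theta$ ranges over $[0, \pi/2)$, the function $\extr_\mr(\theta)$ therefore changes at most $h$ times. The identical argument applies to $\extr_\ml$, $\extr_\mt$, and $\extr_\mb$, each of which is the extreme vertex in a direction that simply differs from the right direction by a constant angular offset.

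Summing over the four components, the tuple changes at most $4h \leq 4n$ times, yielding $O(n)$ maximal intervals of constancy as claimed. The subdivision points of $[0, \pi/2)$ are exactly the orientations $\theta$ at which a $\theta$-aligned or $(\theta+\pi/2)$-aligned line supports a hull edge, so they can be read off in $O(h)$ additional time once the hull is available. Since computing $\conv(P)$ takes $O(n \log n)$ time, the partition of $[0, \pi/2)$ into its $O(n)$ maximal intervals is also computable in $O(n \log n)$ time.

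The only real subtlety will be handling degenerate orientations where two hull vertices are simultaneously extreme (i.e.\ a hull edge is exactly $\theta$- or $(\theta+\pi/2)$-aligned); this is resolved by the same tie-breaking convention already invoked in the paragraph preceding the lemma, and does not affect the asymptotic bound on the number of breakpoints. Beyond that, the argument is a direct application of the convex-hull rotating-calipers fact and should require no additional machinery.
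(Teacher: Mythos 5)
Your proof is correct. The paper states this lemma as a citation to Toussaint's rotating-calipers paper and gives no proof of its own, and your argument---each of the four extreme-point functions is the hull vertex extreme in a direction offset from $\theta$ by a constant, so each changes at most $h$ times as the direction sweeps an arc, giving at most $4h = O(n)$ breakpoints in total---is exactly the standard rotating-calipers reasoning that the cited reference establishes.
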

Each such interval described in Lemma~\ref{lem:primary_interval}
is called \emph{primary}.
Therefore, in $O(n^2 \log n)$ time, we can compute
a minimum-width rectangular annulus enclosing $P$ over all orientations\footnote{%
The $O(n^2 \log n)$-time algorithm presented
by Mukherjee et al.~\cite{mmkd-mwra-13}
is slightly different, but its outline and approach is almost identical to
that we describe here.
}.

Note that the above algorithm
indeed finds \emph{all} minimum-width uniform rectangular annulus enclosing $n$
points over all orientations.
Hence, by picking one with minimum area among all of them,
we can find a minimum-area minimum-width uniform rectangular annulus.
\begin{theorem} \label{thm:AW-uR-a}
 A minimum-area minimum-width uniform rectangular annulus enclosing $n$ points
 over all orientations can be computed in $O(n^2 \log n)$ time.
\end{theorem}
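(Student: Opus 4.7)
The plan is to extend the $O(n^2 \log n)$-time algorithm from the preceding discussion, which explicitly constructs the width function $w(\theta) = \width(A_u(\theta))$ over $\theta \in [0, \pi/2)$, so that it also identifies an orientation that minimizes $\area(A_u(\theta))$ subject to $w(\theta) = w^*$, where $w^* := \min_{\theta \in [0,\pi/2)} w(\theta)$. Since $A_u(\theta)$ is uniquely determined by $\theta$ (its outer rectangle is $R_\theta$ and each side is shrunk inward by $w(\theta)$), this reduces to choosing a single optimal $\theta$ from the set $\Theta^* := \{\theta \in [0,\pi/2) : w(\theta) = w^*\}$.

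Write $a(\theta) := w_\theta(\extr_\ml(\theta), \extr_\mr(\theta))$ and $b(\theta) := w_{\theta+\pi/2}(\extr_\mt(\theta), \extr_\mb(\theta))$ for the side lengths of $R_\theta$. A direct calculation gives
\[
\area(A_u(\theta)) = a(\theta)b(\theta) - \bigl(a(\theta) - 2w(\theta)\bigr)\bigl(b(\theta) - 2w(\theta)\bigr) = 2w(\theta)\bigl(a(\theta)+b(\theta)\bigr) - 4w(\theta)^2.
\]
On every primary interval the tuple $(\extr_\mt, \extr_\mb, \extr_\ml, \extr_\mr)$ is fixed, so each of $a$ and $b$ is a single frequency-$1$ sinusoidal piece of base $0$, while the explicit representation of $w$ produced by Lemma~\ref{lem:width_ra} has complexity $O(n\alpha(n))$.

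I would first run the $O(n \log n)$-time procedure of Lemma~\ref{lem:width_ra} on each of the $O(n)$ primary intervals, then sweep through the resulting $O(n^2 \alpha(n))$ sinusoidal pieces of $w$ to extract the value $w^*$ and flag every piece on which $w$ attains $w^*$. The set $\Theta^*$ decomposes into isolated points (envelope breakpoints or interior minima of pieces that happen to sit at height $w^*$) together with possibly a few subintervals on which an entire envelope piece is constantly equal to $w^*$. At each isolated $\theta_0 \in \Theta^*$ I would evaluate $\area(A_u(\theta_0))$ in $O(1)$ using the displayed formula. On each subinterval $J \subseteq \Theta^*$, $w \equiv w^*$ is constant, so minimizing the area on $J$ reduces to minimizing $a(\theta)+b(\theta)$ on $J$; by Observation~\ref{obs:sinusoidal_sum}, $a+b$ is again a frequency-$1$ sinusoidal function of base $0$, whose minimum on $J$ can be located in $O(1)$ time.

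The dominant cost remains the construction of $w$, which is $O(n \log n)$ per primary interval and hence $O(n^2 \log n)$ overall; the subsequent sweep to identify $w^*$ and to evaluate the area at each candidate adds only $O(n^2\alpha(n))$. The main technical subtlety is the subinterval case, where one must notice that an envelope piece can coincide with $w^*$ along a full interval and then minimize the corresponding sinusoidal $a+b$; both tasks are immediate from the explicit description of $w$ already returned by the envelope algorithm. Taking the annulus of smallest area among all candidates yields a minimum-area minimum-width uniform rectangular annulus in the claimed $O(n^2 \log n)$ time.
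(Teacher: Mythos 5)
Your proposal is correct and follows essentially the same route as the paper: the paper simply observes that the $O(n^2\log n)$-time construction of the explicit width function $w$ over all primary intervals identifies \emph{all} minimum-width uniform rectangular annuli, and one then picks the one of minimum area among them. Your write-up supplies the same argument with more detail (the explicit area formula and the treatment of isolated minimizers versus subintervals on which $w\equiv w^*$), all of which is consistent with the paper's intended proof.
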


Now, we turn to the minimum-area problem for uniform rectangular annuli.
Recall Theorem~\ref{thm:A-uR-f} stating that $A_u(\theta)$
is indeed the unique minimum-area $\theta$-aligned uniform rectangular annulus
enclosing $P$.
So, we want to minimize $\area(A_u(\theta))$ over $\theta \in [0,\pi/2)$.
The area $\area(A_u(\theta))$ can be represented as follows:
\begin{align*}
 \area(A_u(\theta)) & = \peri(R_\theta) \cdot w(\theta) - 4(w(\theta))^2 \\
                 & = 2(w_{\theta+\pi/2}(\extr_\mt(\theta), \extr_\mb(\theta))+w_{\theta}(\extr_\ml(\theta), \extr_\mr(\theta)))\cdot w(\theta) - 4(w(\theta))^2,
\end{align*}
where $\peri(\cdot)$ denotes the perimeter of a rectangle.

Consider any primary interval $I\subset [0,\pi/2)$.
In $I$, the four extreme points $\extr_\mt = \extr_\mt(\theta)$, $\extr_\mb = \extr_\mb(\theta)$, $\extr_\ml = \extr_\ml(\theta)$, and $\extr_\mr = \extr_\mr(\theta)$ are fixed,
so each of the terms $w_{\theta+\pi/2}(\extr_\mt, \extr_\mb)$ and
$w_{\theta}(\extr_\ml, \extr_\mr)$ are piecewise base-$0$ frequency-$1$
sinusoidal, and so is their sum by Observation~\ref{obs:sinusoidal_sum}.
Further, Lemma~\ref{lem:width_ra} tells us that
$w(\theta)$ is also piecewise base-$0$ frequency-$1$ sinusoidal over $I$.
From the above discussion, we then observe that $\area(A_u(\theta))$
is piecesise frequency-$2$ sinusoidal by Observation~\ref{obs:sinusoidal_product}.
Hence, we can compute an explicit description of the function $\area(A_u(\theta))$
over $\theta \in I$, provided that the description of function $w$
has been computed,
in time proportional to the number of breakpoints of $w$.
\begin{theorem} \label{thm:A-uR-a}
 A minimum-area uniform rectangular annulus enclosing $n$ points
 over all orientations can be computed in $O(n^2 \log n)$ time.
\end{theorem}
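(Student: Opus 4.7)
The plan is to piece together the ingredients already established in the excerpt: the uniqueness characterization of Theorem~\ref{thm:A-uR-f}, the sinusoidal analysis tools of Observations~\ref{obs:sinusoidal_sum} and~\ref{obs:sinusoidal_product}, the structure of $w(\theta)$ from Lemma~\ref{lem:width_ra}, and the $O(n)$-primary-interval decomposition of Lemma~\ref{lem:primary_interval}. By Theorem~\ref{thm:A-uR-f} the optimal $\theta$-aligned uniform annulus is uniquely $A_u(\theta)$, so the task reduces to minimizing the explicit expression
\[
\area(A_u(\theta)) = 2\bigl(w_{\theta+\pi/2}(\extr_\mt(\theta),\extr_\mb(\theta)) + w_{\theta}(\extr_\ml(\theta),\extr_\mr(\theta))\bigr)\cdot w(\theta) - 4\bigl(w(\theta)\bigr)^2
\]
over $\theta \in [0,\pi/2)$.

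First I would decompose $[0,\pi/2)$ into the $O(n)$ primary intervals from Lemma~\ref{lem:primary_interval}, which can be produced in $O(n\log n)$ preprocessing time by standard rotating-calipers machinery. Within each primary interval $I$, the four extreme points $\extr_\mt,\extr_\mb,\extr_\ml,\extr_\mr$ are constant, so $\peri(R_\theta)$ is a single base-$0$ frequency-$1$ sinusoidal function by Observation~\ref{obs:sinusoidal_sum}, and $w(\theta)$ is explicitly obtained as the upper envelope in $O(n\log n)$ time by Lemma~\ref{lem:width_ra}, yielding $O(n\alpha(n))$ base-$0$ frequency-$1$ sinusoidal pieces.

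Next, I would scan the pieces of $w$ in $I$ from left to right. On each such maximal subinterval $J \subseteq I$, both $w(\theta)$ and $\peri(R_\theta)$ are single base-$0$ frequency-$1$ sinusoidal functions, so by Observation~\ref{obs:sinusoidal_product} the products $\peri(R_\theta)\cdot w(\theta)$ and $(w(\theta))^2$ are each sinusoidal of frequency $2$, and hence $\area(A_u(\theta))$ restricted to $J$ is itself sinusoidal of frequency $2$. A constant number of candidate minima in $J$ (its two endpoints and the $O(1)$ critical points of a frequency-$2$ sinusoid) can be computed in $O(1)$ time, and the global minimum is the smallest such candidate over all $O(n\alpha(n))$ pieces in all $O(n)$ primary intervals.

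The dominant cost per primary interval is the $O(n\log n)$ upper-envelope computation for $w$; summed over $O(n)$ intervals this gives $O(n^2\log n)$ total time, which matches the claimed bound. The main subtlety I expect is bookkeeping: one must verify that each piece of $w$ lies entirely inside a single piece of $\peri(R_\theta)$ on $I$ (which it does, since both are defined via the same fixed 4-tuple of extreme points and $w$'s breakpoints are a superset of the relevant ones), so that the ``single sinusoid times single sinusoid'' argument on $J$ applies without having to further split $J$. Once that is checked, the per-piece minimization and the overall running time follow immediately.
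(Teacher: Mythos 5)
Your proof is correct and follows essentially the same route as the paper: decompose $[0,\pi/2)$ into the $O(n)$ primary intervals, compute $w$ as an upper envelope in $O(n\log n)$ time per interval via Lemma~\ref{lem:width_ra}, and minimize the frequency-$2$ sinusoid $\area(A_u(\theta))$ on each of the $O(n\alpha(n))$ resulting pieces in $O(1)$ time apiece. The only cosmetic difference is that the paper sidesteps your ``subtlety'' by simply merging the $O(1)$ breakpoints of $\peri(R_\theta)$ into the sorted breakpoint list of $w$ rather than arguing that the pieces of $w$ already refine those of $\peri(R_\theta)$; either way the $O(n^2\log n)$ bound is unaffected.
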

\begin{proof}
We handle each primary interval $I\subset [0,\pi/2)$ in which
the four extreme points $\extr_\mt = \extr_\mt(\theta)$, $\extr_\mb = \extr_\mb(\theta)$, $\extr_\ml = \extr_\ml(\theta)$, and $\extr_\mr = \extr_\mr(\theta)$ are fixed.
Apply Lemma~\ref{lem:width_ra} to explicitly compute function $w$ on $I$
in $O(n\log n)$ time.
Note that $w$ is a piecewise sinusoidal function of $O(n \alpha(n))$ breakpoints.
As briefly discussed above, the term
$w_{\theta+\pi/2}(\extr_\mt, \extr_\mb)+w_{\theta}(\extr_\ml, \extr_\mr)
= \peri(R_\theta)/2$
is a piecewise sinusoidal function of $O(1)$ breakpoints.
Compute a sorted list of all breakpoints of $w$ and $\peri(R_\theta)$
in $O(n \alpha(n))$ time.
This is possible because the breakpoints of $w$ are already given sorted
from Lemma~\ref{lem:width_ra}.
We then compute the function $\area(A_u(\theta))$ between
every two consecutive breakpoints and find a minimum in there.
Note that $\area(A_u(\theta))$ between two consecutive breakpoints
is a sinusoidal function of frequency $2$ by Observation~\ref{obs:sinusoidal_product} and can be minimized in $O(1)$ time.

Consequently, we can minimize $\area(A_u(\theta))$ over $\theta \in I$
in $O(n \log n)$ time.
Since there are $O(n)$ primary intervals by Lemma~\ref{lem:primary_interval},
our algorithm takes $O(n^2 \log n)$ time in total.
\end{proof}

Remark that our algorithm described in Theorem~\ref{thm:A-uR-a}
indeed finds \emph{all} minimum-area uniform rectangular annulus enclosing $n$
points over all orientations.
Hence, by selecting one with minimum width among all of them,
we can find a minimum-width minimum-area uniform rectangular annulus.
\begin{corollary} \label{coro:WA-uR-a}
 A minimum-width minimum-area uniform rectangular annulus enclosing $n$ points
 over all orientations can be computed in $O(n^2 \log n)$ time.
\end{corollary}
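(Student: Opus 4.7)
The plan is to piggyback on the algorithm of Theorem~\ref{thm:A-uR-a}: since that algorithm effectively enumerates all local minimizers of $\area(A_u(\theta))$ over $\theta \in [0, \pi/2)$, one can, at the same asymptotic cost, also record the width of each such minimizer and report the one with smallest width among those achieving the global minimum area. By Theorem~\ref{thm:A-uR-f}, $A_u(\theta)$ is the unique minimum-area $\theta$-aligned uniform rectangular annulus enclosing $P$, so this procedure indeed inspects every minimum-area uniform rectangular annulus over all orientations.

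Concretely, I would execute the $O(n^2 \log n)$-time algorithm of Theorem~\ref{thm:A-uR-a} essentially verbatim, processing each primary interval $I$ (Lemma~\ref{lem:primary_interval}) by merging the $O(n\alpha(n))$ breakpoints of $w$ obtained from Lemma~\ref{lem:width_ra} with the $O(1)$ breakpoints of $\peri(R_\theta)/2$. On each resulting sub-interval, $\area(A_u(\theta))$ is a single frequency-$2$ sinusoidal by Observation~\ref{obs:sinusoidal_product}, so its minimum on that sub-interval is attained at $O(1)$ points (or, degenerately, on the whole sub-interval). Alongside the running minimum area $\alpha^\ast$ that the original algorithm already maintains, I would also maintain the smallest width $w^\ast$ realized at area $\alpha^\ast$. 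Whenever a new sub-interval minimum strictly beats $\alpha^\ast$, both entries are reset using the width at the new minimizer, evaluated in $O(1)$ time from the explicit form of $w$ on $I$; whenever the new minimum merely ties $\alpha^\ast$, I compare widths and keep the smaller.

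The one subtlety is a sub-interval on which $\area(A_u(\theta))$ is identically equal to the global minimum, so that a continuum of orientations are minimizers and I must extract the one with smallest $w(\theta)$. This is the step I expect to be the main obstacle, but it is harmless: on such a sub-interval $w(\theta)$ is itself a single frequency-$1$ sinusoidal (a single piece of the upper envelope produced by Lemma~\ref{lem:width_ra}), and so its minimum can be located in $O(1)$ time by elementary calculus on sinusoidals. Since per primary interval the extra bookkeeping adds only $O(1)$ work per sub-interval, and Lemma~\ref{lem:primary_interval} yields $O(n)$ primary intervals, the total running time stays $O(n^2 \log n)$, establishing the corollary.
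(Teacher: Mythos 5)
Your proposal is correct and follows the same route as the paper: the paper simply remarks that the algorithm of Theorem~\ref{thm:A-uR-a} finds \emph{all} minimum-area uniform rectangular annuli over all orientations and then selects one of minimum width among them. Your additional bookkeeping details (merging breakpoints, handling sub-intervals where the area is constant at the optimum) are a faithful elaboration of that same argument and do not change the approach or the $O(n^2\log n)$ bound.
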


\subsection{Minimum-area minimum-width rectangular annulus}
Next, we solve the minimum-area minimum-width rectangular annulus problem.
For the purpose, we specify all orientations in which the width function
$w(\theta)$ is minimized over $\theta \in [0, \pi/2)$,
and then we find a minimum-area rectangular annulus
in each of these orientations, applying the algorithm of Theorem~\ref{thm:AW-R-f}.

\begin{theorem} \label{thm:AW-R-a}
 A minimum-area minimum-width rectangular annulus enclosing $n$ points
 over all orientations can be computed in $O(n^2\log n + tn)$ time,
 where $t=O(n^2)$ denotes the number of different orientations $\theta'$
 such that $w(\theta') = \min_{\theta \in [0,\pi/2)} w(\theta)$.
\end{theorem}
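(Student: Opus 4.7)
The plan is to decouple the problem into two phases. First, compute the width function $w(\theta)$ over the whole domain and identify the set $\Theta^* \subseteq [0,\pi/2)$ of orientations attaining its minimum $w^* := \min_\theta w(\theta)$; then, for each $\theta' \in \Theta^*$, apply the fixed-orientation algorithm of Theorem~\ref{thm:AW-R-f} (with axes rotated by $\theta'$) and return the best annulus overall. This is sound because a minimum-area minimum-width rectangular annulus must have width exactly $w^*$, and conditioned on the orientation being $\theta'$ with $w(\theta') = w^*$, the fixed-orientation routine already solves the restricted problem exactly.

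For the first phase, I would use Lemma~\ref{lem:primary_interval} to partition $[0,\pi/2)$ into $O(n)$ primary intervals, and on each such interval invoke Lemma~\ref{lem:width_ra} to compute $w$ in $O(n\log n)$ time, for $O(n^2\log n)$ total. This yields a piecewise base-$0$ frequency-$1$ sinusoidal description of $w$ of combinatorial complexity $O(n^2\alpha(n))$, from which $w^*$ and $\Theta^*$ are read off by a single scan; since each element of $\Theta^*$ is either a breakpoint of $w$ or a local minimum of one of its sinusoidal pieces, $t = |\Theta^*| = O(n^2)$. In the second phase, at each $\theta' \in \Theta^*$ the outer rectangle must coincide with $R_{\theta'}$ and the width must remain $w^*$, so the task reduces, by Theorem~\ref{thm:AW-R-f}, to finding a maximum-area empty rectangle that contains the inner rectangle $R'_0$ of the uniform annulus $A_u(\theta')$ and shares at least one of its four sides with $R'_0$; Orlowski's sweep enumerates all candidate maximal empty rectangles for a given anchor side in $O(n)$ time once $P$ is presorted along the two $\theta'$-aligned axes.

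The main obstacle is driving the per-orientation cost of the second phase down to $O(n)$, which is needed to match the claimed $O(n^2\log n + tn)$ bound rather than the weaker $O(n^2\log n + tn\log n)$ produced by a naive application. The plan is to process all orientations in $\Theta^* \cap I$ together for each primary interval $I$, spending a single $O(n\log n)$ sort per interval (absorbed into the $O(n^2\log n)$ of the first phase) and then $O(n)$ time per orientation inside $I$. The delicate point is that the $\theta$-aligned sorted orders of $P$ can swap at pairwise events as $\theta$ varies inside $I$; verifying the amortization therefore requires checking that Orlowski's sweep, which depends only on the cyclic arrangement of $P$ around $R_{\theta'}$ and on $R'_0(\theta')$, can be driven by the presorted order augmented by $O(n)$ local updates per orientation. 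This is the step I expect to be the most technically involved; if it resists a clean analysis, one falls back on the $O(n\log n)$ per-orientation cost and obtains the weaker bound $O(n^2\log n + tn\log n)$, which is still $O(n^3)$ in the worst case since $t = O(n^2)$.
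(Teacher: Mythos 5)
Your overall strategy is exactly the paper's: compute the explicit description of $w$ in $O(n^2\log n)$ time via Lemmas~\ref{lem:primary_interval} and~\ref{lem:width_ra}, extract the set $\Theta^*$ of minimizing orientations, and run the fixed-orientation routine of Theorem~\ref{thm:AW-R-f} at each one. However, two quantitative claims that constitute the content of the theorem are left open in your write-up, and both are closed in the paper by arguments you should supply. First, your bound on $t$: counting breakpoints and per-piece local minima of the envelope $w$ only yields $t = O(n^2\alpha(n))$, since $w$ has $O(n\alpha(n))$ pieces per primary interval. The paper gets $t=O(n^2)$ by counting differently: any $\theta$ with $w(\theta)=w^*$ satisfies $f_p(\theta)=w^*$ for some $p\in P$, and within a primary interval each $f_p$ is piecewise frequency-$1$ sinusoidal with $O(1)$ pieces, so by Observation~\ref{obs:sinusoidal_sum} it meets the constant $w^*$ at $O(1)$ orientations; summing over $p$ and over the $O(n)$ primary intervals gives $O(n^2)$.

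Second, the per-orientation cost. You correctly identify that a naive call to the fixed-orientation algorithm costs $O(n\log n)$ per $\theta'$, but your proposed fix (one sort per primary interval plus ``$O(n)$ local updates per orientation'') does not account correctly: the number of adjacent transpositions in the $\theta$- and $(\theta+\pi/2)$-sorted orders between two consecutive elements of $\Theta^*$ is not bounded per orientation. The clean resolution is global rather than per-interval: sort the $O(n^2)$ orientations of lines through pairs of points of $P$ once in $O(n^2\log n)$ time, then sweep $\theta$ from $0$ to $\pi/2$ maintaining both sorted lists, charging each swap event $O(1)$ time, for $O(n^2)$ total maintenance. Since the elements of $\Theta^*$ are produced in sorted order, the presorted lists are available whenever Orlowski's $O(n)$ subroutine is invoked, giving $O(n^2\log n + tn)$ overall. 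Your fallback bound $O(n^2\log n + tn\log n)$ is correct but weaker than the theorem as stated, so these two steps are needed, not optional.
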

\begin{proof}
As described above, we first compute the explicit description of function $w$
over $\theta \in [0, \pi/2)$ in $O(n^2 \log n)$ time
as described in the previous section by Lemmas~\ref{lem:width_ra}
and~\ref{lem:primary_interval}.
Since we identify the full description of $w$,
we can specify all orientations that minimize $w$.
Let $\Theta$ be the set of all orientations $\theta \in [0, \pi/2)$ that
minimize $w$ in the same time bound.
Note that the orientations in $\Theta$ are obtained in the sorted order.
We then apply the algorithm of Theorem~\ref{thm:AW-R-f} for each $\theta \in \Theta$.
Provided the sorted lists of $P$ in orientation $\theta$ and $\theta + \pi/2$,
this takes $O(n)$ time for each $\theta \in \Theta$.
As $\theta$ continuously increases from $0$ to $\pi/2$,
the sorted lists of $P$ in orientation $\theta$ and $\theta + \pi/2$
can be maintained
in total $O(n^2 \log n)$ time
after sorting the orientations of all lines through two points in $P$
in $O(n^2 \log n)$ time.
Thus, a minimum-area minimum-width rectangular annulus enclosing $P$
can be found in $O(n^2 \log n + tn)$ time,
where $t = |\Theta|$.

Now, we bound the cardinality $t$ of $\Theta$.
As proved in Lemma~\ref{lem:width_ra},
the complexity of $w$ in a primary interval $I$ is $O(n \alpha(n))$.
This already implies that $t = O(n^2 \alpha(n))$ by Lemma~\ref{lem:primary_interval}.
Here, we prove that $t = O(n^2)$ in fact.

Let $w^* := \min_{\theta \in [0,\pi/2)} w(\theta)$ be the minimum width
over $\theta \in [0, \pi/2)$.
In each primary interval $I \subset [0, \pi/2)$, recall that
the width function $w$ is the upper envelope of $n$ functions $f_p$ and
$f_p$ is a piecewise frequency-$1$ sinusoidal function of $O(1)$ breakpoints.
By Observation~\ref{obs:sinusoidal_sum},
for each $p\in P$, there are $O(1)$ different orientations $\theta \in I$
such that $f_p(\theta) = w^*$.
Hence, we have $|\Theta \cap I| = O(n)$
for any primary interval $I \subset [0, \pi/2)$,
and $t = |\Theta| = O(n^2)$ by Lemma~\ref{lem:primary_interval}.
\end{proof}

\subsection{Minimum-area rectangular annulus}
Finally, we consider the minimum-area problem for general rectangular annuli
in arbitrary orientation.
For a fixed orientation $\theta \in [0, \pi/2)$,
recall that the problem is equivalent to the largest empty rectangle problem
in orientation $\theta$ as shown in Lemma~\ref{lem:A-R-f-LER},
so we can find a largest empty $\theta$-aligned rectangle in
the smallest enclosing $\theta$-aligned rectangle $R_\theta$ for $P$.

Let $a(\theta)$ be the area of a minimum-area $\theta$-aligned rectangular
annulus enclosing $P$.
Our problem here is to minimize $a(\theta)$ over $\theta \in [0,\pi/2)$
and to find the corresponding annulus.
By Lemma~\ref{lem:A-R-f-LER}, the inner rectangle of an optimal annulus
over all orientations is a maximal empty rectangle among $P$ in some orientation.
A maximal empty rectangle among $P$ \emph{in arbitrary orientation}
is defined to be a $\theta$-aligned empty rectangle $E_\theta$
for $\theta \in [0,\pi/2)$ such that
$E_\theta$ is contained in $R_\theta$ and
there is no other $\theta$-aligned empty rectangle $E'_\theta$
with $E_\theta \subset E'_\theta$.
This implies that each side of any maximal empty rectangle
either contains a point of $P$ or is a portion of a side of $R_\theta$.
We then represent a maximal empty rectangle $E_\theta$ by
a tuple $(p_\mt, p_\mb, p_\ml, p_\mr; \theta)$,
where $\theta$ is the orientation of $E_\theta$, and
\begin{itemize} \denseitems
 \item $p_\mt$ denotes a point in $P$ lying on the top side of $E$, if any,
  or $p_\mt = \extr_\mt(\theta)$, otherwise;
 \item $p_\mb$ denotes a point in $P$ lying on the bottom side of $E$, if any,
  or $p_\mb = \extr_\mb(\theta)$, otherwise;
 \item $p_\ml$ denotes a point in $P$ lying on the left side of $E$, if any,
  or $p_\ml = \extr_\ml(\theta)$, otherwise;
 \item $p_\mr$ denotes a point in $P$ lying on the right side of $E$, if any,
  or $p_\mr = \extr_\mr(\theta)$, otherwise.
\end{itemize}

Two maximal empty rectangles $E_\theta = (p_\mt, p_\mb, p_\ml, p_\mr; \theta)$
and $E'_{\theta'} = (p'_\mt, p'_\mb, p'_\ml, p'_\mr; \theta')$
with $\theta\leq \theta'$
are said to be \emph{combinatorially equivalent}
if $p_\ma = p'_\ma$ for each $\ma \in \{\mt, \mb,\ml, \mr\}$
and all rectangles represented by $(p_\mt, p_\mb, p_\ml, p_\mr; \theta'')$
for $\theta \leq \theta'' \leq \theta'$ are maximal empty rectangles among $P$.
We call a maximal set of combinatorially equivalent maximal empty rectangles
among $P$
a \emph{MER class}.
A MER class is thus
represented by a tuple $(p_\mt, p_\mb, p_\ml, p_\mr; J)$,
where $J \subseteq [0, \pi/2)$ is an orientation interval such that
$(p_\mt, p_\mb, p_\ml, p_\mr; \theta)$ is a maximal empty rectangle
for each $\theta \in J$.
We shall call $J$ the \emph{valid interval} of MER class $E$.
Chaudhuri et al.~\cite{cnd-lerps-03} proved that there are $O(n^3)$
different MER classes,
and presented an $O(n^3)$-time algorithm to specify all of them.

Let $E = (p_\mt, p_\mb, p_\ml, p_\mr; J)$ be a MER class,
and $E_\theta := (p_\mt, p_\mb, p_\ml, p_\mr; \theta)$ for $\theta \in J$
be the maximal empty $\theta$-aligned rectangle in $E$.
Then, the area $\area(E_\theta)$ of $E_\theta$ is
 \[ \area(E_\theta) = w_{\theta+\pi/2}(p_\mt, p_\mb)\cdot w_\theta(p_\ml, p_\mr),\]
which is the product of two sinusoidal functions of base $0$ and frequency $1$.
Define $a_E(\theta)$ be the area of the rectangular annulus
whose outer rectangle is $R_\theta$ and inner rectangle is $E_\theta$.
We then have
\begin{align*}
 a_E(\theta) & = \area(R_\theta) - \area(E_\theta) \\
           &=  w_{\theta+\pi/2}(\extr_\mt(\theta), \extr_\mb(\theta))\cdot
             w_\theta(\extr_\ml(\theta), \extr_\mr(\theta))
            - w_{\theta+\pi/2}(p_\mt, p_\mb)\cdot w_\theta(p_\ml, p_\mr).
\end{align*}
Since the four extreme points
$\extr_\mt(\theta), \extr_\mb(\theta), \extr_\ml(\theta), \extr_\mr(\theta)$
are fixed in a primary interval $I \subset [0,\pi/2)$,
the function $a_E(\theta)$ over $\theta \in I\cap J$ is indeed
a piecewise sinusoidal function of frequency $2$
with $O(1)$ breakpoints
by Observations~\ref{obs:sinusoidal_sum} and~\ref{obs:sinusoidal_product},
so can be minimized in $O(1)$ time.

From the observations above, we are ready to describe
our algorithm for the minimum-area rectangular annulus problem.
First, specify all primary intervals in $O(n \log n)$ time by
Toussaint~\cite{t-sgprc-83} and all MER classes in $O(n^3)$ time
using the algorithm by Chaudhuri et al.~\cite{cnd-lerps-03}.
For each MER class $E$ with valid interval $J$,
we call a primary interval $I$ \emph{relevant to $E$}
if $I \cap J \neq \emptyset$.
We then compute all relevant primary intervals for each MER class
in $O(n^3)$ time as follows.
\begin{lemma} \label{lem:relevant_primary_int}
 The number of pairs $(E, I)$ such that $E$ is a MER class and
 $I$ is a primary interval relevant to $E$
 is $O(n^3)$,
 and all such pairs can be computed in $O(n^3)$ time.
\end{lemma}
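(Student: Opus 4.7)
The plan is to split the lemma into a combinatorial count and an algorithmic realization, both of which reduce to the classical fixed-orientation maximal-empty-rectangle bound of $O(n^2)$.

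First, I would establish the count $O(n^3)$ via the identity
\[
 \sum_E |\{I : I \cap J_E \neq \emptyset\}|
  \;=\; |\{E\}| \;+\; \sum_{\theta^*} |\{E : \theta^* \in \mathrm{int}(J_E)\}|,
\]
where $E$ ranges over all MER classes and $\theta^*$ ranges over the interior endpoints of the primary-interval partition of $[0,\pi/2)$. This identity holds because a valid interval $J_E$ intersects exactly one more primary interval than the number of primary endpoints it contains in its interior. The first term is $O(n^3)$ by Chaudhuri et al.~\cite{cnd-lerps-03}. For the second, at any fixed orientation $\theta^*$ the MER classes with $\theta^* \in J_E$ are in bijection with the maximal empty $\theta^*$-aligned rectangles inside $R_{\theta^*}$, whose number is bounded by the classical $O(n^2)$ fixed-orientation bound of Naamad et al.~\cite{nlh-merp-84}. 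Since there are only $O(n)$ primary endpoints by Lemma~\ref{lem:primary_interval}, the sum contributes $O(n^3)$.

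Next, for the $O(n^3)$ enumeration, I would compute the primary intervals via Toussaint in $O(n \log n)$ time and run Chaudhuri et al.'s rotational-sweep algorithm in $O(n^3)$ time to list all MER classes together with their valid intervals. Because their algorithm sweeps rotationally, its start/end events for MER classes are produced already in increasing order of $\theta$. I would then interleave these $O(n^3)$ events with the $O(n)$ primary-boundary events in a single merge pass, maintaining the set $\mathcal{E}$ of currently active MER classes as a doubly linked list, with each class storing a direct handle so that insertions and deletions cost $O(1)$. On a MER-start event at $\theta = \alpha_E$, emit the pair $(E, I_{\mathrm{cur}})$ and insert $E$ into $\mathcal{E}$; on a MER-end event at $\theta = \beta_E$, remove $E$ from $\mathcal{E}$; on a primary-boundary event advancing the current primary interval to a new $I$, scan $\mathcal{E}$ and emit $(E, I)$ for every $E \in \mathcal{E}$. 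A brief case check confirms that every pair $(E, I)$ with $I \cap J_E \neq \emptyset$ is emitted exactly once.

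The hard part will be controlling the cost of the primary-boundary events, since naively $|\mathcal{E}|$ could be as large as the total number of MER classes, namely $O(n^3)$; this would only yield $O(n^4)$ overall. The saving relies exactly on the fixed-orientation bound $|\mathcal{E}| = O(n^2)$ at any primary endpoint, which limits the total boundary work to $O(n)\cdot O(n^2) = O(n^3)$, in addition to $O(1)$ work per MER-class event. A minor subtlety to verify is that Chaudhuri et al.'s output really is sorted by $\theta$; this follows from the rotational-sweep nature of their algorithm, but if one wished to treat it as a black box the events could instead be sorted along with the output, incurring no asymptotic overhead beyond $O(n^3)$ using a merge into the output order produced by the sweep.
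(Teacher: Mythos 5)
Your proposal is correct and follows essentially the same route as the paper: the count is obtained by charging one pair per MER class plus one pair per primary-endpoint crossing, bounding the crossings at each of the $O(n)$ primary endpoints by Naamad et al.'s $O(n^2)$ fixed-orientation bound, and the enumeration exploits the rotational-sweep nature of Chaudhuri et al.'s algorithm to interleave MER events with primary boundaries. Your version merely spells out the sweep data structure and the boundary-event cost analysis that the paper leaves implicit.
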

\begin{proof}
Let $T$ be the number of such pairs $(E, I)$.
Since the algorithm by Chaudhuri et al.~\cite{cnd-lerps-03}
runs in an orientation-sweeping fashion,
we can easily specify and store all primary intervals $I$
relevant to each MER class $E$ in time $O(n^3 + T)$.

Now, we show that $T = O(n^3)$, so the running time above
is bounded also by $O(n^3)$.
Consider any two consecutive primary intervals $I$ and $I'$,
and let $\theta$ be the shared endpoint of $I$ and $I'$.
Naamad et al.~\cite{nlh-merp-84} showed that the number of
maximal empty rectangles in a fixed orientation $\theta$ is $O(n^2)$.
This implies that there are at most $O(n^2)$ MER classes $E$ such that
both $I$ and $I'$ are relevant to $E$.
Since there are $O(n)$ primary intervals by Lemma~\ref{lem:primary_interval}
and $O(n^3)$ MER classes shown by Chaudhuri et al.~\cite{cnd-lerps-03},
we conclude that $T = O(n^2\cdot n + n^3) = O(n^3)$.
\end{proof}

Then, for each MER class $E = (p_\mt, p_\mb, p_\ml, p_\mr; J)$ and
each primary interval $I$ relevant to $E$,
we construct the function $a_E(\theta)$ and minimize it
over $\theta \in I \cap J$ in $O(1)$ time.
The minimum of such minima over all MER classes and relevant primary intervals
is indeed the minimum possible area of a rectangular annulus enclosing $P$
since $a(\theta) = \min_{E} a_E(\theta)$ is the lower envelope
of $a_E$ over all MER classes $E$ and we want to compute
 \[ \min_{\theta \in [0,\pi/2)} a(\theta)
   = \min_{\theta \in [0,\pi/2)} \min_E a_E(\theta)
   = \min_E \min_{\theta \in [0,\pi/2)} a_E(\theta).
\]
This proves the correctness of our algorithm.
Hence, we conclude the following theorem.

\begin{theorem} \label{thm:A-R-a}
 A minimum-area rectangular annulus enclosing $n$ points
 over all orientations can be computed in $O(n^3)$ time.
\end{theorem}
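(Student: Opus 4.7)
The plan is to reduce the arbitrary-orientation problem to a union of fixed-orientation instances, one per ``combinatorial type'' of candidate inner rectangle. By Lemma~\ref{lem:A-R-f-LER}, in each orientation $\theta$ the inner rectangle of a minimum-area annulus is a largest empty $\theta$-aligned rectangle, hence a maximal empty $\theta$-aligned rectangle contained in $R_\theta$. Thus as $\theta$ sweeps $[0,\pi/2)$, the inner rectangle of the optimum must, at every $\theta$, belong to some MER class $E=(p_\mt,p_\mb,p_\ml,p_\mr; J)$, and the outer rectangle is $R_\theta$. Writing the annulus area as
\[
 a_E(\theta) = \area(R_\theta) - \area(E_\theta)
            = w_{\theta+\pi/2}(\extr_\mt(\theta),\extr_\mb(\theta))\cdot w_\theta(\extr_\ml(\theta),\extr_\mr(\theta))
             - w_{\theta+\pi/2}(p_\mt,p_\mb)\cdot w_\theta(p_\ml,p_\mr),
\]
we observe that, restricted to $I\cap J$ for a primary interval $I$, the four extreme points determining $R_\theta$ are constant, so each of the two products is a sinusoidal function of frequency $2$ by Observation~\ref{obs:sinusoidal_product}, and their difference is piecewise sinusoidal of frequency $2$ with $O(1)$ breakpoints. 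Such a function can be minimized on $I\cap J$ in $O(1)$ time.

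The algorithm then proceeds as follows. First, compute all $O(n)$ primary intervals in $O(n\log n)$ time following Toussaint~\cite{t-sgprc-83}, and enumerate all $O(n^3)$ MER classes in $O(n^3)$ time using Chaudhuri et al.~\cite{cnd-lerps-03}. Next, enumerate all pairs $(E, I)$ with $I$ relevant to $E$, that is $I\cap J\neq\emptyset$. For each such pair, form the explicit piecewise sinusoidal description of $a_E$ on $I\cap J$ and minimize it in $O(1)$ time. Finally, return the pair $(E,\theta^*)$ achieving the global minimum, together with the corresponding annulus whose outer rectangle is $R_{\theta^*}$ and inner rectangle is $E_{\theta^*}$. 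Correctness follows from the identity
\[
 \min_{\theta\in[0,\pi/2)} a(\theta) = \min_E \min_{\theta\in J_E} a_E(\theta),
\]
which expresses $a$ as the lower envelope of the $a_E$'s over their valid intervals.

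The main obstacle, and the only nontrivial analytical ingredient, is bounding the number of relevant pairs $(E,I)$ by $O(n^3)$; without such a bound the cost could blow up to $\Theta(n^4)$. My plan is to charge each relevant pair either to a MER class that lives entirely inside a single primary interval, or to one of the $O(n)$ shared endpoints of consecutive primary intervals. At any fixed orientation $\theta$, Naamad et al.~\cite{nlh-merp-84} guarantee at most $O(n^2)$ axis-parallel maximal empty rectangles, so at each of the $O(n)$ breakpoints between primary intervals at most $O(n^2)$ MER classes can span across. Combining with the $O(n^3)$ total number of MER classes gives $O(n^3)+O(n)\cdot O(n^2)=O(n^3)$ relevant pairs, which can be emitted in this time by running the Chaudhuri et al.\ sweep in tandem with the primary-interval partition. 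Together with $O(1)$ work per pair, the total running time is $O(n^3)$, as claimed.
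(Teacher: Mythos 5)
Your proposal is correct and follows essentially the same route as the paper: reduce to MER classes via Lemma~\ref{lem:A-R-f-LER}, minimize the piecewise frequency-$2$ sinusoidal function $a_E$ on each $I\cap J$ in $O(1)$ time, and bound the number of relevant pairs $(E,I)$ by $O(n^3)$ using the Naamad et al.\ $O(n^2)$ bound at each shared endpoint of consecutive primary intervals. The charging argument you give for the pair count is exactly the paper's Lemma~\ref{lem:relevant_primary_int}.
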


Our algorithm described above indeed can find \emph{all}
minimum-area rectangular annuli enclosing $P$ over all orientations.
Therefore, a minimum-width minimum-area rectangular annulus can be found
by selecting one with minimum width over all minimum-area annuli.
\begin{corollary} \label{coro:WA-R-a}
 A minimum-width minimum-area rectangular annulus enclosing $n$ points
 over all orientations can be computed in $O(n^3)$ time.
\end{corollary}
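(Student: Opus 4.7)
The plan is to augment the algorithm of Theorem~\ref{thm:A-R-a} so that, among all rectangular annuli of minimum area $a^* := \min_{\theta \in [0,\pi/2)} a(\theta)$, we return one of smallest width. Recall that the algorithm enumerates $O(n^3)$ pairs $(E, I)$, where $E = (p_\mt, p_\mb, p_\ml, p_\mr; J)$ is a MER class and $I$ is a primary interval relevant to $E$, and explicitly constructs $a_E(\theta)$ on $I \cap J$ as a piecewise frequency-$2$ sinusoidal function with $O(1)$ breakpoints. A first sweep over all pairs therefore yields $a^*$ in $O(n^3)$ time, just as in Theorem~\ref{thm:A-R-a}.

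For the bicriteria step, I would perform a second sweep over the same pairs $(E, I)$. On each pair, the equation $a_E(\theta) = a^*$ restricted to $\theta \in I \cap J$ either has $O(1)$ isolated solutions (since each sinusoidal piece of frequency $2$ meets any fixed level at most a constant number of times) or, in the degenerate case where a piece has zero amplitude and base $a^*$, is satisfied on an entire sub-interval. Either way, the locus $\Theta_{E,I} \subseteq I \cap J$ of optimal orientations contributed by this pair admits a constant-size description and can be read off in $O(1)$ time from the representation of $a_E$ already computed in the first sweep.

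On the same pair $(E, I)$, the width of the candidate annulus with outer rectangle $R_\theta$ and inner rectangle $E_\theta$ is
\[ \width(\theta) = \max\{w_{\theta+\pi/2}(\extr_\mt(\theta), p_\mt),\, w_{\theta+\pi/2}(p_\mb, \extr_\mb(\theta)),\, w_\theta(\extr_\ml(\theta), p_\ml),\, w_\theta(p_\mr, \extr_\mr(\theta))\}, \]
which, on the primary interval $I$ (where the four extreme points are fixed), is the upper envelope of four piecewise base-$0$ frequency-$1$ sinusoidal functions with $O(1)$ breakpoints in total; hence $\width(\theta)$ itself is piecewise sinusoidal with $O(1)$ pieces on $I \cap J$. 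When $\Theta_{E,I}$ is finite we simply evaluate $\width(\theta)$ at each of its $O(1)$ points; in the degenerate case we minimize the constant-complexity envelope over a sub-interval. Either sub-problem takes $O(1)$ time, so the second sweep costs $O(n^3)$ in total, and the annulus achieving the smallest recorded width across all $(E, I)$ pairs is returned.

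The main obstacle, and the only departure from a verbatim replay of Theorem~\ref{thm:A-R-a}, is the degenerate case where $a_E$ is identically $a^*$ on a subinterval of $I \cap J$, which forces a genuine one-dimensional width minimization rather than a discrete evaluation. Observing that the width function on any primary interval is itself a bounded-complexity piecewise sinusoidal envelope reduces this case to $O(1)$ work per pair as well, preserving the $O(n^3)$ bound overall and establishing the corollary.
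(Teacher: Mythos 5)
Your proposal is correct and follows essentially the same route as the paper, which justifies the corollary by noting that the algorithm of Theorem~\ref{thm:A-R-a} already enumerates (via the $O(n^3)$ pairs of MER classes and relevant primary intervals) \emph{all} minimum-area rectangular annuli over all orientations, so one simply selects among them an annulus of smallest width. The details you supply---the $O(1)$-complexity piecewise frequency-$1$ sinusoidal width envelope on each pair $(E,I)$ and the degenerate case where $a_E$ is constantly equal to $a^*$ on a subinterval---are exactly the points the paper leaves implicit, and they check out.
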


%%%%%%%%%%%%%%%%%%%%%%%%%%%%%%%%%%%%%%%%%%%%%%%%%%%%%
\section{Square Annuli in Arbitrary Orientation} \label{sec:square}
%%%%%%%%%%%%%%%%%%%%%%%%%%%%%%%%%%%%%%%%%%%%%%%%%%%%
In this section, we consider the square annulus problem
in arbitrary orientation.

Let $a(\theta)$ be the area of a minimum-area $\theta$-aligned square annulus
enclosing $P$,
and $w(\theta)$ be the width of a minimum-width $\theta$-aligned square annulus
enclosing $P$.
By Lemma~\ref{lem:fixed_orient_minarea_sa}, the outer square
of a minimum-area $\theta$-aligned square annulus should be
a smallest enclosing $\theta$-aligned square for $P$.
Let $d(\theta)$ be the side length of a smallest enclosing $\theta$-aligned square for $P$.
We have $d(\theta) = \max\{w_{\theta+\pi/2}(\extr_\mt(\theta), \extr_\mb(\theta)),
 w_\theta(\extr_\ml(\theta), \extr_\mr(\theta))\}$
since a smallest enclosing $\theta$-aligned square for $P$
contains the smallest enclosing $\theta$-aligned rectangle $R_\theta$ for $P$.
We then have
 \[ a(\theta) = 4d(\theta)w(\theta) - 4(w(\theta))^2,\]
by Lemma~\ref{lem:A-S-f_W-S-f}.

The author in~\cite{b-cmwsaao-18} proved that
the width function $w(\theta)$ on $[0,\pi/2)$ is piecewise
base-$0$ frequency-$1$ sinusoidal with $O(n^3)$ breakpoints,
and showed how to compute its explicit description in $O(n^3 \log n)$ time.
Using this, we have the following result.
\begin{lemma} \label{lem:sa_area}
 The function $a \colon [0,\pi/2) \to \Real$ is piecewise
 frequency-$2$ sinusoidal with $O(n^3)$ breakpoints,
 and can be explicitly computed in $O(n^3 \log n)$ time.
\end{lemma}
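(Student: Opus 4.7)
The plan is to build $a(\theta)$ out of three already-understood ingredients: the side length $d(\theta)$, the width $w(\theta)$, and their combinations, and then control its complexity via the known bounds on $w$ and a direct analysis of $d$.

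First, I would analyze $d(\theta) = \max\{w_{\theta+\pi/2}(\extr_\mt(\theta),\extr_\mb(\theta)), w_\theta(\extr_\ml(\theta),\extr_\mr(\theta))\}$ over a primary interval $I\subset [0,\pi/2)$. On $I$, the four extreme points are fixed, so both arguments of the max are base-$0$ frequency-$1$ sinusoidal functions; by Observation~\ref{obs:sinusoidal_sum} their graphs cross at most once, hence $d$ restricted to $I$ is piecewise base-$0$ frequency-$1$ sinusoidal with $O(1)$ breakpoints. Summing over the $O(n)$ primary intervals of Lemma~\ref{lem:primary_interval}, $d$ has $O(n)$ breakpoints in total and can be computed explicitly in $O(n\log n)$ time.

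Next, I would combine $d$ with $w$ using the identity $a(\theta) = 4d(\theta)w(\theta) - 4(w(\theta))^2$ given just before the lemma. Consider any maximal open sub-interval $J\subset [0,\pi/2)$ on which both $d$ and $w$ are single base-$0$ frequency-$1$ sinusoidal pieces. Observation~\ref{obs:sinusoidal_product} then says that each of $d(\theta)w(\theta)$ and $(w(\theta))^2$ is a (not necessarily base-$0$) sinusoidal function of frequency $2$ on $J$, and their scaled difference is again frequency-$2$ sinusoidal. Thus $a$ is piecewise frequency-$2$ sinusoidal, with breakpoints only at breakpoints of $d$ or $w$. The author's prior work~\cite{b-cmwsaao-18} provides that $w$ has $O(n^3)$ breakpoints, and the paragraph above gives $O(n)$ for $d$, so the total breakpoint count of $a$ is $O(n^3)$, as claimed.

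For the computation, I would first invoke the $O(n^3\log n)$ algorithm of~\cite{b-cmwsaao-18} to obtain the explicit description of $w$ (with breakpoints in sorted order). In parallel I would compute the description of $d$ in $O(n\log n)$ time by sweeping the primary intervals. Then merge the two sorted sequences of breakpoints in linear time in the total output size, and on each resulting sub-interval derive the explicit frequency-$2$ sinusoidal expression for $a$ in constant time via Observation~\ref{obs:sinusoidal_product}. The dominant cost is producing $w$, giving the overall $O(n^3\log n)$ bound.

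The main obstacle is really conceptual rather than technical: one must be careful that Observation~\ref{obs:sinusoidal_product} does \emph{not} preserve the base-$0$ property, so the resulting pieces of $a$ are only asserted to be frequency-$2$ sinusoidal (with a nonzero base), which is exactly what the lemma states. Everything else reduces to routine piecewise bookkeeping, since the breakpoint complexity of $w$ has already been supplied by the cited prior work.
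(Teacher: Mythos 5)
Your proposal is correct and follows essentially the same route as the paper: the same identity $a(\theta) = 4d(\theta)w(\theta) - 4(w(\theta))^2$, the same $O(n)$ bound on the breakpoints of $d$ via primary intervals and Observation~\ref{obs:sinusoidal_sum}, the same appeal to \cite{b-cmwsaao-18} for the $O(n^3)$ complexity of $w$, and the same merge-and-evaluate computation. Your closing remark that Observation~\ref{obs:sinusoidal_product} does not preserve the base-$0$ property (so $a$ is only frequency-$2$ sinusoidal, not base-$0$) is a correct and slightly more explicit reading than the paper gives.
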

\begin{proof}
As introduced above, by the results in~\cite{b-cmwsaao-18},
$w \colon [0,\pi/2) \to \Real$ is piecewise base-$0$ frequency-$1$ sinusoidal
with $O(n^3)$ breakpoints and can be explicitly computed in $O(n^3 \log n)$ time.
On the other hand, the function $d \colon [0,\pi/2) \to \Real$
is piecewise base-$0$ frequency-$1$ sinusoidal with $O(n)$ breakpoints
by Lemma~\ref{lem:primary_interval},
since $d$ has at most $O(1)$ breakpoints in each primary interval
by Observation~\ref{obs:sinusoidal_sum}.

Now, recall that $a(\theta) = 4d(\theta)w(\theta) - 4(w(\theta))^2$.
So, each breakpoint of $a$ is either a breakpoint of $w$ or one of $d$.
Between two consecutive breakpoints of $a$,
both $d$ and $w$ are sinusoidal functions of frequency $1$ and base $0$.
As $a$ is the sum of products of two such sinusoidal functions,
it appears to be a sinusoidal function of frequency $2$
by Observations~\ref{obs:sinusoidal_sum} and~\ref{obs:sinusoidal_product}.

After gathering the breakpoints of $w$ and $d$ and sorting them,
we can obtain an explicit description of $a$ in $O(1)$ time per breakpoint.
Hence, $O(n^3 \log n)$ time is sufficient.
\end{proof}

This already implies that, in $O(n^3 \log n)$ time,
we can indeed minimize the area function $a(\theta)$
over $\theta \in [0,\pi/2)$ and obtain
a minimum-area square annulus enclosing $P$ in arbitrary orientation.
In the following, we show how to improve this downto $O(n^3)$
without computing a full description of the area function $a$.

For the purpose, we make use of the maximal empty rectangles
in arbitrary orientation,
as done above for the minimum-area rectangular annulus problem.
Our approach here is based on an easy observation that
the inner square of a minimum-area $\theta$-aligned square annulus
should be contained in a maximal empty $\theta$-aligned rectangle.
We thus check each MER class $E = (p_\mt, p_\mb, p_\ml, p_\mr; J)$,
searching a minimum-area square annulus whose inner square is contained
in the maximal empty rectangle $E_\theta = (p_\mt, p_\mb, p_\ml, p_\mr; \theta)$
for some $\theta \in J$.
More precisely, we define $a_E(\theta)$ for $\theta \in J$ to be the
minimum possible area of square annuli whose outer square is
a smallest enclosing $\theta$-aligned square for $P$
and whose inner square is contained in $E_\theta$.
By definition, possible locations of inner squares are restricted;
if there is no possible center for an inner square, then
$a_E(\theta)$ is defined to be the area of the outer square,
that is, $a_E(\theta) = (d(\theta))^2$.
Details are given below.

%Let $\delta(\theta)$ be the side length of the inner square
%of a minimum-area $\theta$-aligned square annulus enclosing $P$.
%By Lemma~\ref{lem:fixed_orient_minarea_sa},
%we have $a(\theta) = (d(\theta))^2 - (\delta(\theta))^2$.
%Let $C_\theta$ be the set of centers of all smallest $\theta$-aligned square
%enclosing $P$.
%Observe that $C_\theta$ forms a line segment whose orientation is
%$\theta$ if $w_{\theta+\pi/2}(\extr_\mt(\theta), \extr_\mb(\theta))
% > w_\theta(\extr_\ml(\theta), \extr_\mr(\theta))$ or
%$\theta+\pi/2$, otherwise.
%Further, its length $|C_\theta|$ is equal to
%$|w_{\theta+\pi/2}(\extr_\mt(\theta), \extr_\mb(\theta))
% - w_\theta(\extr_\ml(\theta), \extr_\mr(\theta))|$.
%Again Lemma~\ref{lem:fixed_orient_minarea_sa} implies that
%any minimum-area $\theta$-aligned square annulus is centered at
%some $c \in C_\theta$, and so is its inner square.
%For any $c\in \Plane$, we define
%\[ \delta(\theta, c) :=
%    \begin{cases}
%      \min_{p\in P} \max\{w_\theta(p, c), w_{\theta+\pi/2}(p, c)\}
%          & \text{if $c \in C_\theta$}\\
%      0 & \text{otherwise}
%    \end{cases}
%\]
%to be the side length of the largest empty $\theta$-aligned square centered at $c$
%if $c\in C_\theta$,
%or zero, otherwise.
%Note that $\delta(\theta) = \max_{c\in \Plane} \delta(\theta, c)$.

\begin{figure}[tb]
\begin{center}
\includegraphics[width=.65\textwidth]{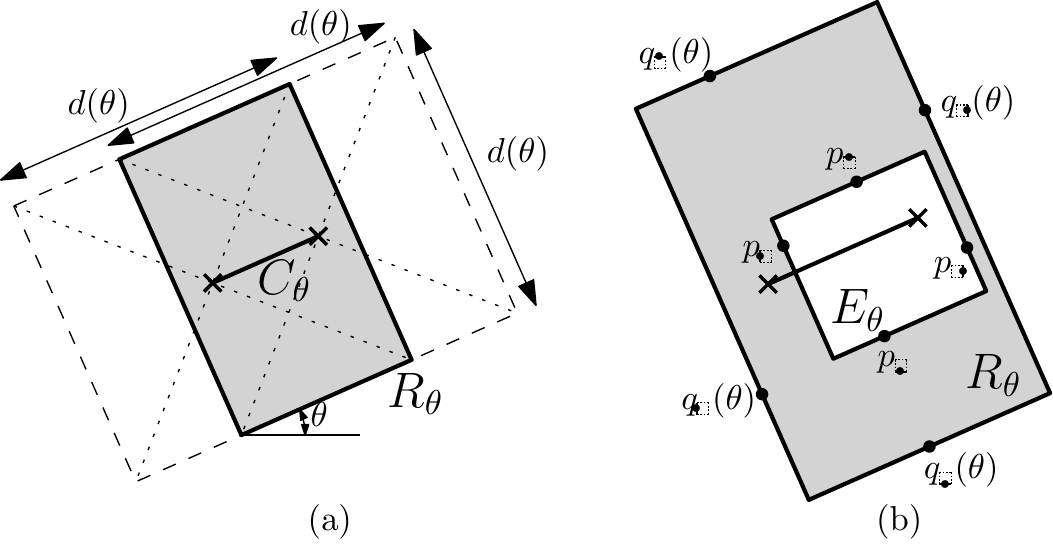}
\end{center}
%\vspace{-5mm}
\caption{ (a) How $C_\theta$ is formed,
 when it is $\theta$-aligned.
 Two endpoints of $C_\theta$ are the centers of the leftmost and rightmost
 smallest $\theta$-aligned square enclosing $P$ and thus enclosing $R_\theta$.
(b) For a MER class $E$,
 $E_\theta$ is a maximal empty $\theta$-aligned rectangle contained in $R_\theta$.}
% \vspace{-5mm}
\label{fig:C_delta_E}
\end{figure}

Let $C_\theta$ be the set of centers of all smallest $\theta$-aligned squares
enclosing $P$.
Observe that $C_\theta$ forms a line segment whose orientation is
$\theta$ if $d(\theta) = w_{\theta+\pi/2}(\extr_\mt(\theta), \extr_\mb(\theta))$,
or
$\theta+\pi/2$ if $d(\theta) = w_\theta(\extr_\ml(\theta), \extr_\mr(\theta))$.
Further, its length $|C_\theta|$ is equal to
$|w_{\theta+\pi/2}(\extr_\mt(\theta), \extr_\mb(\theta))
 - w_\theta(\extr_\ml(\theta), \extr_\mr(\theta))|$.
See \figurename~\ref{fig:C_delta_E}(a)
for an illustration when $d(\theta) = w_{\theta+\pi/2}(\extr_\mt(\theta), \extr_\mb(\theta))$, so $C_\theta$ is a $\theta$-aligned segment.
Lemma~\ref{lem:fixed_orient_minarea_sa} implies that
any minimum-area $\theta$-aligned square annulus is centered at
some $c \in C_\theta$, and so is its inner square.

Consider any MER class $E = (p_\mt, p_\mb, p_\ml, p_\mr; J)$.
Define $\delta_E(\theta, c)$ for $\theta \in [0, \pi/2)$ and $c \in C_\theta$
to be the side length of the largest $\theta$-aligned square centered at $c$
contained in $E_\theta$ if $\theta \in J$ and $c \in C_\theta$,
or zero, otherwise.
Specifically, it holds that
\[ \delta_E(\theta, c) :=
    \begin{cases}
      \min\{w_{\theta+\pi/2}(p_\mt, c), w_{\theta+\pi/2}(p_\mb, c),
       w_{\theta}(p_\ml, c), w_{\theta}(p_\mr, c)\}
          & \text{if $\theta \in J$ and $c \in C_\theta\cap E_\theta$}\\
      0 & \text{otherwise}
    \end{cases}.
\]
See \figurename~\ref{fig:C_delta_E}(b) for an illustration.
We then define $\delta_E(\theta) := \max_{c\in C_\theta} \delta_E(\theta, c)$.

We then consider a primary interval $I \subset [0,\pi/2)$
and fixes the four extreme points $\extr_\ma = \extr_\ma(\theta)$
for $\ma \in \{\mt, \mb, \ml, \mr\}$.
Note that the functions $\delta_E(\theta, c)$ and $\delta_E(\theta)$
in $I \cap J$
are dependent only on the eight points
$\extr_\mt, \extr_\mb, \extr_\ml, \extr_\mr, p_\mt, p_\mb, p_\ml, p_\mr \in P$,
so are of constant complexity.
More precisely, we observe the following, while
its proof is postponed to Appendix A.\@
\newcounter{saveLemDelta}
\addtocounter{saveLemDelta}{\value{lemma}}
\begin{lemma} \label{lem:delta}
 For any MER class $E$ with valid interval $J$ and any primary interval $I$,
 the function $\delta_E(\theta)$ over $\theta \in I\cap J$
 is piecewise base-$0$ frequency-$1$ sinusoidal with $O(1)$ breakpoints.
\end{lemma}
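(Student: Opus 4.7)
The plan is to reduce, within $I \cap J$, the maximization defining $\delta_E(\theta) = \max_{c \in C_\theta} \delta_E(\theta, c)$ to a one-dimensional optimization along the segment $C_\theta$, and then to observe that the resulting function is a minimum of $O(1)$ base-$0$ frequency-$1$ sinusoidal functions determined by the eight points $\extr_\mt, \extr_\mb, \extr_\ml, \extr_\mr, p_\mt, p_\mb, p_\ml, p_\mr$, all of which are fixed on $I \cap J$.

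First I would split $I \cap J$ according to the orientation of $C_\theta$. Since $C_\theta$ is $\theta$-aligned exactly when $d(\theta) = w_{\theta+\pi/2}(\extr_\mt, \extr_\mb) \geq w_\theta(\extr_\ml, \extr_\mr)$ and is $(\theta+\pi/2)$-aligned otherwise, the split is controlled by a single equation between two base-$0$ frequency-$1$ sinusoidal functions, which by Observation~\ref{obs:sinusoidal_sum} has at most one solution in $I$. Hence $I \cap J$ is split into at most two subintervals; the two cases are symmetric, so I focus on the one where $C_\theta$ is $\theta$-aligned. There, $C_\theta$ lies on a $\theta$-aligned line whose $(\theta+\pi/2)$-coordinate equals the average of those of $\extr_\mt$ and $\extr_\mb$, and the endpoints of $C_\theta$ are given by the positions where the outer square touches $\extr_\ml$ or $\extr_\mr$; both of these $\theta$-coordinates are base-$0$ frequency-$1$ sinusoidal in $\theta$.

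Next I would parametrize $c \in C_\theta$ by its $\theta$-coordinate $x$ and analyze how the four terms defining $\delta_E(\theta, c)$ depend on $x$. Since $c$ has fixed $(\theta+\pi/2)$-coordinate, the two terms $w_{\theta+\pi/2}(p_\mt, c)$ and $w_{\theta+\pi/2}(p_\mb, c)$ are independent of $x$; call their values $A(\theta)$ and $B(\theta)$, both base-$0$ frequency-$1$ sinusoidal in $\theta$. The remaining two, $w_\theta(p_\ml, c)$ and $w_\theta(p_\mr, c)$, are affine in $x$ with slopes $+1$ and $-1$ and $\theta$-intercepts that are base-$0$ frequency-$1$ sinusoidal. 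Hence $\min\{w_\theta(p_\ml, c), w_\theta(p_\mr, c)\}$ is a concave tent function of $x$ whose maximum is attained at the midpoint $x^*(\theta)$ of the $\theta$-projections of $p_\ml$ and $p_\mr$, with value $w_\theta(p_\ml, p_\mr)/2$.

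Third I would evaluate $\delta_E(\theta) = \max_x \delta_E(\theta, c)$ over $x$ in the allowed interval $C_\theta \cap E_\theta$, a $\theta$-projection intersection whose endpoints are minima or maxima of base-$0$ frequency-$1$ sinusoidal quantities derived from the eight fixed points. If $x^*(\theta)$ lies in this interval, then $\delta_E(\theta) = \min\{A(\theta), B(\theta), w_\theta(p_\ml, p_\mr)/2\}$; otherwise the max is attained at the endpoint of the allowed interval nearer to $x^*(\theta)$, yielding a formula $\min\{A(\theta), B(\theta), L(\theta)\}$ with $L(\theta)$ an explicit base-$0$ frequency-$1$ sinusoidal function; and if $C_\theta \cap E_\theta$ is empty, then $\delta_E(\theta) = 0$. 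Each case-distinguishing condition is a single inequality between two base-$0$ frequency-$1$ sinusoidal functions, and by Observation~\ref{obs:sinusoidal_sum} contributes at most one breakpoint in $I$. The main obstacle will be the careful bookkeeping of this case analysis: verifying agreement of the formulas across shared breakpoints and handling degenerate configurations (such as $\extr_\ma = p_\ma$ for some $\ma$); once this is done, combining the initial split with the $O(1)$ breakpoints produced by the case analysis gives that $\delta_E(\theta)$ on $I \cap J$ is piecewise base-$0$ frequency-$1$ sinusoidal with $O(1)$ breakpoints, as required.
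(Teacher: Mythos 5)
Your proposal is correct and follows essentially the same route as the paper's Appendix~A proof: split $I\cap J$ by the orientation of $C_\theta$, observe that for fixed $\theta$ the function $c\mapsto\delta_E(\theta,c)$ along $\ell_\theta$ is a concave tent (slopes $+1$, $0$, $-1$) maximized at the horizontal midpoint of $E_\theta$, and hence that the maximum over $C_\theta\cap E_\theta$ is attained either at that midpoint or at an endpoint of $C_\theta$, with each case boundary governed by a single frequency-$1$ base-$0$ sinusoidal equation and thus contributing $O(1)$ breakpoints by Observation~\ref{obs:sinusoidal_sum}. The paper packages the same case analysis via the three functions $\rho_\clend$, $\rho_\crend$, $\rho_\cmid$, but the underlying argument is identical.
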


Now, let $a_E(\theta) := (d(\theta))^2 - (\delta_E(\theta))^2$
be the minimum possible area of a square annulus enclosing $P$
whose outer square is a smallest $\theta$-aligned square enclosing $P$
and inner square is contained in $E_\theta$.
Observe that $a(\theta) = \min_E a_E(\theta)$
since the inner square of a minimum-area $\theta$-aligned square annulus
should be contained in a maximal empty $\theta$-aligned rectangle.
In the minimum-area square annulus problem in arbitrary orientation,
we want to compute $\min_{\theta \in [0, \pi/2)} a(\theta)$ and this is
equivalent to compute
 \[      \min_{\theta \in [0, \pi/2)} \min_E a_E(\theta) =
     \min_E \min_{\theta \in [0, \pi/2)} a_E(\theta).\]
This indeed describes our algorithm.

Our algorithm runs as follows:
First, as done for the rectangular case,
we compute all primary intervals and MER classes in $O(n^3)$ time
using the algorithm by Chaudhuri et al.~\cite{cnd-lerps-03}.
We also specify and store all primary intervals $I$ relevant to each MER class
in time $O(n^3)$ by Lemma~\ref{lem:relevant_primary_int},
as done in the previous section.
For each MER class $E=(p_\mt, p_\mb, p_\ml, p_\mr; J)$
and each primary interval $I$ relevant to $E$,
we minimize $a_E(\theta) = (d(\theta))^2 - (\delta_E(\theta))^2$
over $\theta \in I \cap J$.
As discussed above and by Lemma~\ref{lem:delta},
both functions $d(\theta)$ and $\delta_E(\theta)$
are piecewise base-$0$ frequency-$1$ sinusoidal with $O(1)$ breakpoints
in $I\cap J$.
Thus, the function $a_E$ is piecewise frequency-$2$ sinusoidal
with $O(1)$ breakpoints by Observations~\ref{obs:sinusoidal_sum}
and~\ref{obs:sinusoidal_product},
so can be minimized in $O(1)$ time.
By taking the minimum over those $O(n^3)$ minima we have computed
in the above process,
we obtain the minimum possible area of any square annulus enclosing $P$
and the corresponding annulus.
Therefore, we conclude the following theorem.
\begin{theorem} \label{thm:A-S-a}
 A minimum-area square annulus enclosing $n$ points over all orientations
 can be computed in $O(n^3)$ time.
\end{theorem}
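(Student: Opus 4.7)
The plan is to mirror the strategy used for the minimum-area rectangular annulus problem in Theorem~\ref{thm:A-R-a}, enumerating all MER classes and all relevant primary intervals, but now combining each maximal empty rectangle with the constraint, coming from Lemma~\ref{lem:fixed_orient_minarea_sa}, that the outer square must be a smallest enclosing $\theta$-aligned square for $P$. Since the inner square of any optimal $\theta$-aligned square annulus must be contained in some maximal empty $\theta$-aligned rectangle, the identity $a(\theta) = \min_E a_E(\theta)$ with $a_E(\theta) = (d(\theta))^2 - (\delta_E(\theta))^2$ reduces the global minimization to a collection of local minimizations, one per (MER class, primary interval) pair.

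Concretely, I would proceed in four steps. First, compute all $O(n)$ primary intervals using Toussaint~\cite{t-sgprc-83}, and compute the explicit piecewise sinusoidal description of $d(\theta)$ in $O(n \log n)$ time, noting that $d(\theta)$ has only $O(1)$ breakpoints inside each primary interval. Second, enumerate all $O(n^3)$ MER classes via the algorithm of Chaudhuri et al.~\cite{cnd-lerps-03} in $O(n^3)$ time. Third, apply Lemma~\ref{lem:relevant_primary_int} to obtain, in $O(n^3)$ total time, the list of all pairs $(E,I)$ such that the primary interval $I$ is relevant to the MER class $E$. Fourth, for each such pair, restrict to $\theta \in I \cap J$ where $J$ is the valid interval of $E$; by Lemma~\ref{lem:delta}, $\delta_E(\theta)$ is piecewise base-$0$ frequency-$1$ sinusoidal with $O(1)$ breakpoints on $I \cap J$, and $d(\theta)$ has the same structure on $I$, so by Observations~\ref{obs:sinusoidal_sum} and~\ref{obs:sinusoidal_product} the function $a_E(\theta)$ is piecewise frequency-$2$ sinusoidal with $O(1)$ breakpoints and can be minimized in $O(1)$ time. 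Taking the minimum of all these $O(n^3)$ local minima yields a minimum-area square annulus.

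The main obstacle is not the overall enumeration, which is identical in spirit to the rectangular case, but the proper handling of $\delta_E(\theta)$: unlike the rectangular problem, the inner shape is forced to be a square centered on a specific one-dimensional locus $C_\theta$, and to be inscribed in the (generally non-square) rectangle $E_\theta$. The subtlety is that the center of the inner square is a free parameter along $C_\theta$, and one must maximize the square radius $\delta_E(\theta, c)$ over $c \in C_\theta \cap E_\theta$ while simultaneously varying $\theta$. Lemma~\ref{lem:delta} already encapsulates this difficulty by asserting that the resulting upper envelope collapses to a piecewise sinusoidal function of constant complexity on each primary-by-valid intersection; once that lemma is in hand, the rest of the argument is a routine plug-in of Observations~\ref{obs:sinusoidal_sum} and~\ref{obs:sinusoidal_product}.

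Correctness follows from the identity
\[
\min_{\theta \in [0,\pi/2)} a(\theta) \;=\; \min_{\theta \in [0,\pi/2)} \min_E a_E(\theta) \;=\; \min_E \min_{\theta \in [0,\pi/2)} a_E(\theta),
\]
where the outer minimum on the right ranges over all MER classes. The time bound follows from: $O(n \log n)$ for primary intervals and $d$; $O(n^3)$ for MER class enumeration; $O(n^3)$ for pair enumeration via Lemma~\ref{lem:relevant_primary_int}; and $O(1)$ per pair for the local optimization, giving $O(n^3)$ overall.
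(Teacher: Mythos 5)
Your proposal is correct and follows essentially the same route as the paper's own proof: enumerate MER classes and relevant primary intervals, use the identity $a(\theta)=\min_E a_E(\theta)$ with $a_E(\theta)=(d(\theta))^2-(\delta_E(\theta))^2$, and invoke Lemma~\ref{lem:delta} together with Observations~\ref{obs:sinusoidal_sum} and~\ref{obs:sinusoidal_product} to minimize each piece in $O(1)$ time. You also correctly identify that the only real difficulty is absorbed into Lemma~\ref{lem:delta}, exactly as the paper does.
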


The same approach also applies to the minimum-width problem.
We define $w_E(\theta) := d(\theta)/2 - \delta_E(\theta)/2$,
which is the minimum possible width of a square annulus enclosing $P$
whose outer square is a smallest $\theta$-aligned square enclosing $P$
and inner square is contained in $E_\theta$.
We then have $w(\theta) = \min_E w_E(\theta)$
and want to compute
\[ \min_{\theta \in [0, \pi/2)} w(\theta)
 = \min_{\theta \in [0, \pi/2)} \min_E w_E(\theta) =
     \min_E \min_{\theta \in [0, \pi/2)} w_E(\theta).\]
Thus, to compute the minimum possible width and a corresponding square annulus,
we minimize $w_E(\theta) = d(\theta)/2 - \delta_E(\theta)/2$
over $\theta \in I \cap J$,
for each MER class $E=(p_\mt, p_\mb, p_\ml, p_\mr; J)$
and each primary interval $I$ relevant to $E$.
Again by Lemma~\ref{lem:delta},
we deduce that $w_E(\theta)$ in $I\cap J$ is a piecewise sinusoidal function
of frequency $1$ and base $0$
with $O(1)$ breakpoints, so can be minimized in $O(1)$ time.

Hence, we also conclude the following.
\begin{theorem} \label{thm:W-S-a}
 A minimum-width square annulus enclosing $n$ points over all orientations
 can be computed in $O(n^3)$ time.
\end{theorem}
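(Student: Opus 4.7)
The plan is to mirror the argument used for Theorem~\ref{thm:A-S-a}, simply swapping the area objective for the width objective. By Lemma~\ref{lem:fixed_orient_minwidth_sa} there is always a minimum-width $\theta$-aligned square annulus whose outer square is a smallest enclosing $\theta$-aligned square for $P$; hence, its outer square has side $d(\theta)$ and center on $C_\theta$, and minimizing the width reduces to maximizing the side length of the empty $\theta$-aligned square centered on $C_\theta$. Any such maximum empty square is a $\theta$-aligned empty rectangle contained in $R_\theta$, so it sits inside some maximal empty $\theta$-aligned rectangle $E_\theta = (p_\mt, p_\mb, p_\ml, p_\mr; \theta)$; sweeping over MER classes is therefore lossless and yields
\[ w(\theta) \;=\; \min_E w_E(\theta), \qquad w_E(\theta) := \tfrac{1}{2}\bigl(d(\theta) - \delta_E(\theta)\bigr). \]

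Given this identity, I would run the same algorithmic pipeline as in Theorem~\ref{thm:A-S-a}. First, enumerate all primary intervals via Lemma~\ref{lem:primary_interval} and all MER classes in $O(n^3)$ time using Chaudhuri et al.~\cite{cnd-lerps-03}; then, by Lemma~\ref{lem:relevant_primary_int}, collect all $O(n^3)$ pairs $(E, I)$ with $I$ relevant to $E$ in $O(n^3)$ time. Using
\[ \min_{\theta \in [0, \pi/2)} w(\theta) \;=\; \min_E \min_{\theta \in J} w_E(\theta), \]
it suffices to minimize $w_E$ on each subinterval $I \cap J$ and take the overall minimum.

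The minimization on a single $I \cap J$ is constant time because $w_E$ has a very simple piecewise structure there. On a primary interval the four extreme points $\extr_\mt, \extr_\mb, \extr_\ml, \extr_\mr$ are fixed, so $d(\theta) = \max\{w_{\theta+\pi/2}(\extr_\mt, \extr_\mb), w_\theta(\extr_\ml, \extr_\mr)\}$ is the upper envelope of two base-$0$ frequency-$1$ sinusoidals and, by Observation~\ref{obs:sinusoidal_sum}, has only $O(1)$ breakpoints on $I$. By Lemma~\ref{lem:delta}, $\delta_E(\theta)$ is piecewise base-$0$ frequency-$1$ sinusoidal with $O(1)$ breakpoints on $I \cap J$, so the difference $w_E = (d - \delta_E)/2$ inherits the same structure (Observation~\ref{obs:sinusoidal_sum}) and each piece is minimized in $O(1)$ time. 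Summing the $O(1)$ work over $O(n^3)$ pairs, together with $O(n^3)$ preprocessing, gives the claimed $O(n^3)$ total running time.

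The only delicate step — the sole item not already handled in the proof of Theorem~\ref{thm:A-S-a} — is the identity $w(\theta) = \min_E w_E(\theta)$, and this will be the main thing to verify carefully: one must check that restricting to inner squares inscribed in some maximal empty $\theta$-aligned rectangle in $R_\theta$ does not lose any optimal configuration, exactly as in the area variant. Once this observation is in place, the sinusoidal-envelope machinery developed for the previous theorems carries the rest of the proof essentially mechanically.
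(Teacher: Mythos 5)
Your proposal is correct and follows essentially the same route as the paper: the paper likewise defines $w_E(\theta) = d(\theta)/2 - \delta_E(\theta)/2$, uses the identity $w(\theta)=\min_E w_E(\theta)$ together with the swap of the two minima, and then minimizes each $w_E$ over $I\cap J$ in $O(1)$ time via Lemma~\ref{lem:delta} and Observation~\ref{obs:sinusoidal_sum}, for all $O(n^3)$ pairs $(E,I)$ from Lemma~\ref{lem:relevant_primary_int}. The only cosmetic difference is that you explicitly justify the reduction to MER classes via Lemma~\ref{lem:fixed_orient_minwidth_sa}, whereas the paper treats that containment observation as immediate from the setup of Theorem~\ref{thm:A-S-a}.
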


Note that this even improves the previously fastest algorithm
of running time $O(n^3 \log n)$ by the author~\cite{b-cmwsaao-18}.

Both algorithms for the minimum-width and minimum-area problems
can indeed find \emph{all} minimum-width or minimum-area square annuli
enclosing $P$ over all orientations in the same time bound.
Therefore, we conclude the following corollary.
\begin{corollary} \label{coro:WA/AW-S-a}
 A minimum-width minimum-area or minimum-area minimum-width square annulus
 enclosing $n$ points over all orientations can be computed in $O(n^3)$ time.
\end{corollary}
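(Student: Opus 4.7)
The plan is to piggyback on the enumeration structure already used by the algorithms of Theorems~\ref{thm:A-S-a} and~\ref{thm:W-S-a}, and to perform a tie-breaking secondary optimization on each piece at $O(1)$ cost. Both algorithms iterate over the $O(n^3)$ pairs $(E, I)$, where $E = (p_\mt, p_\mb, p_\ml, p_\mr; J)$ is a MER class and $I$ is a primary interval relevant to $E$, and, on each piece $I \cap J$, reduce the problem to minimizing a sinusoidal expression. The observation driving the corollary is that on each such piece we have explicit closed-form descriptions of \emph{both} objective functions
\[
  w_E(\theta) = \tfrac{1}{2} d(\theta) - \tfrac{1}{2} \delta_E(\theta),
  \qquad
  a_E(\theta) = (d(\theta))^2 - (\delta_E(\theta))^2,
\]
and that, by Lemma~\ref{lem:delta} together with Observations~\ref{obs:sinusoidal_sum} and~\ref{obs:sinusoidal_product}, these are piecewise base-$0$ frequency-$1$ and piecewise frequency-$2$ sinusoidal on $I\cap J$, respectively, each with only $O(1)$ breakpoints.

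For the minimum-area minimum-width square annulus, I first run the algorithm of Theorem~\ref{thm:A-S-a} in order to obtain the global minimum area $a^*$. In the same sweep, for every $(E, I)$ pair I record not just the per-piece minimum value $\min_{\theta\in I\cap J} a_E(\theta)$ but also the finite set $\Theta_{E,I} \subseteq I\cap J$ on which that per-piece minimum is attained; since $a_E$ is piecewise frequency-$2$ sinusoidal with $O(1)$ breakpoints, $\Theta_{E,I}$ is either an $O(1)$-size set of isolated orientations (breakpoints or critical points) or an $O(1)$-size union of subintervals if $a_E$ is locally constant. In a second pass, I restrict attention to pairs $(E, I)$ with per-piece minimum equal to $a^*$, and on each such piece I minimize $w_E(\theta)$ over $\theta \in \Theta_{E,I}$. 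Because $w_E$ is itself piecewise frequency-$1$ sinusoidal with $O(1)$ breakpoints, this secondary optimization costs $O(1)$ per piece, so the total running time stays $O(n^3)$. For the minimum-width minimum-area problem, the two objectives swap roles: I run the algorithm of Theorem~\ref{thm:W-S-a}, record per-piece width-minimizer sets, filter by the global minimum width $w^*$, and then minimize $a_E$ on each surviving set, again in $O(1)$ per piece.

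The main thing to argue — and really the only non-mechanical step — is that the bicriteria optima are in fact captured by some $(E, I)$ pair in the enumeration. This is a direct consequence of Lemma~\ref{lem:fixed_orient_minarea_sa}: in any orientation $\theta$, the outer square of an optimal annulus is a smallest enclosing $\theta$-aligned square for $P$, hence its side length equals $d(\theta)$, and its inner square is an empty $\theta$-aligned square centered on $C_\theta$, which is therefore contained in some maximal empty $\theta$-aligned rectangle $E_\theta$; that maximal empty rectangle represents a MER class $E$ with $\theta \in J$, and the unique primary interval containing $\theta$ is then relevant to $E$. Consequently $a_E(\theta)$ and $w_E(\theta)$ precisely model the area and width of the optimal annulus supported by that configuration, so taking the outer minimum over all $(E, I)$ pairs yields the global bicriteria optimum. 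Combining this coverage with the $O(1)$-per-piece algebraic work described above gives the claimed $O(n^3)$ bound for both bicriteria variants.
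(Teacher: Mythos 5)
Your proposal is correct and follows essentially the same route as the paper: the paper's own proof simply notes that the algorithms of Theorems~\ref{thm:A-S-a} and~\ref{thm:W-S-a} can be made to report \emph{all} optimal annuli within the same $O(n^3)$ bound and then tie-breaks on the secondary objective, which is exactly what your per-piece minimizer sets $\Theta_{E,I}$ and the $O(1)$-cost secondary optimization make explicit. Your added coverage argument (that every bicriteria optimum is witnessed by some pair $(E,I)$) and the observation that $w_E$ and $a_E$ are minimized by the same inner square are correct elaborations of details the paper leaves implicit.
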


%%%%%%%%%%%%%%%%%%%%%%%%%%%%%%%%%%%%%%%%%%%%%%%%%%%%%%
\section{Concluding Remarks} \label{sec:conclusion}
%%%%%%%%%%%%%%%%%%%%%%%%%%%%%%%%%%%%%%%%%%%%%%%%%%%%%
We have presented first algorithms on the minimum-area
square and rectangular annulus problem.
On one hand, the minimum-area problem in a fixed orientation is shown
to have a close relation
with the minimum-width problem and the largest empty rectangle problem,
so can be solved by existing algorithms.
On the other hand, our algorithms for the problem in arbitrary orientation
make use of the concept of maximal empty rectangles among input points.
This approach also applies to the problem of computing
the \emph{largest empty square} among $P$ over all orientations.
An \emph{empty square} among $P$ is an empty rectangle among $P$ that is a square.
\begin{theorem} \label{thm:largest_empty_square}
 Given a set $P$ of $n$ points, a largest empty square among $P$
 over all orientations can be computed in $O(n^3)$ time.
\end{theorem}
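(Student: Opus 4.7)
The plan is to mirror the algorithm for the minimum-area square annulus in arbitrary orientation (Theorem~\ref{thm:A-S-a}), replacing the outer smallest enclosing square by the bounding rectangle $R_\theta$ and replacing the ``centered largest square inside $E_\theta$'' quantity $\delta_E(\theta)$ by the ``largest arbitrary-position square inside $E_\theta$'' quantity. The key observation is that any empty $\theta$-aligned square among $P$ contained in $R_\theta$ is contained in some maximal empty $\theta$-aligned rectangle, so a largest empty $\theta$-aligned square must fit inside some $E_\theta$ arising from a MER class $E=(p_\mt, p_\mb, p_\ml, p_\mr; J)$ with $\theta \in J$, and the largest square that fits inside a given rectangle is determined by the shorter side of that rectangle.

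First I would, as in Section~\ref{sec:rect}, compute all primary intervals using~\cite{t-sgprc-83} in $O(n\log n)$ time and all MER classes in $O(n^3)$ time using the algorithm of Chaudhuri et al.~\cite{cnd-lerps-03}, and then enumerate all $O(n^3)$ pairs $(E, I)$ of a MER class $E$ and a relevant primary interval $I$ via Lemma~\ref{lem:relevant_primary_int}. For each pair, within $I\cap J$ the four defining points $p_\mt, p_\mb, p_\ml, p_\mr$ of $E$ are fixed (those that equal some $\extr_\ma(\theta)$ are constant throughout $I$ by definition of a primary interval, and the others are real points of $P$ independent of $\theta$), so I would define
\[ s_E(\theta) := \min\{w_{\theta+\pi/2}(p_\mt, p_\mb),\ w_\theta(p_\ml, p_\mr)\}, \]
which is the side length of a largest $\theta$-aligned square contained in $E_\theta$.

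For each pair I would then maximize $s_E(\theta)$ over $\theta \in I\cap J$. Each of the two arguments of the $\min$ is, within $I\cap J$, a base-$0$ frequency-$1$ sinusoidal function of $\theta$; by Observation~\ref{obs:sinusoidal_sum} they cross at most once, so $s_E$ is a piecewise base-$0$ frequency-$1$ sinusoidal with $O(1)$ breakpoints and can be maximized over $I\cap J$ in $O(1)$ time. Correctness follows from the identity
\[ \max_{\theta \in [0,\pi/2)} \max_E s_E(\theta) = \max_E \max_{\theta \in [0,\pi/2)} s_E(\theta), \]
so taking the maximum across all $O(n^3)$ pairs yields a largest empty square among $P$ over all orientations. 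The total time is dominated by the MER-class enumeration, giving the claimed $O(n^3)$ bound.

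The only step that requires real care is the bookkeeping on $I\cap J$, since the MER definition interprets a side of $E_\theta$ touching $\bd R_\theta$ by substituting the current extreme point $\extr_\ma(\theta)$ into the $p_\ma$ slot; restricting to a primary interval is precisely what freezes these substitutions so that $s_E$ depends on $\theta$ through only a fixed quadruple of points of $P$. Once that is handled, everything else is immediate from the sinusoidal machinery already developed in Sections~\ref{sec:rect} and~\ref{sec:square}.
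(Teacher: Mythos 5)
Your proposal is correct and follows essentially the same route as the paper: both reduce the problem to maximizing $s_E(\theta) = \min\{w_{\theta+\pi/2}(p_\mt, p_\mb),\ w_\theta(p_\ml, p_\mr)\}$ over each MER class and relevant primary interval, using the $O(n^3)$ enumeration framework of Theorems~\ref{thm:A-R-a} and~\ref{thm:A-S-a}. Your write-up simply makes explicit some bookkeeping (the primary-interval restriction and the max-exchange identity) that the paper delegates to that framework.
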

\begin{proof}
Consider any MER class $E = (p_\mt, p_\mb, p_\ml, p_\mr; J)$.
Let $s_E(\theta)$ be the side length of a largest empty
$\theta$-aligned square contained in $E_\theta$.
Also, let $s(\theta)$ be the side length of a largest empty
$\theta$-aligned square among $P$.
Then, we have $s(\theta) = \max_E s_E(\theta)$.
Since we want to maximize $s(\theta)$ over $\theta \in [0, \pi/2)$ and
we have
\[ \max_{\theta \in [0, \pi/2)} s(\theta)
   = \max_{\theta \in [0, \pi/2)} \max_E  s_E(\theta)
   = \max_E \max_{\theta \in [0, \pi/2)} s_E(\theta),\]
this theorem is obtained by maximizing $s_E(\theta)$
for each MER class $E$ among the $n$ input points,
using the same algorithmic framework as in Theorems~\ref{thm:A-R-a}
and~\ref{thm:A-S-a}.

The last task to be done is to check $s_E(\theta)$ is
piecewise sinusoidal of frequency $1$ and base $0$ with $O(1)$ breakpoints.
This directly follows from the observation that
\[ s_E(\theta) =
 \min\{w_{\theta+\pi/2}(p_\mt, p_\mb), w_\theta(p_\ml, p_\mr)\}\]
since it depends only on the four fixed points $p_\mt$, $p_\mb$, $p_\ml$,
and $p_\mr$.
\end{proof}

It is quite surprising that this is the first nontrivial bound
for the problem in the literature, to our best knowledge.
The problem of computing a largest empty square in a fixed orientation
can be found in $O(n\log n)$ time
by searching the $L_\infty$ Voronoi diagram of the input points,
as also pointed out in early papers~\cite{as-facler-87,cdl-cler-86,nlh-merp-84}.

Our algorithmic results seem to have room for improvement
for several cases, in particular, for arbitrary orientation cases.
Note that the minimum-area square and (general) rectangular annulus problem
in a fixed orientation has a lower bound of $\Omega(n \log n)$
by our reductions shown in Lemmas~\ref{lem:A-S-f_W-S-f}
and~\ref{lem:A-R-f-LER}.
In particular, improving the algorithms for the minimum-area rectangular problem
automatically improves those for the largest empty rectangle problem.
For the problems in arbitrary orientation,
there is no known nontrivial lower bound, other than $\Omega(n \log n)$
or $\Omega(n)$
which follows directly from that for the fixed-orientation problem.

Our algorithms for square and rectangular annuli in arbitrary orientation
are heavily dependent on the maximal empty rectangles $n$ points.
It would be interesting to ask a different approach that avoids
computing all maximal empty rectangles, which already takes $O(n^3)$ time.
In particular, to compute a minimum-area square annulus or a largest empty square,
it is sufficient to check all \emph{maximal empty squares} among $P$,
instead of maximal empty rectangles.
Similarly to maximal empty rectangles,
a maximal empty square can be defined to be an empty square
such that its three sides contain a point in $P$.
For a fixed orientation $\theta \in [0,\pi/2)$, the number of
maximal empty squares is bounded by $O(n)$ since
each of them defines a vertex of the $L_\infty$ Voronoi diagram.
However, we are unaware of any result on the number of maximal empty squares
over all orientations.
Its upper bound is $O(n^3)$ from the number of maximal empty rectangles,
and it can be $\Omega(n^2)$ sometimes by a simple construction of $P$.
If the correct bound is subcubic, then
there would be hope to improve the $O(n^3)$ algorithms for
the minimum-area square annulus and the largest empty square problems
to subcubic-time algorithms.

%=====================end of document=================

%

{
\bibliographystyle{abbrv}
\bibliography{annuli}
}

%\newpage
\appendix
\section{Proof of Lemma~\ref{lem:delta}}
In this Appendix, we give a proof of Lemma~\ref{lem:delta}.
\newcounter{saveLemCounter}
\addtocounter{saveLemCounter}{\arabic{lemma}}
\setcounter{lemma}{\arabic{saveLemDelta}}
\begin{lemma}
 For any MER class $E$ with valid interval $J$ and any primary interval $I$,
 the function $\delta_E(\theta)$ over $\theta \in I\cap J$
 is piecewise base-$0$ frequency-$1$ sinusoidal with $O(1)$ breakpoints.
\end{lemma}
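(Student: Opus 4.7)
The plan is to fix a MER class $E = (p_\mt, p_\mb, p_\ml, p_\mr; J)$ and a primary interval $I$, and analyze $\delta_E$ on $I \cap J$. Throughout this sub-interval, the eight points $\extr_\mt, \extr_\mb, \extr_\ml, \extr_\mr, p_\mt, p_\mb, p_\ml, p_\mr$ remain fixed, so every scalar quantity appearing below is an explicit combination of $w_\theta(\cdot,\cdot)$ and $w_{\theta+\pi/2}(\cdot,\cdot)$ evaluated between pairs of these eight points, each of which is base-$0$ frequency-$1$ sinusoidal in $\theta$.

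First I would pin down the orientation of $C_\theta$. By definition, $C_\theta$ is $\theta$-aligned when $w_{\theta+\pi/2}(\extr_\mt, \extr_\mb) \ge w_\theta(\extr_\ml, \extr_\mr)$ and $(\theta+\pi/2)$-aligned otherwise. The two sides are base-$0$ frequency-$1$ sinusoidal, so by Observation~\ref{obs:sinusoidal_sum} they agree at most once in $I \cap J$, splitting it into at most two sub-intervals on which the orientation of $C_\theta$ is fixed. By symmetry, I would handle the $\theta$-aligned case and the other case identically.

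Next I would parametrize centers by $c = c(t,\theta)$ with signed arclength $t \in [t_1(\theta), t_2(\theta)]$ along $C_\theta$; the endpoints $t_1(\theta)$ and $t_2(\theta)$ are the $t$-coordinates of the extreme smallest enclosing squares and are base-$0$ frequency-$1$ sinusoidal in $\theta$. For fixed $\theta$, the four terms in the minimum defining $\delta_E(\theta,c)$ decompose cleanly: $w_{\theta+\pi/2}(p_\mt,c)$ and $w_{\theta+\pi/2}(p_\mb,c)$ do not depend on $t$, while $w_\theta(p_\ml,c)$ and $w_\theta(p_\mr,c)$ are affine in $t$ with slopes $+1$ and $-1$ respectively. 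Their pointwise sum equals $w_\theta(p_\ml, p_\mr)$, independent of $t$, so the minimum of these two affine functions is a tent in $t$ peaking at height $\tfrac12 w_\theta(p_\ml, p_\mr)$ at a value $t^*(\theta)$ that is itself base-$0$ frequency-$1$ sinusoidal in $\theta$. The maximum over $t$ of the four-term minimum is therefore attained either at $t^*(\theta)$, at an endpoint $t_1(\theta)$ or $t_2(\theta)$, or at one of the two $t$-values at which $c(t,\theta)$ meets the left or right side of $E_\theta$ (these crossing $t$-values being again base-$0$ frequency-$1$ sinusoidal in $\theta$). Which candidate wins is determined by $O(1)$ inequalities among such sinusoidal expressions.

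Finally, each switch between two winning candidates, as well as the transition between $C_\theta \cap E_\theta$ being empty (where $\delta_E(\theta) = 0$) or nonempty, is witnessed by an equality between two base-$0$ frequency-$1$ sinusoidal functions of $\theta$, so by Observation~\ref{obs:sinusoidal_sum} it occurs at most once in $I \cap J$. Thus $I \cap J$ subdivides into $O(1)$ pieces, and on each piece $\delta_E(\theta)$ equals the minimum of $O(1)$ base-$0$ frequency-$1$ sinusoidal functions of $\theta$; one last application of Observation~\ref{obs:sinusoidal_sum} shows that this minimum is itself piecewise base-$0$ frequency-$1$ sinusoidal with $O(1)$ breakpoints, proving the lemma. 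The main obstacle will be the case analysis for where the maximum over $t$ is actually attained — in particular the boundary behavior caused by the constraint $c \in E_\theta$ and by the orientation change of $C_\theta$ — but in every case the resulting expression is a min/max of $O(1)$ base-$0$ frequency-$1$ sinusoidal functions, so no higher-frequency component is ever introduced and the breakpoint count stays $O(1)$.
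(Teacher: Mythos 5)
Your proposal is correct and follows essentially the same route as the paper's Appendix A proof: fix the orientation of $C_\theta$ on $O(1)$ sub-intervals, observe that for each $\theta$ the map $c \mapsto \delta_E(\theta,c)$ along $C_\theta$ is a truncated tent with slopes $1,0,-1$, so the maximum is attained at one of $O(1)$ candidate centers (your $t^*$, $t_1$, $t_2$ and the $E_\theta$-crossings are the paper's $c_\cmid$, $c_\clend$, $c_\crend$ and the empty/boundary cases), each yielding a base-$0$ frequency-$1$ sinusoidal value, with candidate switches controlled by Observation~\ref{obs:sinusoidal_sum}. The only difference is presentational: the paper names the three candidate functions $\rho_\clend$, $\rho_\crend$, $\rho_\cmid$ and spells out the five-way case analysis that you defer to "$O(1)$ inequalities among sinusoidal expressions."
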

\setcounter{lemma}{\arabic{saveLemCounter}}

We first recall some definitions:
$R_\theta$ is the smallest $\theta$-aligned enclosing rectangle for $P$;
$C_\theta$ is the set of centers of all smallest $\theta$-aligned enclosing
squares for $P$;
$d(\theta) = \max\{ w_{\theta+\pi/2}(\extr_\mt(\theta), \extr_\mb(\theta)),$
 $w_\theta(\extr_\ml(\theta), \extr_\mr(\theta))\}$
is the side length of a smallest $\theta$-aligned enclosing square for $P$;
For each MER class $E = (p_\mt, p_\mb, p_\ml, p_\mr; J)$,
and for any $\theta \in [0, \pi/2)$ and any $c \in C_\theta$
\[ \delta_E(\theta, c) =
    \begin{cases}
      \min\{w_{\theta+\pi/2}(p_\mt, c), w_{\theta+\pi/2}(p_\mb, c),
       w_{\theta}(p_\ml, c), w_{\theta}(p_\mr, c)\}
          & \theta \in J, c \in C_\theta\cap E_\theta\\
      0 & \text{otherwise}
    \end{cases}
\]
is the side length of the largest $\theta$-aligned square centered at $c$
contained in $E_\theta$ if $\theta \in J$ and $c \in C_\theta$,
or zero, otherwise;
lastly, $\delta_E(\theta) = \max_{c\in C_\theta} \delta_E(\theta, c)$.
%$R_\theta$ is the smallest $\theta$-aligned enclosing rectangle for $P$,
%$C_\theta$ is the set of centers of all smallest $\theta$-aligned enclosing
%squares for $P$.
%$d(\theta) = \max\{ w_{\theta+\pi/2}(\extr_\mt(\theta), \extr_\mb(\theta)),$
% $w_\theta(\extr_\ml(\theta), \extr_\mr(\theta))\}$
%is the side length of a smallest $\theta$-aligned enclosing square for $P$.
%For any MER class $E = (p_\mt, p_\mb, p_\ml, p_\mr; J)$,
%and for $\theta \in [0, \pi/2)$ and $c \in C_\theta$
%\[ \delta_E(\theta, c) =
%    \begin{cases}
%      \min\{w_{\theta+\pi/2}(p_\mt, c), w_{\theta+\pi/2}(p_\mb, c),
%       w_{\theta}(p_\ml, c), w_{\theta}(p_\mr, c)\}
%          & \text{if $\theta \in J$ and $c \in C_\theta\cap E_\theta$}\\
%      0 & \text{otherwise}
%    \end{cases}
%\]
%is the side length of the largest $\theta$-aligned square centered at $c$
%contained in $E_\theta$ if $\theta \in J$ and $c \in C_\theta$,
%or zero, otherwise.
%Lastly, $\delta_E(\theta) = \max_{c\in C_\theta} \delta_E(\theta, c)$.

Here, for convenience, we refine primary intervals as follows.
Throughout this proof, a \emph{primary interval} is redefined to be
a maximal interval $I \subset [0, \pi/2)$ such that
the four extreme points are fixed,
so $\extr_\mt = \extr_\mt(\theta)$, $\extr_\mb = \extr_\mb(\theta)$,
$\extr_\ml = \extr_\ml(\theta)$, and $\extr_\mr = \extr_\mr(\theta)$
over all $\theta \in I$,
and further there is no $\theta \in I$ with
$w_{\theta+\pi/2}(\extr_\mt, \extr_\mb) = w_\theta(\extr_\ml, \extr_\mr)$.
The second condition keeps the function $d(\theta)$ to be in a simpler form
in a primary interval:
$d(\theta) = w_{\theta+\pi/2}(\extr_\mt, \extr_\mb)$
or $d(\theta) = w_\theta(\extr_\ml, \extr_\mr)$.
For a fixed tuple of $(\extr_\mt, \extr_\mb, \extr_\ml, \extr_\mr)$,
there is at most one $\theta$ such that
$w_{\theta+\pi/2}(\extr_\mt, \extr_\mb) = w_\theta(\extr_\ml, \extr_\mr)$
by Observation~\ref{obs:sinusoidal_sum}.
Hence, in order to prove the lemma,
it suffices to consider each of these refined primary interval.

Now, we give a proof for the lemma.
Consider any MER class $E = (p_\mt, p_\mb, p_\ml, p_\mr; J)$
and a (refined) primary interval $I$.
If $I \cap J = \emptyset$, then the lemma follows trivially,
so suppose in the following that $I \cap J \neq \emptyset$.
Let $\extr_\mt = \extr_\mt(\theta)$,
$\extr_\mb = \extr_\mb(\theta)$, $\extr_\ml = \extr_\ml(\theta)$,
and $\extr_\mr = \extr_\mr(\theta)$ for any $\theta \in I$.
Without loss of generality, we assume that
$w_{\theta+\pi/2}(\extr_\mt, \extr_\mb) \geq w_\theta(\extr_\ml, \extr_\mr)$
for any $\theta \in I$,
so $d(\theta) = w_{\theta+\pi/2}(\extr_\mt, \extr_\mb)$.
Then, $C_\theta$ forms a $\theta$-aligned line segment
whose midpoint is located at the center of $R_\theta$,
that is, the intersection of two diagonals of $R_\theta$,
and whose length is $|C_\theta| = w_{\theta+\pi/2}(\extr_\mt, \extr_\mb) - w_\theta(\extr_\ml, \extr_\mr)$.

\begin{figure}[tb]
\begin{center}
\includegraphics[width=.7\textwidth]{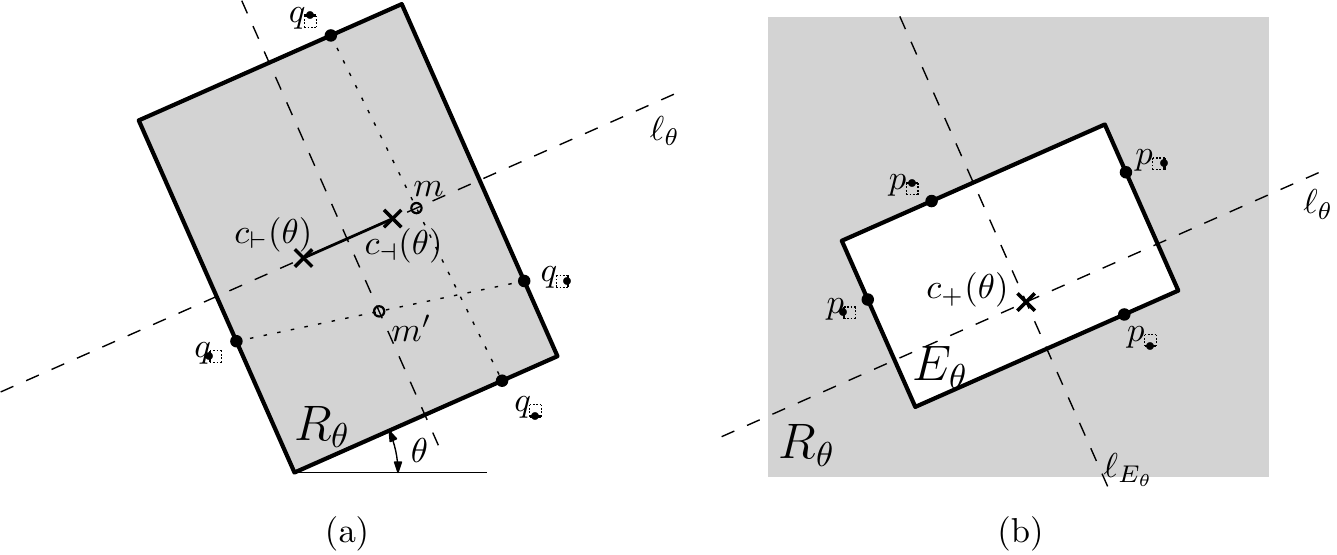}
\end{center}
%\vspace{-5mm}
\caption{ How (a) $c_\clend(\theta)$ and $c_\crend(\theta)$, and (b) $c_\cmid(\theta)$ are determined. $\ell_{E_\theta}$ is the $(\theta+\pi/2)$-aligned line halving $E_\theta$.}
% \vspace{-5mm}
\label{fig:m_cl_cr}
\end{figure}

Let $\ell_\theta$ be the line extending $C_\theta$.
We let $m$ be the midpoint of line segment $\seg{\extr_\mt \extr_\mb}$
and $m'$ be the midpoint of line segment $\seg{\extr_\ml \extr_\mr}$.
Note that the $\theta$-aligned line through $m$ horizontally halves $R_\theta$
and the $(\theta+\pi/2)$-aligned line through $m'$ vertically halves $R_\theta$.
Observe that $\ell_\theta$ is the $\theta$-aligned line through $m$
for any $\theta \in I$,
so $\ell_\theta$ rotates around $m$ as $\theta \in I$ continuously increases.
Also, we denote by $c_\clend(\theta)$ and $c_\crend(\theta)$
the left and right endpoints of $C_\theta$ for $\theta \in I$,
and let $c_\cmid(\theta) \in \ell_\theta$ be the intersection point
between $\ell_\theta$ and the $(\theta+\pi/2)$-aligned line
that halves $R_\theta$.
See \figurename~\ref{fig:m_cl_cr}.

We are interested in three empty squares
centered at $c_\clend(\theta)$, $c_\crend(\theta)$, and $c_\cmid(\theta)$,
respectively,
that are contained in $E_\theta$.
Define
\[
\rho_\clend(\theta)  := \delta_E(\theta, c_\clend(\theta)), \quad
\rho_\crend(\theta)  := \delta_E(\theta, c_\crend(\theta)), \quad \text{and} \quad
\rho_\cmid(\theta) := \delta_E(\theta, c_\cmid(\theta))
\]
to be the side lengths of the largest $\theta$-aligned squares
contained in $E_\theta$ centered at
$c_\clend(\theta)$, $c_\crend(\theta)$, and $c_\cmid(\theta)$,
respectively.
We claim that these three functions over $\theta \in I \cap J$
are all piecewise frequency-$1$ base-$0$ sinusoidal.
For the purpose, we need more observations.
For any $p\in \Plane$, let $\lambda_\theta(p)$ be the signed distance
from $m$ to the orthogonal projection of $p$ onto $\ell_\theta$
so that it is positive if $p$ is to the right of $m$
or negative if $p$ is to the left of $m$.
Specifically, we have
 \[ \lambda_\theta(p) = |\seg{pm}|\cdot \cos(\theta_{pm} - \theta), \]
where $\theta_{pm} \in [0,\pi)$ is the orientation of segment $\seg{pm}$.
Note that $w_\theta(p, m) = |\lambda_\theta(p)|$.

\begin{figure}[tb]
\begin{center}
\includegraphics[width=.6\textwidth]{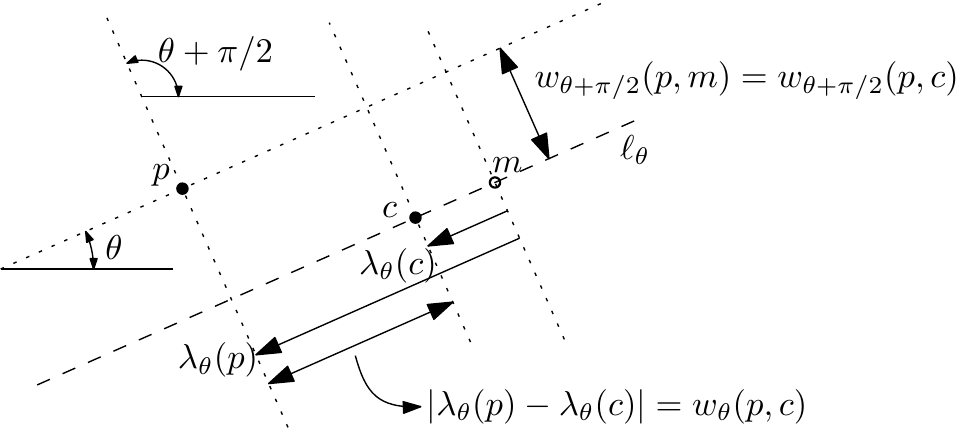}
\end{center}
%\vspace{-5mm}
\caption{Illustration to the proof of Observation~\ref{obs:lambda}.}
% \vspace{-5mm}
\label{fig:obs_lambda}
\end{figure}

We then observe the following.
\begin{observation} \label{obs:lambda}
 For any $\theta \in I \cap J$,
  any $c \in \ell_\theta$ and any $p\in \Plane$, it holds that
 \begin{itemize} \denseitems
  \item $w_{\theta+\pi/2}(p, c) = w_{\theta+\pi/2}(p, m)$, and
  \item $w_\theta(p, c) = |\lambda_\theta(p) - \lambda_\theta(c)|$.
 \end{itemize}
\end{observation}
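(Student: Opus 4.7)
The plan is to unpack the definitions of $w_\theta$, $w_{\theta+\pi/2}$, and $\lambda_\theta$ purely in terms of orthogonal projections, and then exploit the fact that $c \in \ell_\theta$ together with the fact that $\ell_\theta$ is the $\theta$-aligned line passing through $m$. Both equalities are really just kinematic statements about projections onto parallel and perpendicular directions.

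For the first equality $w_{\theta+\pi/2}(p,c) = w_{\theta+\pi/2}(p,m)$, I would note that $w_{\theta+\pi/2}(x,y)$ by definition equals the distance between the orthogonal projections of $x$ and $y$ onto any fixed $(\theta+\pi/2)$-aligned line. Since $\ell_\theta$ is $\theta$-aligned and hence perpendicular to every $(\theta+\pi/2)$-aligned line, all points of $\ell_\theta$ project to a single point on such a line. In particular, $c$ and $m$ both lie on $\ell_\theta$, so they share the same projection, which immediately gives the claim.

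For the second equality, I would parametrize $\ell_\theta$ by signed arc-length from $m$ along the $\theta$-direction. By the very definition of $\lambda_\theta(p)$, the orthogonal projection of $p$ onto $\ell_\theta$ sits at signed distance $\lambda_\theta(p)$ from $m$. Since $c\in\ell_\theta$, its projection onto $\ell_\theta$ is $c$ itself, at signed distance $\lambda_\theta(c)$. Because $\ell_\theta$ itself is a $\theta$-aligned line, I may use $\ell_\theta$ as the reference $\theta$-aligned line in the definition of $w_\theta$; the distance between the two projections is then $|\lambda_\theta(p) - \lambda_\theta(c)|$, as asserted.

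I do not expect any real obstacle; the observation is essentially a translation of two geometric facts (projection to a perpendicular direction is constant along a line, and signed distances along a line recover Euclidean distances) into the notation introduced in the paper. The only point requiring care is a brief sign check: one must confirm that the convention used to define $\lambda_\theta$ (positive to the right of $m$) is consistent with the directed $\theta$-aligned axis used implicitly in the definition of $w_\theta$, so that the absolute difference formula is sign-agnostic and remains valid regardless of whether $c$ and the projection of $p$ lie on the same side of $m$ or on opposite sides.
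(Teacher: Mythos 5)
Your proposal is correct and follows essentially the same route as the paper: the first equality because $c$ and $m$ project to the same point on any $(\theta+\pi/2)$-aligned line, and the second by reading off $w_\theta(p,c)$ as the difference of signed distances to $m$ along $\ell_\theta$. Your explicit remark about the sign convention is a minor extra care the paper leaves implicit, but it does not change the argument.
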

\begin{proof}
See \figurename~\ref{fig:obs_lambda} for an illustration.
Since $c \in \ell_\theta$,
the distance between the orthogonal projections of $c$ and any $p\in \Plane$ onto
a $(\theta+\pi/2)$-aligned line is equal to that between
the projections of $m$ and $p$.
This proves the first equation.

The value of $w_\theta(p, c)$ is equal to the distance between
the orthogonal projections of $p$ and $c$ onto $\ell_\theta$, and hence
is represented by the difference of their relative distances to $m$,
which is $|\lambda_\theta(p) - \lambda_\theta(c)|$.
\end{proof}

Applying the above observation to the case where $c = c_\clend(\theta)$
or $c=c_\crend(\theta)$, we obtain the following.
\begin{observation} \label{obs:lambda_c_l}
 For any $\theta \in I\cap J$ and any $p\in \Plane$,
 it holds that
 \begin{itemize} \denseitems
  \item $w_{\theta+\pi/2}(p, c_\clend(\theta)) = w_{\theta+\pi/2}(p, c_\crend(\theta)) = w_{\theta+\pi/2}(p, m)$,
 % \item $w_{\theta+\pi/2}(p, c_\crend(\theta)) = w_{\theta+\pi/2}(p, m)$,
  \item $w_\theta(p, c_\clend(\theta)) =
   |\lambda_\theta(p) - (\lambda_\theta(m')-\frac{1}{2}(w_{\theta+\pi/2}(\extr_\mt, \extr_\mb) - w_\theta(\extr_\ml, \extr_\mr)))|$, and
  \item $w_\theta(p, c_\crend(\theta)) =
   |\lambda_\theta(p) - (\lambda_\theta(m')+\frac{1}{2}(w_{\theta+\pi/2}(\extr_\mt, \extr_\mb) - w_\theta(\extr_\ml, \extr_\mr)))|$.
 \end{itemize}
\end{observation}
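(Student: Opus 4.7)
The plan is to derive all three identities as direct corollaries of Observation \ref{obs:lambda}, once the signed positions of $c_\clend(\theta)$ and $c_\crend(\theta)$ along $\ell_\theta$ are pinned down. Since $C_\theta$ lies on $\ell_\theta$ by construction, both $c_\clend(\theta)$ and $c_\crend(\theta)$ belong to $\ell_\theta$, so the first bullet of Observation \ref{obs:lambda} applied first with $c = c_\clend(\theta)$ and then with $c = c_\crend(\theta)$ immediately yields the chain
$w_{\theta+\pi/2}(p, c_\clend(\theta)) = w_{\theta+\pi/2}(p, m) = w_{\theta+\pi/2}(p, c_\crend(\theta))$,
which is the first claim.

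For the remaining two claims, I will identify the $\lambda_\theta$-coordinates of $c_\clend(\theta)$ and $c_\crend(\theta)$ explicitly. First, recall that the midpoint of $C_\theta$ coincides with the center of $R_\theta$, i.e., with $c_\cmid(\theta)$, which is defined as the intersection of $\ell_\theta$ with the $(\theta+\pi/2)$-aligned line through $m'$; hence $c_\cmid(\theta)$ is the orthogonal projection of $m'$ onto $\ell_\theta$, so by the definition of $\lambda_\theta$ we obtain $\lambda_\theta(c_\cmid(\theta)) = \lambda_\theta(m')$. Since $C_\theta$ is a $\theta$-aligned segment of length $|C_\theta| = w_{\theta+\pi/2}(\extr_\mt,\extr_\mb) - w_\theta(\extr_\ml,\extr_\mr)$ with midpoint $c_\cmid(\theta)$, and since $c_\clend(\theta)$ (resp.\ $c_\crend(\theta)$) is its left (resp.\ right) endpoint in orientation $\theta$, the endpoints satisfy
$\lambda_\theta(c_\clend(\theta)) = \lambda_\theta(m') - \tfrac{1}{2}|C_\theta|$
and
$\lambda_\theta(c_\crend(\theta)) = \lambda_\theta(m') + \tfrac{1}{2}|C_\theta|$.

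Substituting these into the second bullet of Observation \ref{obs:lambda}, applied with $c = c_\clend(\theta)$ and then with $c = c_\crend(\theta)$, produces the second and third claimed identities verbatim. The only bookkeeping worth attention is the sign of $\tfrac{1}{2}|C_\theta|$: under the WLOG assumption $w_{\theta+\pi/2}(\extr_\mt,\extr_\mb) \geq w_\theta(\extr_\ml,\extr_\mr)$ made just above in the proof of Lemma \ref{lem:delta}, the segment $C_\theta$ is genuinely $\theta$-aligned, so ``left'' in orientation $\theta$ corresponds to the smaller $\lambda_\theta$-coordinate, which matches the signs in the stated formulas. I therefore expect no real obstacle; the statement reduces to the projection identities of Observation \ref{obs:lambda} combined with this elementary placement of the two endpoints of $C_\theta$ relative to $m'$.
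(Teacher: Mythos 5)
Your proof is correct and follows essentially the same route as the paper's: the first identity from Observation~\ref{obs:lambda} directly, and the other two by locating the endpoints of $C_\theta$ at signed distance $\lambda_\theta(m') \mp \tfrac{1}{2}|C_\theta|$ from $m$ and substituting into the second bullet of Observation~\ref{obs:lambda}. The extra remark on the sign convention under the WLOG assumption is a welcome, if minor, clarification.
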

\begin{proof}
The first equation directly follows from Observation~\ref{obs:lambda}
when $c = c_\clend(\theta)$ and $c = c_\crend(\theta)$.

To see the last two, recall that $C_\theta \subset \ell_\theta$ is a line segment
of length $w_{\theta+\pi/2}(\extr_\mt, \extr_\mb) - w_\theta(\extr_\ml, \extr_\mr)$,
and its midpoint is the intersection of $\ell_\theta$
and the $(\theta+\pi/2)$-aligned line through $m'$.
Thus, we have
\[
 \lambda_\theta(c_\clend(\theta)) = \lambda_\theta(m') - |C_\theta|/2,
 \quad \text{and} \quad
 \lambda_\theta(c_\crend(\theta)) = \lambda_\theta(m') + |C_\theta|/2,
\]
where $|C_\theta|=w_{\theta+\pi/2}(\extr_\mt, \extr_\mb) - w_\theta(\extr_\ml, \extr_\mr)$ is the length of segment $C_\theta$.
Plugging these into Observation~\ref{obs:lambda}, we obtain the last two equations.
\end{proof}

Similarly, for the case $c = c_\cmid(\theta)$, we observe the following.
\begin{observation} \label{obs:lambda_c_mid}
 For any $\theta \in I\cap J$ and any $p\in \Plane$,
 it holds that
 \begin{itemize}\denseitems
  \item $w_{\theta+\pi/2}(p, c_\cmid(\theta)) = w_{\theta+\pi/2}(p, m)$, and
  \item $w_\theta(p, c_\cmid(\theta)) =
   |\lambda_\theta(p) - \frac{1}{2}(\lambda_\theta(p_\mr)+\lambda_\theta(p_\ml))|$.
 \end{itemize}
\end{observation}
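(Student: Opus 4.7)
The plan is to derive both equations by specializing Observation~\ref{obs:lambda} to the point $c = c_\cmid(\theta)$ and then computing its $\lambda_\theta$-coordinate from the geometric definition of $c_\cmid(\theta)$.

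For the first equation, I would note that $c_\cmid(\theta)$ lies on $\ell_\theta$ by construction, so plugging $c = c_\cmid(\theta)$ into the first bullet of Observation~\ref{obs:lambda} immediately yields $w_{\theta+\pi/2}(p, c_\cmid(\theta)) = w_{\theta+\pi/2}(p, m)$. For the second equation, the second bullet of Observation~\ref{obs:lambda} reduces the task to establishing the identity
\[ \lambda_\theta(c_\cmid(\theta)) = \tfrac{1}{2}\bigl(\lambda_\theta(p_\ml) + \lambda_\theta(p_\mr)\bigr). \]
To verify this, I would invoke the tuple representation of the MER class $E$: the left and right sides of $E_\theta$ are precisely the $(\theta+\pi/2)$-aligned lines passing through $p_\ml$ and $p_\mr$, respectively, whether these points are interior witnesses of the MER or coincide with the extreme points $\extr_\ml, \extr_\mr$. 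Consequently, the $(\theta+\pi/2)$-aligned line $\ell_{E_\theta}$ that horizontally bisects $E_\theta$ (as depicted in Figure~\ref{fig:m_cl_cr}(b)) projects onto $\ell_\theta$ at the midpoint between the projections of $p_\ml$ and $p_\mr$. Since $\lambda_\theta$ is the signed coordinate along $\ell_\theta$ measured from $m$ and is therefore affine in the projected coordinate, this midpoint property translates word-for-word into the displayed identity, and substituting it back into Observation~\ref{obs:lambda} produces the claimed formula for $w_\theta(p, c_\cmid(\theta))$.

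The argument is, in essence, a direct substitution into Observation~\ref{obs:lambda}; the only step that requires a bit of care is the conversion of the geometric bisector/midpoint statement into the $\lambda_\theta$-coordinate identity above, which I handle by appealing to the linearity of orthogonal projection along $\ell_\theta$. Beyond this, no serious obstacle is anticipated.
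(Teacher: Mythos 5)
Your proof is correct and takes essentially the same route as the paper's: both specialize Observation~\ref{obs:lambda} to $c = c_\cmid(\theta)$ and reduce everything to the identity $\lambda_\theta(c_\cmid(\theta)) = \tfrac{1}{2}\bigl(\lambda_\theta(p_\ml)+\lambda_\theta(p_\mr)\bigr)$. You also correctly read $c_\cmid(\theta)$ as the intersection of $\ell_\theta$ with the $(\theta+\pi/2)$-aligned line bisecting $E_\theta$ (as in \figurename~\ref{fig:m_cl_cr}(b)), which is exactly what makes that midpoint identity hold.
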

\begin{proof}
This follows from Observation~\ref{obs:lambda}
and the fact that
$\lambda_\theta(c_\cmid(\theta)) = \frac{1}{2}(\lambda_\theta(p_\mr)+\lambda_\theta(p_\ml))$.
\end{proof}

From Observations~\ref{obs:lambda_c_l} and~\ref{obs:lambda_c_mid},
it is easy to see that the three functions $\rho_\clend$, $\rho_\crend$, and $\rho_\cmid$
behave nicely.
\begin{lemma} \label{lem:rho}
 The three functions $\rho_\clend(\theta)$, $\rho_\crend(\theta)$, and $\rho_\cmid(\theta)$ over $\theta \in I \cap J$
 are piecewise frequency-$1$ base-$0$ sinusoidal with $O(1)$ breakpoints.
\end{lemma}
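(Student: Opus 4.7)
The plan is to unpack each of $\rho_\clend$, $\rho_\crend$, and $\rho_\cmid$ through its defining min of four distances, substitute the simpler expressions provided by Observations~\ref{obs:lambda_c_l} and~\ref{obs:lambda_c_mid}, verify that every resulting term is piecewise base-$0$ frequency-$1$ sinusoidal with $O(1)$ breakpoints, and then invoke Observation~\ref{obs:sinusoidal_sum} to conclude that the minimum of any bounded number of such functions has $O(1)$ breakpoints and still belongs to this class.

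First I would handle $\rho_\clend$. Writing out $\delta_E(\theta, c_\clend(\theta))$ gives the minimum of four terms. By Observation~\ref{obs:lambda_c_l}, the two vertical terms collapse to $w_{\theta+\pi/2}(p_\mt, m)$ and $w_{\theta+\pi/2}(p_\mb, m)$, each of which is piecewise base-$0$ frequency-$1$ sinusoidal with at most one breakpoint (one absolute-value sign from the definition of $w_\theta$). The two horizontal terms are absolute values of expressions of the form
\[
  \lambda_\theta(p_\ml) - \lambda_\theta(m') + \tfrac{1}{2}\bigl(w_{\theta+\pi/2}(\extr_\mt,\extr_\mb) - w_\theta(\extr_\ml,\extr_\mr)\bigr),
\]
and the analogous one with $p_\mr$. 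Here each $\lambda_\theta(\cdot)=|\seg{\cdot m}|\cos(\theta_{\cdot m}-\theta)$ is a globally defined base-$0$ frequency-$1$ sinusoidal function (via $\cos\alpha=\sin(\alpha+\pi/2)$), and each $w$-term is piecewise base-$0$ frequency-$1$ sinusoidal with at most one breakpoint. By Observation~\ref{obs:sinusoidal_sum}, sums and differences of such functions stay piecewise base-$0$ frequency-$1$ sinusoidal, and the total number of breakpoints remains $O(1)$; taking the outer absolute value adds at most $O(1)$ more breakpoints. So each of the four terms is a piecewise base-$0$ frequency-$1$ sinusoidal function of $\theta \in I\cap J$ with $O(1)$ breakpoints.

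To conclude for $\rho_\clend$, I invoke the second part of Observation~\ref{obs:sinusoidal_sum}: any two base-$0$ frequency-$1$ sinusoidal curves cross at most once in $[0,\pi)$. Hence the lower envelope of the four piecewise sinusoidal functions, each consisting of $O(1)$ pieces, still consists of $O(1)$ pieces, each of them base-$0$ frequency-$1$ sinusoidal. The argument for $\rho_\crend$ is literally identical, only with the sign of $|C_\theta|/2$ flipped. For $\rho_\cmid$, I apply Observation~\ref{obs:lambda_c_mid} instead; the resulting horizontal terms simplify to $\tfrac{1}{2}|\lambda_\theta(p_\ml)-\lambda_\theta(p_\mr)|$ and $\tfrac{1}{2}|\lambda_\theta(p_\mr)-\lambda_\theta(p_\ml)|$, which are the same base-$0$ frequency-$1$ quantity, so the same reasoning applies.

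The main thing to be careful about is bookkeeping: verifying at each step that one stays inside the class of base-$0$ frequency-$1$ sinusoidal functions (so that Observation~\ref{obs:sinusoidal_sum} applies both for combining terms and for bounding crossings), and accounting for the $O(1)$ breakpoints introduced by the absolute values defining $w_\theta$, by the absolute values appearing in the distances to $c_\clend$, $c_\crend$, $c_\cmid$, and by the min of four pieces. No genuine obstacle is expected beyond this; once the substitutions from Observations~\ref{obs:lambda_c_l} and~\ref{obs:lambda_c_mid} are in hand, the proof reduces to repeated application of Observation~\ref{obs:sinusoidal_sum}.
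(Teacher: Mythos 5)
Your proposal is correct and follows essentially the same route as the paper: substitute the expressions from Observations~\ref{obs:lambda_c_l} and~\ref{obs:lambda_c_mid} into the defining minimum of four terms, check that each term is piecewise base-$0$ frequency-$1$ sinusoidal, and bound the breakpoints via Observation~\ref{obs:sinusoidal_sum}. Your justification of the $O(1)$ breakpoint count (via the at-most-one-crossing property for the lower envelope of the four pieces) is slightly more explicit than the paper's, which simply notes that everything depends on constantly many fixed points, but this is a presentational difference, not a different argument.
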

\begin{proof}
First, consider the function $\rho_\clend(\theta) = \delta_E(\theta, c_\clend(\theta))$.
By definition, we have
\[ \rho_\clend(\theta) = \min\{w_{\theta+\pi/2}(p_\mt, c_\clend(\theta)),
      w_{\theta+\pi/2}(p_\mb, c_\clend(\theta)),
       w_{\theta}(p_\ml, c_\clend(\theta)), w_{\theta}(p_\mr, c_\clend(\theta))\}, \]
if $c_\clend(\theta) \in E_\theta$,
and $\rho_\clend(\theta) = 0$, otherwise.
In case of $c_\clend(\theta) \in E_\theta$,
Observation~\ref{obs:lambda_c_l} implies that
\begin{align*}
 \rho_\clend(\theta) = \min\{w_{\theta+\pi/2}(p_\mt, m), & ~ w_{\theta+\pi/2}(p_\mb, m), \\
      & |\lambda_\theta(p_\ml) - (\lambda_\theta(m')-|C_\theta|/2)|, ~
      |\lambda_\theta(p_\mr) - (\lambda_\theta(m')-|C_\theta|/2)|\},
\end{align*}
where $|C_\theta|=w_{\theta+\pi/2}(\extr_\mt, \extr_\mb) - w_\theta(\extr_\ml, \extr_\mr)$ is the length of $C_\theta$.
Now, we check the four terms in the $\min\{ \cdot \}$.
The first two are obviously piecewise frequency-$1$ base-$0$ sinusoidal.
For the last two, observe that
$\lambda_\theta(p)$ for any fixed point $p\in \Plane$ is frequency-$1$ base-$0$
sinusoidal by definition,
and $|C_\theta|$ is also piecewise frequency-$1$ base-$0$ sinusoidal.
Thus, the last two terms are also piecewise frequency-$1$ base-$0$ sinusoidal
by Observation~\ref{obs:sinusoidal_sum}.
Hence, $\rho_\clend(\theta)$ is piecewise frequency-$1$ base-$0$ sinusoidal.
The number of breakpoints of $\rho_\clend$ is bounded by $O(1)$
since it depends only on a set of constantly many fixed points, namely,
$\{ \extr_\mt, \extr_\mb, \extr_\ml, \extr_\mr, p_\mt, p_\mb, p_\ml, p_\mr, m, m'\}$.
The proof for $\rho_\crend(\theta)$ is almost identical to above.

Now, consider the function $\rho_\cmid(\theta) = \delta_E(\theta, c_\cmid(\theta))$.
By definition, we have
\[ \rho_\cmid(\theta) = \min\{w_{\theta+\pi/2}(p_\mt, c_\cmid(\theta)),
      w_{\theta+\pi/2}(p_\mb, c_\cmid(\theta)),
       w_{\theta}(p_\ml, c_\cmid(\theta)), w_{\theta}(p_\mr, c_\cmid(\theta))\}, \]
if $c_\cmid(\theta) \in E_\theta \cap C_\theta$,
and $\rho_\cmid(\theta) = 0$, otherwise.
For the case where $c_\cmid(\theta) \in E_\theta \cap C_\theta$,
Observation~\ref{obs:lambda_c_mid} implies that
\[ \rho_\cmid(\theta) = \min\{w_{\theta+\pi/2}(p_\mt, m),w_{\theta+\pi/2}(p_\mb, m),
   \frac{1}{2}(\lambda_\theta(p_\mr)-\lambda_\theta(p_\ml)) \}. \]
By a similar argument as above,
$\rho_\cmid(\theta)$ is also piecewise frequency-$1$ base-$0$ sinusoidal
with $O(1)$ breakpoints.
\end{proof}

We then show that the maximum $\delta_E(\theta)$ of
$\delta_E(\theta, c)$ over $c\in E_\theta \cap C_\theta$,
for each $\theta \in I\cap J$,
is represented by the three functions $\rho_\clend$, $\rho_\crend$, and $\rho_\cmid$.
\begin{lemma} \label{lem:deltaG_rho}
 The function $\delta_E(\theta)$ over $\theta \in I\cap J$
 is represented as follows:
 \[
  \delta_E(\theta) = \begin{cases}
   0 & C_\theta \cap E_\theta = \emptyset \\
   \rho_\cmid(\theta)
    & c_\cmid(\theta) \in E_\theta \cap C_\theta \\
   \rho_\clend(\theta)
    & c_\cmid(\theta) \notin E_\theta \cap C_\theta,
      c_\clend(\theta) \in E_\theta, c_\crend(\theta) \notin E_\theta\\
   \rho_\crend(\theta)
    & c_\cmid(\theta) \notin E_\theta \cap C_\theta,
      c_\clend(\theta) \notin E_\theta, c_\crend(\theta) \in E_\theta\\
   \max\{\rho_\clend(\theta), \rho_\crend(\theta) \}
    & c_\cmid(\theta) \notin E_\theta \cap C_\theta, C_\theta \subset E_\theta
  \end{cases}.
 \]
\end{lemma}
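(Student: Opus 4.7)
The plan is to view $c \mapsto \delta_E(\theta, c)$ as a concave ``capped tent'' along the line $\ell_\theta$, reduce the computation of $\delta_E(\theta)$ to inspecting the peak and the endpoints of the feasible interval $C_\theta \cap E_\theta$, and then check each case.

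First, I would use Observation~\ref{obs:lambda} to simplify $\delta_E(\theta, c)$ on $\ell_\theta$. For any $c \in \ell_\theta$, the two ``vertical'' terms $w_{\theta+\pi/2}(p_\mt, c)$ and $w_{\theta+\pi/2}(p_\mb, c)$ equal $w_{\theta+\pi/2}(p_\mt, m)$ and $w_{\theta+\pi/2}(p_\mb, m)$, so they are constant in $c$; denote their minimum by $\alpha$. The two ``horizontal'' terms become $|\lambda_\theta(p_\ml) - \lambda_\theta(c)|$ and $|\lambda_\theta(p_\mr) - \lambda_\theta(c)|$. Since $p_\ml$ and $p_\mr$ bound $E_\theta$ horizontally and $c \in E_\theta$, we have $\lambda_\theta(p_\ml) \le \lambda_\theta(c) \le \lambda_\theta(p_\mr)$, so these two terms take the form $\lambda_\theta(c)-\lambda_\theta(p_\ml)$ and $\lambda_\theta(p_\mr)-\lambda_\theta(c)$. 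Their minimum is a concave tent in $\lambda_\theta(c)$ with peak at $\tfrac{1}{2}(\lambda_\theta(p_\ml)+\lambda_\theta(p_\mr)) = \lambda_\theta(c_\cmid)$ (by Observation~\ref{obs:lambda_c_mid}), vanishing when $c$ lies on the left or right side of $E_\theta$. Taking $\min$ with $\alpha$ preserves concavity, so $\delta_E(\theta, \cdot)$ is concave on the interval $C_\theta \cap E_\theta$.

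Next, the maximum of a concave function on an interval is attained either at the interior peak $c_\cmid$ or at an endpoint of the interval. The endpoints of $C_\theta \cap E_\theta$ are of two types: they come either from $\{c_\clend, c_\crend\}$ (endpoints of $C_\theta$), or from $C_\theta$ crossing a vertical side of $E_\theta$; in the latter case $\lambda_\theta(c)$ equals $\lambda_\theta(p_\ml)$ or $\lambda_\theta(p_\mr)$, forcing $\delta_E(\theta, c) = 0$. A single key observation organizes everything: whenever $\ell_\theta \cap E_\theta \neq \emptyset$, its midpoint is exactly $c_\cmid$, hence $c_\cmid \in E_\theta$. Consequently, $C_\theta \cap E_\theta \neq \emptyset$ implies $c_\cmid \in E_\theta$, so ``$c_\cmid \notin C_\theta \cap E_\theta$'' together with $C_\theta \cap E_\theta \neq \emptyset$ forces $c_\cmid \notin C_\theta$, which in turn makes the tent monotone on $C_\theta \cap E_\theta$.

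With these preparations the case analysis is mechanical. The first line of the formula is immediate from the definition of $\delta_E$. If $c_\cmid \in C_\theta \cap E_\theta$, the global peak is accessible and $\delta_E(\theta)=\rho_\cmid(\theta)$. If $c_\cmid \notin C_\theta \cap E_\theta$, the previous observation rules out the subcase $c_\clend, c_\crend \notin E_\theta$ with $C_\theta \cap E_\theta \neq \emptyset$: in that situation $C_\theta \supseteq \ell_\theta \cap E_\theta \ni c_\cmid$, contradicting $c_\cmid \notin C_\theta$. The three remaining subcases are the three listed in the lemma, and in each of them the only endpoints of $C_\theta \cap E_\theta$ that are not on $\partial E_\theta$ (where $\delta_E$ vanishes) are the ones among $\{c_\clend, c_\crend\}$ that lie inside $E_\theta$; by monotonicity of the tent, the maximum equals $\rho_\clend(\theta)$, $\rho_\crend(\theta)$, or $\max\{\rho_\clend(\theta),\rho_\crend(\theta)\}$ accordingly.

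I expect the main subtlety to be keeping track of which endpoint of $C_\theta \cap E_\theta$ is interior to $E_\theta$ versus on its boundary in each subcase, and correctly ruling out the ``missing'' subcase where neither $c_\clend$ nor $c_\crend$ lies in $E_\theta$; both points are handled uniformly by the one-line fact that $c_\cmid$ is the midpoint of $\ell_\theta \cap E_\theta$ whenever the latter is nonempty.
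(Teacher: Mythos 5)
Your proposal is correct and follows essentially the same route as the paper's proof: both view $c \mapsto \delta_E(\theta,c)$ on $\ell_\theta \cap E_\theta$ as a concave ``capped tent'' (two constant terms, two linear terms of slopes $\pm 1$) peaking at $c_\cmid(\theta)$, and then read off the maximum over $C_\theta \cap E_\theta$ by cases. Your explicit observation that $c_\cmid(\theta)$ is the midpoint of $\ell_\theta \cap E_\theta$ (so $C_\theta \cap E_\theta \neq \emptyset$ forces $c_\cmid(\theta) \in E_\theta$, hence $c_\cmid(\theta) \notin C_\theta$ in the remaining cases) is a nice touch: the paper uses this fact implicitly when it asserts $c_\cmid(\theta)\notin C_\theta$, and your version also cleanly rules out the would-be subcase with both $c_\clend(\theta)$ and $c_\crend(\theta)$ outside $E_\theta$.
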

\begin{proof}
Recall that $\delta_E(\theta)
= \max_{c\in C_\theta} \delta_E(\theta,c)$, and
$\delta_E(\theta, c) = \min\{w_{\theta+\pi/2}(p_\mt, c),$ $w_{\theta+\pi/2}(p_\mb, c),$ $w_\theta(p_\ml, c),$ $w_\theta(p_\mr, c)\}$ if
$c\in C_\theta\cap E_\theta$, or zero, otherwise.
For $c \in \ell_\theta \cap E_\theta$,
note that
\begin{itemize} \denseitems
 \item  $w_{\theta+\pi/2}(p_\mt, c) = w_{\theta+\pi/2}(p_\mt, m)$,
 \item $w_{\theta+\pi/2}(p_\mb, c) = w_{\theta+\pi/2}(p_\mb, m)$,
 \item $w_\theta(p_\ml, c) = \lambda_\theta(c) - \lambda_\theta(p_\ml)$, and
 \item $w_\theta(p_\mr, c) = \lambda_\theta(p_\mr) - \lambda_\theta(c)$,
\end{itemize}
by Observation~\ref{obs:lambda}.
Hence, for a fixed $\theta \in I \cap J$,
the first two terms are constant over all $c \in C_\theta \cap E_\theta$,
and the last two terms are linear functions of $\lambda_\theta(c)$
of slope $1$ and $-1$.
This implies that the function
$c \mapsto \min\{w_{\theta+\pi/2}(p_\mt, c), w_{\theta+\pi/2}(p_\mb, c), w_\theta(p_\ml, c), w_\theta(p_\mr, c)\}$ on $c \in \ell_\theta\cap E_\theta$
is piecewise linear with
three pieces whose slopes are $1$, $0$, and $-1$ in this order.
Observe that the maximum of this function over $c \in \ell_\theta\cap E_\theta$
is always attained at $c = c_\cmid(\theta)$
since
\[ w_\theta(p_\ml, c_\cmid(\theta)) = w_\theta(p_\mr, c_\cmid(\theta)) =
(\lambda_\theta(p_\mr) - \lambda_\theta(p_\ml))/2 \geq
 \min\{w_\theta(p_\ml, c), w_\theta(p_\mr, c)\} \]
for any $c \in \ell_\theta \cap E_\theta$.

Now, consider the function $\delta_E(\theta, c)$ and
its maximum $\delta_E(\theta)$ over $c \in C_\theta \cap E_\theta$.
We handle the five cases separately.
For the first case where $C_\theta \cap E_\theta = \emptyset$
it is trivial that $\delta_E(\theta) = 0$.

Second, assume that $c_\cmid(\theta) \in E_\theta \cap C_\theta$.
Then, by the above discussion, we have
$\delta_E(\theta) = \delta_E(\theta, c_\cmid(\theta)) = \rho_\cmid(\theta)$.

Next, suppose that $C_\theta \cap E_\theta \neq \emptyset$
and $c_\cmid(\theta) \notin E_\theta \cap C_\theta$.
Then, again by the above discussion, we have
either $\delta_E(\theta) = \delta_E(\theta, c_\clend(\theta))$
or $\delta_E(\theta) = \delta_E(\theta, c_\crend(\theta))$.
On the other hand, at least one of $c_\clend(\theta)$ or $c_\crend(\theta)$
must be contained in $E_\theta$,
since $C_\theta$ is a $\theta$-aligned line segment and
$c_\cmid(\theta) \notin C_\theta$.
Thus,
\begin{itemize} \denseitems
 \item if $c_\crend(\theta)\notin E_\theta$,
then $\delta_E(\theta) = \delta_E(\theta, c_\clend(\theta)) = \rho_\clend(\theta)$;
 \item if $c_\clend(\theta)\notin E(\theta)$,
then $\delta_E(\theta) = \delta_E(\theta, c_\crend(\theta)) = \rho_\crend(\theta)$;
 \item otherwise, if both $c_\clend(\theta)$ and $c_\crend(\theta)$ lie in $E_\theta$,
then we have
\[\delta_E(\theta) = \max\{\delta_E(\theta, c_\clend(\theta)), \delta_E(\theta, c_\crend(\theta)\} = \max\{\rho_\clend(\theta), \rho_\crend(\theta))\}.\]
\end{itemize}
This completes the proof of the lemma.
\end{proof}

Note that each of $\rho_\clend(\theta)$, $\rho_\crend(\theta)$, and $\rho_\cmid(\theta)$
are piecewise sinusoidal of frequency $1$ and base $0$
as observed in Lemma~\ref{lem:rho}.
Hence, by Lemma~\ref{lem:deltaG_rho},
we verify that $\delta_E(\theta)$ is piecewise sinusoidal of frequency $1$
and base $0$, too.
The number of breakpoints of $\delta_E(\theta)$ over $\theta \in I \cap J$
is also bounded by a constant as follows:

The function $\delta_E(\theta)$ has the five cases as described in
Lemma~\ref{lem:deltaG_rho}.
As $\theta$ continuously increases in $I \cap J$,
we turn into a different case from one whenever one of the following
\emph{events} occurs:
\begin{enumerate}[(i)] \denseitems
 \item When $c_\clend(\theta)$ or $c_\crend(\theta)$ lies
 on the left or right side of $E_\theta$,
 equivalently, when either one of the following becomes zero:
 $w_\theta(p_\ml, c_\clend(\theta))$,
 $w_\theta(p_\ml, c_\crend(\theta))$, $w_\theta(p_\mr, c_\clend(\theta))$, and
 $w_\theta(p_\mr, c_\crend(\theta))$.
 \item When the top or bottom side of $E_\theta$ lies on $\ell_\theta$,
 equivalently, when either $w_{\theta+\pi/2}(p_\mt, m) = 0$ or
 $w_{\theta+\pi/2}(p_\mb, m) = 0$.
 \item When either $c_\clend(\theta) = c_\cmid(\theta)$ or $c_\crend(\theta) = c_\cmid(\theta)$,
 equivalently,
 when either $w_\theta(p_\ml, c_\clend(\theta)) = w_\theta(p_\mr, c_\clend(\theta))$
 or $w_\theta(p_\ml, c_\crend(\theta)) = w_\theta(p_\mr, c_\crend(\theta))$.
\end{enumerate}
As described above, each event occurs when some sinusoidal function of
frequency $1$ and base $0$ becomes zero,
so there are a constantly many number of such events
by Observation~\ref{obs:sinusoidal_sum}.

This completes the proof of Lemma~\ref{lem:delta}.
\hfill\copy\ProofSym
\end{document}